\newtheorem{proposition}{Proposition}
\newtheorem{remark}{Remark}
\newtheorem{lemma}{Lemma}
\begin{document}
\title{Location Information Assisted Beamforming Design for Reconfigurable Intelligent Surface Aided Communication Systems}

\author{Zhe Xing,~\IEEEmembership{Graduate Student Member,~IEEE,}
        Rui Wang,~\IEEEmembership{Senior Member,~IEEE,}
        Xiaojun Yuan,~\IEEEmembership{Senior Member,~IEEE,}
        and Jun Wu,~\IEEEmembership{Senior Member,~IEEE}
        \vspace*{-16.8pt}
\thanks{

Z. Xing and R. Wang are with the College of Electronics and Information Engineering, Tongji University, Shanghai 201804, China. R. Wang is also with the Shanghai Institute of Intelligent Science and Technology, Tongji University, Shanghai 201804, China (e-mail: zxing@tongji.edu.cn; ruiwang@tongji.edu.cn).

X. Yuan is with the National Key Laboratory of Science and Technology on Communications, University of Electronic Science and Technology of China, Chengdu, 610000, China (e-mail: xjyuan@uestc.edu.cn).

J. Wu is with the School of Computer Science, Fudan University, Shanghai
200433, China (e-mail: wujun@fudan.edu.cn).
}
}


\markboth{}%
{\MakeLowercase{\textit{et al.}}: Location Information Assisted Beamforming Design for Reconfigurable Intelligent Surface Aided Communication Systems}

\maketitle

\maketitle

\begin{abstract}
In reconfigurable intelligent surface (RIS) aided millimeter-wave (mmWave) communication systems, in order to overcome the limitation of the conventional channel state information (CSI) acquisition techniques, this paper proposes a location information assisted beamforming design without the requirement of the conventional channel training process. First, we establish the geometrical relation between the channel model and the user location, based on which we derive an approximate CSI error bound based on the user location error by means of Taylor approximation, triangle and power mean inequalities, and semidefinite relaxation (SDR). Second, for combating the uncertainty of the location error, we formulate a worst-case robust beamforming optimization problem. To solve the problem efficiently, we develop a novel iterative algorithm by utilizing various optimization tools such as Lagrange multiplier, matrix inversion lemma, SDR, as well as branch-and-bound (BnB). Additionally, {\color{black}we provide sufficient conditions for the SDR to output rank-one solutions, and} modify the BnB algorithm to acquire the phase shift solution under an arbitrary constraint of possible phase shift values. Finally, we analyse the algorithm {\color{black} convergence and} complexity, and carry out simulations to validate the theoretical derivation of the CSI error bound and the robustness of the proposed algorithm. Compared with the existing non-robust approach and the robust beamforming techniques based on S-procedure and penalty convex-concave procedure (CCP), our method can converge more quickly and achieve better performance in terms of the worst-case signal-to-noise ratio (SNR) at the receiver. 
\end{abstract}


\IEEEpeerreviewmaketitle


\section{Introduction}

For decades, the rapid development of telecommunication technologies has been concomitant with a surge of mobile data traffic along with a sharp increase in the number of mobile terminals, which sparks off a burning issue of spectrum scarcity. To deal with this issue, several key enabling techniques, such as millimeter-wave (mmWave), massive multiple-input-multiple-output (MIMO) and ultra-dense network (UDN), have been incorporated into the fifth-generation (5G) wireless communication network \cite{5G Key Technologies}. Although these techniques are validated to be advantageous in terms of improving the spectral efficiency and providing reliable connectivity, they are still unable to adequately address the problems of high energy consumption (EC) and high hardware/deployment cost (HDC). These problems will become even more serious in the future sixth-generation (6G) wireless communication network.

Recently, the urgent demand for coping with the problems of high EC and HDC in 5G/6G, has promoted the emergence and development of a new concept, termed reconfigurable intelligent surface (RIS), which aims to make the wireless communication environment smarter and more controllable for combating the undesirable propagation conditions \cite{C.Liaskos-CM2018, S. Hu-TSP2018}. An RIS is generally an artificial metasurface consisting of a large quantity of near-passive reflecting units, each of which is independently controlled in a software-defined manner to adjust the physical properties, such as phase shifts, of the impinging electromagnetic waves, so as to reflect the waves to a desired receiver \cite{M.D.Renzo-JSAC2020, S.Gong-CST2020, Q.Wu-CM2020, C.Huang-WC2020}. As the RIS can generally be fabricated with low-cost simple electronic components (e.g. varactor diodes \cite{J.Y.Lau-TAP2012} and positive intrinsic-negative (PIN) diodes \cite{L.Dai-Access2020}) and does not need power-consuming radio-frequency (RF) chains to perform active signal retransmission, it has attracted considerable attention and has been envisioned as one of the most promising candidate technologies in 5G/6G \cite{N.Rajatheva-6G2020}. 

Summarized from the existing researches, the RIS is mostly employed to assist the wireless communication or user localization when the line-of-sight (LoS) link is weak or even unavailable, and to improve the system performance including (but not limited to) channel capacity \cite{E. Bjornson-WCL2020}, physical-layer security \cite{M.Cui-WCL2019}, outage probability \cite{C.Guo-CL2020}, robustness \cite{Our-Previous}, spectral/energy efficiency \cite{C.Huang-TWC2019, Q.Wu-TWC2019, Y.Liu-Globecom2020}, achievable data rate\cite{Z.Xing-TWC2021, C.Huang-ICASSP2018, X.Hu-JSAC2020}, potential positioning accuracy \cite{S.Hu-TSP2018, J.He-VTC2020}, etc., under either perfect \cite{E. Bjornson-WCL2020, M.Cui-WCL2019, C.Guo-CL2020, Our-Previous, C.Huang-TWC2019, Q.Wu-TWC2019, S.Hu-TSP2018, J.He-VTC2020, C.Huang-ICASSP2018, X.Hu-JSAC2020} or non-ideal hardware conditions \cite{Y.Liu-Globecom2020, Z.Xing-TWC2021}. To achieve these goals, the phase shift design/optimization, also known as passive beamforming, needs to be performed at the RIS. This implies that the RIS controller should first acquire adequate instantaneous channel state information (CSI) from the base station (BS). 
Although the instantaneous CSI can generally be obtained by various channel estimation techniques developed for the RIS-aided wireless communication \cite{B.Zheng-WCL2020, L.Wei-TCOM2021, Z.He-WCL2020}, several problems still exist during this procedure. First, when the number of the RIS reflecting units is large, conventional CSI estimation techniques require a huge overhead for channel training. Second, when the BS acquires the CSI, the CSI needs to be delivered to the RIS controller via an additional link, which incurs an extra communication overhead \cite{Zhong-TCOM2020}. Third, since the CSI is generally time-variant in real applications, the update of passive beamforming may become hysteretic due to training and/or communication delay, which seriously compromises the potential gain achieved by the use of RIS.

In view of the aforementioned issues, a novel RIS-aided communication scheme based on user location information has been proposed recently \cite{Zhong-TCOM2020}. Specifically, in practice, the positions of the BS and RIS are generally fixed and known after their deployments, while the position of a user can be easily acquired in many ways. {\color{black}For instance, the user position is available when 1) the RIS is designed to support the user positioning, or 2) the user is covered by the external global positioning system (GPS) or ultra-wide band (UWB) signals in the outdoor or indoor environments. The first case has been addressed by several prior works \cite{RIS-L-1,RIS-L-2,RIS-L-3,RIS-L-4}, where the RIS was leveraged to localize the user by means of the maximum likelihood estimation (MLE), the machine learning, as well as the codebook design, and the phase shift configurations and the Cramér-Rao Lower Bound of the potential localization errors were also investigated. The second case is relatively straightforward in the presence of the cooperation of the existing localization system and the RIS-aided communication system \cite{Zhong-TCOM2020}. Once the user position is obtained, according to the spatial geometric relations between the coordinates of the BS, the RIS, the user, and the array responses at each terminal side, the channel matrices or vectors can be directly reconstructed without the channel training process. The reconstructed channels are further dedicated to the subsequent transmit or passive beamforming design/optimization. Therefore, utilizing the location information to acquire the CSI can overcome the drawbacks of the conventional CSI estimation described in the previous paragraph}.

However, the user position from the localization systems is generally inaccurate. Due to the unavoidable localization error caused by inherent accuracy limitation or hardware imperfection of the localization devices, the reconstructed channels suffer from CSI uncertainty. In such a situation, in order to maintain a good system performance, the transmit or passive beamforming should be designed to be robust to the CSI uncertainty. Up to now, there have already been several important prior works which studied the robust beamforming in the RIS-aided communication systems. For instance, G. Zhou, \textit{et al.} \cite{G.Zhou-TSP2020, G.Zhou-WCL2020}, considered both the bounded CSI error and the statistical CSI error, and designed robust transmit and passive beamforming optimization algorithms based on S-procedure and penalty convex-concave procedure (CCP); J. Zhang, \textit{et al.} \cite{J.Zhang-CL2020}, and M. Zhao, \textit{et al.} \cite{M.Zhao-TSP2021}, focused on minimizing the average mean squared error (MSE) of the data symbols, or minimizing the transmit power with outage probability constraints, in consideration of the statistical CSI error. However, these works did not investigate the CSI error arising from the user location uncertainty, and might suffer from slow convergence when the number of transmit antennas or reflecting elements became large. To the best of our knowledge, how to achieve an efficient robust transmit and passive beamforming optimization approach based on the location information has not been explored yet, which motivates our research herein. {\color{black}Although in our previous conference paper \cite{Our-Previous}, we addressed this remaining issue by relating the CSI error bound to the user location error bound and designing a non-iterative worst-case robust beamforming optimization scheme based on the saddle point theory, the proposed design in \cite{Our-Previous} was only suitable for a special far-field communication scenario, where the investigated BS-RIS channel was characterized by a deterministic rank-one LoS model. Instead, in this work, we propose a novel iterative worst-case robust beamforming optimization approach for the RIS-aided location information assisted wireless communication system, which applies to a more general communication scenario in the presence of Rician fading}. In such a case, compared to \cite{Our-Previous}, both the derivation of the CSI error bound and the optimization process of the transmit and passive beamforming are essentially different and more challenging. 

The contributions of this paper are summarized as follows.

\begin{itemize}
\item[•] \textbf{Derivation of the CSI error bound}: First, we consider a three-dimensional (3D) RIS-aided mmWave communication system, where the locations of the BS, the RIS and the user are adopted to acquire the CSI of the BS-RIS-user link. Under the assumption that the user location error is restricted in a spherical region \cite{Zhong-TCOM2020}, {\color{black}we combine the Taylor expansion, the triangle inequality, the power mean inequality, and the semidefinite relaxation (SDR) to derive an approximate CSI error bound according to the user location error bound. This new bound has not been obtained by the previous studies to the best of our knowledge, but is essential to our subsequent robust transmit and passive beamforming design}. The CSI error bound is empirically verified to be tight when the user location uncertainty is moderate, or when the user is far away from the RIS.

\item[•] \textbf{Robust transmit and passive beamforming}: After the CSI error bound is derived, a worst-case robust transmit and passive beamforming optimization problem is formulated. Since the original problem is non-convex and difficult to be solved efficiently, {\color{black}a novel iterative optimization approach is proposed to acquire a suboptimal solution. Specifically, the inner minimization is first conducted by introducing a Lagrange multiplier, and the outer maximization is then iteratively accomplished to obtain the optimal solutions of the Lagrange dual variable and the phase-shift matrix}. Owing to the constant-modulus property of the reflectors, the outer maximization problem is {\color{black}decomposed} into a feasibility-check problem and a constrained QCQP problem with the aid of the matrix inversion lemma, where the QCQP problem is solved by the SDR when the phase shift arguments belong to $[0,2\pi]$, or by the branch-and-bound (BnB) algorithm when the phase shift arguments are arbitrarily constrained. {\color{black}Regarding the SDR for the RIS phase shift optimization, our work makes the first attempt to rigorously show that the SDR yields rank-one solutions when certain regularity condition holds.}

\item[•] \textbf{Algorithm design and performance evaluation}: Based on the proposed optimization approach, the overall algorithm is finally built, and its {\color{black}convergence behaviour} and computational complexity are analysed. Afterwards, the theoretical derivation of the CSI error bound is verified, and the performance of the developed algorithm is evaluated numerically. Compared with the conventional non-robust approach, the proposed algorithm shows strong robustness against the CSI uncertainty. Compared with the existing worst-case robust beamforming methods based on S-procedure and penalty CCP, our proposed approach can perform better and converge more quickly in terms of the worst-case signal-to-noise ratio (SNR) at the receiver. Besides, our proposed algorithm with BnB has the advantages of being able to provide near-optimal solutions and deal with arbitrary phase shift argument sets.
\end{itemize}

The rest of this paper is organized as follows. Section II describes the system model and the problem formulation for the robust transmit and passive beamforming optimization. Section III derives the CSI error bound based on the user location error bound. Section IV proposes an iterative optimization approach to acquire the solutions of the transmit beamforming vector and the reflective phase-shift matrix. Section V carries out the simulations and comparisons to evaluate the algorithm performance. Section VI draws the final conclusions and prospects.

\textit{Notations:} $\left[\mathbf{v}\right]_i$ and $\left[\mathbf{M}\right]_{(i,i)}$ represent the $i$-th and $(i,i)$-th element in vector $\mathbf{v}$ and matrix $\mathbf{M}$. $\mathbf{M}^{\mathrm{T}}$, $\mathbf{M}^*$ and $\mathbf{M}^{\mathrm{H}}$ denote the transpose, conjugate and conjugate transpose of $\mathbf{M}$. $\mathbf{M}\in\mathbb{C}^{a\times b}$ means that $\mathbf{M}$ is an $a\times b$ sized complex-value matrix. $\|.\|_2$ symbolizes the $\ell_2$-norm. $\mathbb{E}\{\cdot\}$, $tr(\cdot)$ and $\arg\{\cdot\}$ denote the expectation, trace and argument, respectively. $\mathrm{diag}(x_1,x_2,...,x_n)$ represents a diagonal matrix whose diagonal elements are $(x_1,x_2,...,x_n)$. $\mathfrak{Re}\{x\}$ and $\mathfrak{Im}\{x\}$ denote the real part and the imaginary part of $x$, respectively. $\mathbf{I}_{a\times a}$ is an $a\times a$ sized identity matrix.

\section{System Model and Problem Formulation}

\begin{figure}[!t]
\includegraphics[width=3in]{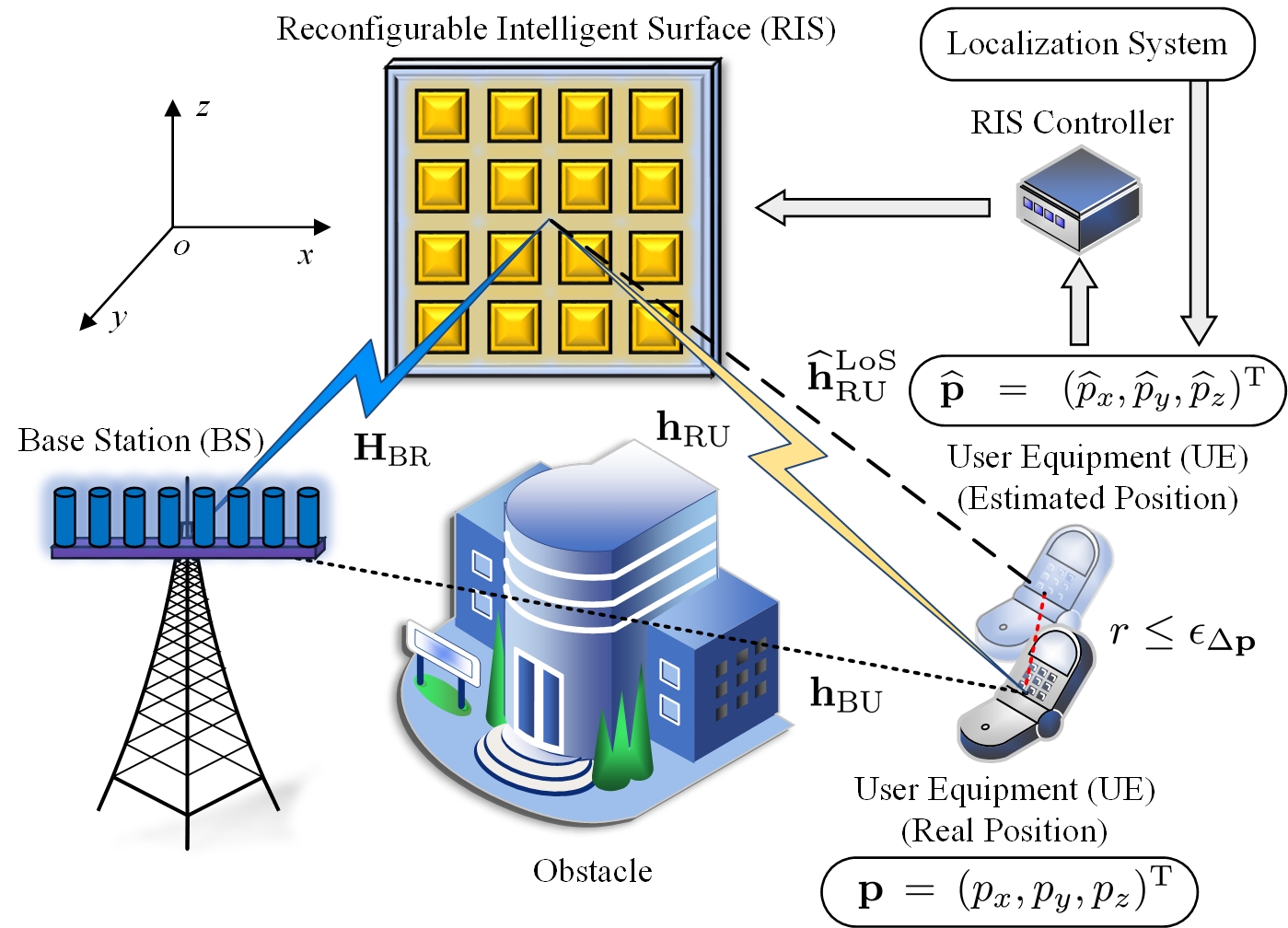}
\hfil
\centering
\caption{\color{black}The considered RIS-aided wireless communication system in a 3D propagation environment. 
The user location $\widehat{\mathbf{p}}$, acquired from the existing localization systems, is adopted to reconstruct the LoS channel from the RIS to the UE. The actual user location $\mathbf{p}$ is assumed to be within a spherical region with radius of $\epsilon_{\Delta\mathbf{p}}$ and center of $\widehat{\mathbf{p}}$.
}
\label{IRS localization--blockage-system}
\end{figure}

In this paper, an RIS-aided mmWave communication system in a 3D propagation environment, as shown in Fig. \ref{IRS localization--blockage-system}, is investigated. The system is composed of a BS with a uniform linear array (ULA) consisting of $M$ antennas, an RIS with a uniform square planar array (USPA) consisting of $L\times L=N$ reflecting elements with $L$ being the number of rows or columns of the USPA, a {\color{black}user equipment (UE)} with one antenna, and an RIS controller. {\color{black}The UE is assumed moving slowly or nearly static}. The antenna spacing of the BS is $d_{BS}$, while the element spacing of the RIS is $d_{RIS}$. To guarantee a good communication performance, the RIS is deployed near the BS.
For convenience of system description, a 3D Cartesian coordinate system, as shown in the top-left corner of Fig. 1, is established to specify the positions of the BS, the RIS and the {\color{black}UE}. With the aid of coordinate indication, the position of the {\color{black}UE} is represented by $\mathbf{p}=(p_x,p_y,p_z)^{\mathrm{T}}$. Assume that the ULA on the BS and the USPA on the RIS are deployed parallel to the $x$-axis and the $x$-$o$-$z$ plane, respectively. Then, let $\mathbf{q}_1=(q_{x,1},q_{y,1},q_{z,1})^{\mathrm{T}}$ denote the coordinates of the left-end BS antenna, and $\mathbf{v}_1=(v_{x,1},v_{y,1},v_{z,1})^{\mathrm{T}}$ denote the coordinates of the reflecting element in the bottom-left corner of the RIS. Hence, the positions of the $i$-th antenna on the BS, for $i=1,2,3,...,M$, and the $(\ell+(k-1)L)$-th reflecting element on the RIS, for $\ell=1,2,...,L$ and $k=1,2,...,L$, are represented by
\begin{equation}\label{q_i}
\mathbf{q}_i = \mathbf{q}_1 + \bm{\delta}_{\mathbf{q}}(i),
\end{equation}
\begin{equation}\label{v_(ell+(k-1)L)}
\mathbf{v}_{\ell+(k-1)L} = \mathbf{v}_1 + \bm{\delta}_{\mathbf{v}}(\ell+(k-1)L),
\end{equation}
where $\bm{\delta}_{\mathbf{q}}(i)=((i-1)d_{BS},0,0)^{\mathrm{T}}$ and $\bm{\delta}_{\mathbf{v}}(\ell+(k-1)L)=((\ell-1)d_{RIS},0,(k-1)d_{RIS})^{\mathrm{T}}$. 

Based on this system setup, the remainder of this section will illustrate the channel and received signal models, CSI error model, and problem formulation.

\subsection{Channel and Received Signal Models}

In the considered communication system, {\color{black} an obstruction, e.g. an edifice or building, exists on the direct path between the BS and the UE, whereas no object blocks the LoS links between the BS/RIS and the RIS/UE. In this situation, the baseband equivalent BS-UE channel, denoted by $\mathbf{h}_{\mathrm{BU}}$, is characterized by the Rayleigh fading model, whose elements follow zero mean complex Gaussian distributions \cite{C.Pan-JSAC2020}}. The baseband equivalent BS-RIS and RIS-UE channels, denoted respectively by $\mathbf{H}_{\mathrm{BR}}$ and $\mathbf{h}_{\mathrm{RU}}$, are modelled as \cite{Q.Wu-TWC2019}
\begin{equation}
\mathbf{H}_{\mathrm{BR}}=
\sqrt{\frac{\kappa_R}{1+\kappa_R}}\mathbf{H}_{\mathrm{BR}}^{\mathrm{LoS}} + \sqrt{\frac{1}{1+\kappa_R}}\mathbf{H}_{\mathrm{BR}}^{\mathrm{NLoS}},
\end{equation}
\begin{equation}
\mathbf{h}_{\mathrm{RU}}=
\sqrt{\frac{\kappa_R}{1+\kappa_R}}\mathbf{h}_{\mathrm{RU}}^{\mathrm{LoS}} + \sqrt{\frac{1}{1+\kappa_R}}\mathbf{h}_{\mathrm{RU}}^{\mathrm{NLoS}},
\end{equation}
where $\kappa_R$ represents the Rician factor;
$\mathbf{H}_{\mathrm{BR}}^{\mathrm{NLoS}}\in\mathbb{C}^{N\times M}$ and $\mathbf{h}_{\mathrm{RU}}^{\mathrm{NLoS}}\in\mathbb{C}^{N\times 1}$ are the non-LoS Rayleigh fading components \cite{C.Pan-JSAC2020};
$\mathbf{H}_{\mathrm{BR}}^{\mathrm{LoS}}\in\mathbb{C}^{N\times M}$ and $\mathbf{h}_{\mathrm{RU}}^{\mathrm{LoS}}\in\mathbb{C}^{N\times 1}$ are the LoS components given by (\ref{H_B-R}) and (\ref{h_R-M}) on the top of the next page \cite{Z. A. Shaban-Arxiv2021},
\begin{figure*}[!t]
\normalsize
\begin{scriptsize}
\begin{equation}\label{H_B-R}
\begin{split}
\mathbf{H}_{\mathrm{BR}}^{\mathrm{LoS}}=
\left(\begin{matrix}
\sqrt{\rho_{Loss}^{B_1-R_1}} e^{j\frac{2\pi}{\lambda}\vartheta_{1,1}}& \sqrt{\rho_{Loss}^{B_2-R_1}} e^{j\frac{2\pi}{\lambda}\vartheta_{1,2}}&\cdots&\sqrt{\rho_{Loss}^{B_M-R_1}} e^{j\frac{2\pi}{\lambda}\vartheta_{1,M}}\\
\sqrt{\rho_{Loss}^{B_1-R_2}} e^{j\frac{2\pi}{\lambda}\vartheta_{2,1}}& \sqrt{\rho_{Loss}^{B_2-R_2}} e^{j\frac{2\pi}{\lambda}\vartheta_{2,2}}&\cdots&\sqrt{\rho_{Loss}^{B_M-R_2}} e^{j\frac{2\pi}{\lambda}\vartheta_{2,M}}\\
\vdots&\vdots&\ddots&\vdots\\
\sqrt{\rho_{Loss}^{B_1-R_N}} e^{j\frac{2\pi}{\lambda}\vartheta_{N,1}}& \sqrt{\rho_{Loss}^{B_2-R_N}} e^{j\frac{2\pi}{\lambda}\vartheta_{N,2}}&\cdots&\sqrt{\rho_{Loss}^{B_M-R_N}} e^{j\frac{2\pi}{\lambda}\vartheta_{N,M}}\\
\end{matrix}\right),
\end{split}
\end{equation}
\end{scriptsize}
\begin{scriptsize}
\begin{equation}\label{h_R-M}
\mathbf{h}_{\mathrm{RU}}^{\mathrm{LoS}}=\left(\sqrt{\rho_{Loss}^{R_1-U}} e^{j\frac{2\pi}{\lambda}\varpi_1},\ \sqrt{\rho_{Loss}^{R_2-U}}e^{j\frac{2\pi}{\lambda}\varpi_2},\ \cdots,\ \sqrt{\rho_{Loss}^{R_N-U}}e^{j\frac{2\pi}{\lambda}\varpi_{N}}\right)^{\mathrm{T}}.
\end{equation}
\vspace*{-10pt}
\end{scriptsize}
\hrulefill
\vspace*{-8pt}
\end{figure*}
where $\lambda$ is the signal wavelength; $\vartheta_{\ell+(k-1)L,i}$ and $\varpi_{\ell+(k-1)L}$, for $\ell=1,2,...,L$, $k=1,2,...,L$ and $i=1,2,...,M$, are expressed as
\begin{equation}\label{In_H_B-R}
\vartheta_{\ell+(k-1)L,i}=\|\mathbf{v}_1-\mathbf{q}_1\|_2 - \|\mathbf{v}_{\ell+(k-1)L}-\mathbf{q}_i\|_2,
\end{equation}
\begin{equation}\label{In_h_R-M}
\varpi_{\ell+(k-1)L}=\|\mathbf{v}_{\ell+(k-1)L}-\mathbf{p}\|_2-\|\mathbf{v}_1-\mathbf{p}\|_2.
\end{equation}
$\rho_{Loss}^{B_i-R_{\ell+(k-1)L}}$ and $\rho_{Loss}^{R_{\ell+(k-1)L}-U}$ are, respectively, the large-scale path loss coefficients from the $i$-th BS antenna to the $(\ell+(k-1)L)$-th RIS element and from the $(\ell+(k-1)L)$-th RIS element to the {\color{black}UE}, given by \cite{Q.Wu-TWC2019}
\begin{equation}\label{pathloss_definition_BR}
\rho_{Loss}^{B_i-R_{\ell+(k-1)L}}=\zeta_0\left(\frac{d_{B_i-R_{\ell+(k-1)L}}}{d_0}\right)^{-\alpha},
\end{equation}
\begin{equation}\label{pathloss_definition_RM}
\rho_{Loss}^{R_{\ell+(k-1)L}-U}=\zeta_0\left(\frac{d_{R_{\ell+(k-1)L}-U}}{d_0}\right)^{-\alpha},
\end{equation}
where $d_0$ is the reference distance; $\zeta_0$ denotes the path loss at the reference distance of $d_0=1$ m; $\alpha$ represents the path loss exponent; $d_{B_i-R_{\ell+(k-1)L}}$ and $d_{R_{\ell+(k-1)L}-U}$ are given by
\begin{equation}\nonumber
d_{B_i-R_{\ell+(k-1)L}}=\|\mathbf{v}_{\ell+(k-1)L}-\mathbf{q}_i\|_2,
\end{equation}
\begin{equation}\nonumber
d_{R_{\ell+(k-1)L}-U}=\|\mathbf{v}_{\ell+(k-1)L}-\mathbf{p}\|_2. 
\end{equation}




Based on the above channel models, the signal received by the {\color{black}UE} is modelled as 
{\color{black}\begin{equation}
y(t)=\sqrt{P_T}\left(\mathbf{h}_{\mathrm{RU}}^{\mathrm{H}}\mathbf{\Theta}\mathbf{H}_{\mathrm{BR}} + e_\mathrm{BU} \mathbf{h}_{\mathrm{BU}}^{\mathrm{H}}\right)\mathbf{w}x(t)+n(t),
\end{equation}
where $e_\mathrm{BU}=\{0,1\}$ is a binary variable representing the eventuality of the BS-UE channel, and in this paper, we consider the existence of $\mathbf{h}_{\mathrm{BU}}$ with $e_\mathrm{BU}=1$;} $P_T$ denotes the total transmit power; 
$x(t)$ represents the transmit symbol which satisfies $\mathbb{E}\{x^*(t)x(t)\}=1$; $n(t)\sim\mathcal{CN}(0,\sigma_n^2)$ denotes the receiver noise; $\mathbf{w}$ is the unit-norm transmit beamforming vector; $\mathbf{\Theta}$ is the phase-shift matrix of the RIS, given by $\mathbf{\Theta}=\mathrm{diag}\left(\beta_1 e^{j\theta_1},\beta_2 e^{j\theta_2},\cdots,\beta_N e^{j\theta_N}\right)$, where $\beta_{\ell+(k-1)L}$ and $\theta_{\ell+(k-1)L}$ denote the amplitude and phase shift argument of the $(\ell+(k-1)L)$-th reflecting element, respectively. As the RIS is a near-passive reflecting apparatus, {\color{black}we assume that the reflection amplitudes satisfy $\beta_1=\beta_2=\cdots=\beta_N=\beta$ \cite{Z.Xing-TWC2021} without loss of generality, where $0<\beta\leq 1$}. The phase shift arguments belong to a set $\mathcal{S}$, where $\mathcal{S}$ can be one of the following categories: 

\!\!\!\!\!\!\!\! 1) $\mathcal{S}=[0,2\pi]$: the simplest and most universal argument set.

\!\!\!\!\!\!\!\! 2) $\mathcal{S}=[\ell_l,\ell_u]$: a general argument set restricted within an arbitrary interval, where $\ell_l$ and $\ell_u$ are real values satisfying $\ell_l<\ell_u$.

\!\!\!\!\!\!\!\! 3) $\mathcal{S}=\{0,\Delta\theta,\cdots,(W_{\theta}-1)\Delta\theta\}$: a discrete argument set with $W_{\theta}$ phase shift levels, where $\Delta\theta=2\pi/W_{\theta}$ \cite{Y.Liu-Globecom2020}.

\subsection{CSI Error Model}

Considering that the locations of BS and RIS are stationary, the channel condition between BS and RIS hardly changes {\color{black}over} time. Moreover, because the RIS is close to the BS and the mmWave channel has limited scattering, the LoS component in the BS-RIS channel is dominant whereas the non-LoS component is substantially weak, resulting in $\mathbf{H}_{\mathrm{BR}}\approx\mathbf{H}_{\mathrm{BR}}^{\mathrm{LoS}}$ \cite{K.Zhi-WCL2021, Z.Xing-TWC2021}. 
Therefore, {\color{black}by assuming that the exact locations of each antenna on the BS and each reflecting element on the RIS are fixed and known after their deployments, $\mathbf{H}_{\mathrm{BR}}$ can be perfectly obtained by (\ref{H_B-R}), (\ref{In_H_B-R}) and (\ref{pathloss_definition_BR}).}
{\color{black}The UE is considered within the coverage areas of the external localization systems, e.g. the GPS or UWB, which can provide the UE location for the RIS-aided communication system to} acquire (reconstruct) $\mathbf{h}_{\mathrm{RU}}^{\mathrm{LoS}}$ based on (\ref{h_R-M}), (\ref{In_h_R-M}) and (\ref{pathloss_definition_RM}).\footnote{\color{black}It is remarkable that the UE may occasionally appear in the blind zone of the existing localization systems, where the UE location is hardly available. Under this circumstance, conventional cascaded channel estimation techniques, which have been widely investigated in, e.g. \cite{B.Zheng-WCL2020, L.Wei-TCOM2021, Z.He-WCL2020}, are responsible for CSI acquisition. }
 {\color{black}The acquired RIS-UE channel, denoted by $\widehat{\mathbf{h}}_{\mathrm{RU}}^{\mathrm{LoS}}$ herein, contains an unavoidable error caused primarily by two factors:}

•\textbf{User location error}: In view of the inherent hardware imperfection, limited precision of measurements or some other unfavorable aspects, 
the user location obtained, e.g. from GPS or UWB, denoted by $\mathbf{\widehat{p}}=(\widehat{p}_x,\widehat{p}_y,\widehat{p}_z)^{\mathrm{T}}$, is generally inaccurate. By referring to \cite{Zhong-TCOM2020}, $\mathbf{\widehat{p}}$ satisfies
$\mathbf{p}=\mathbf{\widehat{p}}+\Delta\mathbf{p}$,
where $\Delta\mathbf{p}=(\Delta p_x,\Delta p_y, \Delta p_z)^{\mathrm{T}}$ is the user location error, bounded by $\|\Delta\mathbf{p}\|_2=r \leq \epsilon_{\Delta\mathbf{p}}$, where $\epsilon_{\Delta\mathbf{p}}$ is a known small positive constant depending on the localization accuracy \cite{Zhong-TCOM2020}. 

•\textbf{Non-LoS component}: The non-LoS Rayleigh fading component may not be neglected when the {\color{black}UE stands} far away from the RIS. Although $\mathbf{h}_{\mathrm{RU}}^{\mathrm{NLoS}}$ is random, its $\ell_2$-norm within a certain communication duration can be determined by the existing channel norm feedback techniques based on pilot data sequence \cite{D.Hammarwall-TSP2008, E. Bjornson-ICASSP2008} or on channel correlation matrix \cite{E. Bjornson-TSP2009}. When the {\color{black}UE} moves slowly, the $\ell_2$-norm of $\mathbf{h}_{\mathrm{RU}}^{\mathrm{NLoS}}$ can be regarded approximately as a fixed term in a short period, since the variation of channel condition between RIS and {\color{black}UE} is slow. Thus, we can assume that $\|\mathbf{h}_{\mathrm{RU}}^{\mathrm{NLoS}}\|_2$ is known and given by $\|\mathbf{h}_{\mathrm{RU}}^{\mathrm{NLoS}}\|_2=\delta_{\mathrm{RU}}^{\mathrm{NLoS}}$. {\color{black} Similarly, we can also assume that $\|\mathbf{h}_{\mathrm{BU}}\|_2$ is known and given by $\|\mathbf{h}_{\mathrm{BU}}\|_2=\delta_{\mathrm{BU}}$ for the same reason, although $\mathbf{h}_{\mathrm{BU}}$ is stochastic and cannot be reconstructed by $\mathbf{\widehat{p}}$ either. These values are conducive to facilitating the subsequent derivation of the CSI error bound and the beamforming optimization process.}

{\color{black}Therefore, when $\mathbf{\widehat{p}}$ is used to acquire (reconstruct) the LoS part of the RIS-UE channel, the overall CSI error between the actual and reconstructed channels can be expressed as
\begin{equation}\label{Overall_CSI_Error}
\begin{split}
\Delta\mathbf{h}_{\mathrm{RU}}
= & \mathbf{h}_{\mathrm{RU}} - \widehat{\mathbf{h}}_{\mathrm{RU}}^{\mathrm{LoS}}\\ = &
\sqrt{\frac{\kappa_R}{1+\kappa_R}}\mathbf{h}_{\mathrm{RU}}^{\mathrm{LoS}} + \sqrt{\frac{1}{1+\kappa_R}}\mathbf{h}_{\mathrm{RU}}^{\mathrm{NLoS}} -  \widehat{\mathbf{h}}_{\mathrm{RU}}^{\mathrm{LoS}} \\
=&\left( \sqrt{\frac{\kappa_R}{1+\kappa_R}}\mathbf{h}_{\mathrm{RU}}^{\mathrm{LoS}} -  \widehat{\mathbf{h}}_{\mathrm{RU}}^{\mathrm{LoS}} \right) + \sqrt{\frac{1}{1+\kappa_R}}\mathbf{h}_{\mathrm{RU}}^{\mathrm{NLoS}} \\
=& \Delta\mathbf{h}_{\mathrm{RU}}^{\mathrm{LoS}} + \Delta\mathbf{h}_{\mathrm{RU}}^{\mathrm{NLoS}} ,
\end{split}
\end{equation}
where $\widehat{\mathbf{h}}_{\mathrm{RU}}^{\mathrm{LoS}}$ denotes the LoS part of the RIS-UE channel acquired by $\widehat{\mathbf{p}}$, which can be calculated via
\begin{equation}
\widehat{\mathbf{h}}_{\mathrm{RU}}^{\mathrm{LoS}}=\left(\sqrt{\widehat{\rho}_{Loss}^{R_1-U}} e^{j\frac{2\pi}{\lambda}\widehat{\varpi}_1},...,\sqrt{\widehat{\rho}_{Loss}^{R_N-U}} e^{j\frac{2\pi}{\lambda}\widehat{\varpi}_{N}}\right)^{\mathrm{T}},
\end{equation}
where $\widehat{\varpi}_{\ell+(k-1)L}$ is expressed as 
$\widehat{\varpi}_{\ell+(k-1)L}=\|\mathbf{v}_{\ell+(k-1)L}-\widehat{\mathbf{p}}\|_2-\|\mathbf{v}_1-\widehat{\mathbf{p}}\|_2$,
and $\widehat{\rho}_{Loss}^{R_{\ell+(k-1)L}-U}$ is computed by (\ref{pathloss_definition_RM}) using
\begin{equation}\nonumber
\widehat{d}_{R_{\ell+(k-1)L}-U}=\|\mathbf{v}_{\ell+(k-1)L}-\widehat{\mathbf{p}}\|_2.
\end{equation}

From (\ref{Overall_CSI_Error}), it is observed that $\Delta \mathbf{h}_{\mathrm{RU}}^{\mathrm{LoS}}$, representing the CSI error of the LoS component depending on $\Delta\mathbf{p}$, is given by
\begin{equation}\label{LoS CSI error}
\Delta \mathbf{h}_{\mathrm{RU}}^{\mathrm{LoS}}=\sqrt{\frac{\kappa_R}{1+\kappa_R}}\mathbf{h}_{\mathrm{RU}}^{\mathrm{LoS}} - \widehat{\mathbf{h}}_{\mathrm{RU}}^{\mathrm{LoS}},
\end{equation}
while $\Delta \mathbf{h}_{\mathrm{RU}}^{\mathrm{NLoS}}$, standing for the CSI error caused by the non-LoS component, is given by
\begin{equation}\label{NLoS CSI error}
\Delta \mathbf{h}_{\mathrm{RU}}^{\mathrm{NLoS}} = \sqrt{\frac{1}{1+\kappa_R}}\mathbf{h}_{\mathrm{RU}}^{\mathrm{NLoS}}.
\end{equation}

Eq. (\ref{LoS CSI error}) and (\ref{NLoS CSI error}) indicate that in the mmWave communication with a large $\kappa_R$, $\Delta \mathbf{h}_{\mathrm{RU}}^{\mathrm{LoS}}$ generally dominates the entire CSI error $\Delta \mathbf{h}_{\mathrm{RU}}$.
In the presence of $\Delta \mathbf{h}_{\mathrm{RU}}$, we aim to optimize $\mathbf{w}$ and $\mathbf{\Theta}$ based on {\color{black}$\widehat{\mathbf{h}}_{\mathrm{RU}}^{\mathrm{LoS}}$} and $\mathbf{H}_{\mathrm{BR}}$, by maximizing the worst-case SNR at the receiver}. 




\subsection{Problem Formulation}

For the aforementioned system model, the worst-case robust beamforming optimization problem is formulated {\color{black}as
\begin{subequations}\label{Initial_Problem}
\begin{align}
\mathop{\max}\limits_{\mathbf{w},\mathbf{\Theta}}\mathop{\min}\limits_{\Delta\mathbf{h}_{\mathrm{RU}},  \mathbf{h}_{\mathrm{BU}}}&\!\!
\ \left|\left[(\widehat{\mathbf{h}}_{\mathrm{RU}}^{\mathrm{LoS}}+\Delta\mathbf{h}_{\mathrm{RU}})^{\mathrm{H}} \mathbf{\Theta}\mathbf{H}_{\mathrm{BR}}+\mathbf{h}_{\mathrm{BU}}^{\mathrm{H}}\right]\mathbf{w}\right|,
\\ \mathrm{s.t.} &\ \|\mathbf{w}\|_2=1,
\\&\ \|\Delta \mathbf{h}_{\mathrm{RU}}\|_2\leq \epsilon_{\Delta \mathbf{h}_{\mathrm{RU}}},
\\&\ \|\mathbf{h}_{\mathrm{BU}}\|_2=\delta_{\mathrm{BU}},
\\&\ |[\mathbf{\Theta}]_{(i,i)}|=\beta,\ \ i=1,2,...,N,
\\&\ \arg\left\{[\mathbf{\Theta}]_{(i,i)}\right\} \in \mathcal{S},\ i=1,2,...,N,
\end{align}
\end{subequations}
where} the constant terms of $P_T$ and $\sigma_n^2$ are omitted for conciseness, without changing the optimization results. Constraint (\ref{Initial_Problem}b) comes from the unit-norm transmit beamforming vector. Constraint (\ref{Initial_Problem}c) means that the overall CSI error is bounded in a spherical uncertainty region. {\color{black} Constraint (\ref{Initial_Problem}d) comes from $\|\mathbf{h}_{\mathrm{BU}}\|_2$, which is fixed and known as illustrated in Section II-B. Constraint (\ref{Initial_Problem}e) comes from the constant-modulus phase shifts of the RIS. Constraint (\ref{Initial_Problem}f) represents the phase shift argument constraint}.

It is noted that because the overall CSI error bound $\epsilon_{\Delta \mathbf{h}_{\mathrm{RU}}}$ in constraint (\ref{Initial_Problem}c) is still undetermined here, it should first be derived on the basis of $\|\Delta\mathbf{p}\|_2\leq \epsilon_{\Delta\mathbf{p}}$ and $\|\mathbf{h}_{\mathrm{RU}}^{\mathrm{NLoS}}\|_2=\delta_{\mathrm{RU}}^{\mathrm{NLoS}}$ before solving problem (\ref{Initial_Problem}). The derivation of $\epsilon_{\Delta \mathbf{h}_{\mathrm{RU}}}$ will be detailed in the next section.

\section{Derivation of the CSI Error Bound}

This section is dedicated to the derivation of the overall CSI error bound in (\ref{Initial_Problem}c) according to $\|\Delta\mathbf{p}\|_2 \leq \epsilon_{\Delta\mathbf{p}}$ and $\|\mathbf{h}_{\mathrm{RU}}^{\mathrm{NLoS}}\|_2=\delta_{\mathrm{RU}}^{\mathrm{NLoS}}$. \footnote{\color{black} The paper is dedicated to a single-user communication scenario. Potential extension of the derivation process in this section to a multi-user case is feasible, since the location information based CSI acquisition and the derivation of the CSI error bound can be performed independently for each single user.} First, when retrospecting (\ref{Overall_CSI_Error}), we have (\ref{First_Expansion_CSI_Error_Norm}) on the top of the next page,
\begin{figure*}[!t]
\normalsize
\begin{scriptsize}
\begin{equation}\label{First_Expansion_CSI_Error_Norm}
\begin{split}
\|\Delta \mathbf{h}_{\mathrm{RU}}\|_2&\leq
\left\|\Delta\mathbf{h}_{\mathrm{RU}}^{\mathrm{LoS}}\right\|_2 + \left\|\Delta\mathbf{h}_{\mathrm{RU}}^{\mathrm{NLoS}}\right\|_2
= \left\|\sqrt{\frac{\kappa_R}{1+\kappa_R}}\left(\mathbf{h}_{\mathrm{RU}}^{\mathrm{LoS}} - \widehat{\mathbf{h}}_{\mathrm{RU}}^{\mathrm{LoS}}\right) + \left(\sqrt{\frac{\kappa_R}{1+\kappa_R}}-1\right)\widehat{\mathbf{h}}_{\mathrm{RU}}^{\mathrm{LoS}}\right\|_2 + \sqrt{\frac{1}{1+\kappa_R}} \delta_{\mathrm{RU}}^{\mathrm{NLoS}}\\
&\leq \sqrt{\frac{\kappa_R}{1+\kappa_R}}\left\|\mathbf{h}_{\mathrm{RU}}^{\mathrm{LoS}} - \widehat{\mathbf{h}}_{\mathrm{RU}}^{\mathrm{LoS}} \right\|_2 + \left(1-\sqrt{\frac{\kappa_R}{1+\kappa_R}}\right)\left\|\widehat{\mathbf{h}}_{\mathrm{RU}}^{\mathrm{LoS}} \right\|_2 + \sqrt{\frac{1}{1+\kappa_R}} \delta_{\mathrm{RU}}^{\mathrm{NLoS}}\\
&\leq \sqrt{\frac{\kappa_R}{1+\kappa_R}}\epsilon_{\mathrm{RU}}^{\mathrm{LoS}} + \left(1-\sqrt{\frac{\kappa_R}{1+\kappa_R}}\right)\left\|\widehat{\mathbf{h}}_{\mathrm{RU}}^{\mathrm{LoS}} \right\|_2 + \sqrt{\frac{1}{1+\kappa_R}} \delta_{\mathrm{RU}}^{\mathrm{NLoS}}.
\end{split}
\end{equation}
\end{scriptsize}
\hrulefill
\vspace*{-15pt}
\end{figure*}
where $\epsilon_{\mathrm{RU}}^{\mathrm{LoS}}$ denotes the upper bound of $\|\mathbf{h}_{\mathrm{RU}}^{\mathrm{LoS}} - \widehat{\mathbf{h}}_{\mathrm{RU}}^{\mathrm{LoS}} \|_2$, which is unknown and should be derived according to $\|\Delta\mathbf{p}\|_2 \leq \epsilon_{\Delta\mathbf{p}}$.

\subsection{Derivation of $\epsilon_{\mathrm{RU}}^{\mathrm{LoS}}$}

Here we begin the derivation of $\epsilon_{\mathrm{RU}}^{\mathrm{LoS}}$ from $\|\mathbf{h}_{\mathrm{RU}}^{\mathrm{LoS}} - \widehat{\mathbf{h}}_{\mathrm{RU}}^{\mathrm{LoS}} \|_2$, which can be expanded into
\begin{equation}\label{CSI_Error_Norm_Expansion}
\begin{split}
\left\|\mathbf{h}_{\mathrm{RU}}^{\mathrm{LoS}} - \widehat{\mathbf{h}}_{\mathrm{RU}}^{\mathrm{LoS}} \right\|_2=\sqrt{\zeta_0 \left(\frac{1}{d_0}\right)^{-\alpha} \Omega\left(\Delta\mathbf{p}\right)},
\end{split}
\end{equation}
where $\Omega\left(\Delta\mathbf{p}\right)$ is a function of $\Delta\mathbf{p}$, expressed as (\ref{Omega_Delta_p}) on the top of the next page,
\begin{figure*}[!t]
\normalsize
\begin{scriptsize}
\begin{equation}\label{Omega_Delta_p}
\begin{split}
\Omega\left(\Delta\mathbf{p}\right)= &
\sum_{k=1}^{L}\sum_{\ell=1}^{L}\left\{\left\|\mathbf{v}_{\ell+(k-1)L}-\widehat{\mathbf{p}}-\Delta\mathbf{p}\right\|_2^{-\alpha} + \left\|\mathbf{v}_{\ell+(k-1)L}-\widehat{\mathbf{p}}\right\|_2^{-\alpha} \right\} 
 - 2\sum_{k=1}^{L}\sum_{\ell=1}^{L}\left\{ \left\|\mathbf{v}_{\ell+(k-1)L}-\widehat{\mathbf{p}}-\Delta\mathbf{p}\right\|_2^{-\frac{\alpha}{2}} \right.
\left. \left\|\mathbf{v}_{\ell+(k-1)L}-\widehat{\mathbf{p}}\right\|_2^{-\frac{\alpha}{2}} \cos{\left(\frac{2\pi}{\lambda} \mathfrak{U}_{\ell+(k-1)L}\right)}
\right\}.
\end{split}
\end{equation}
\end{scriptsize}
\hrulefill
\vspace*{-15pt}
\end{figure*}
with $\mathfrak{U}_{\ell+(k-1)L}$ being detailed as
\begin{equation}\nonumber
\begin{split}
\mathfrak{U}_{\ell+(k-1)L}=\ &\varpi_{\ell+(k-1)L}-\widehat{\varpi}_{\ell+(k-1)L}\\
=\ &\left(\left\|\mathbf{v}_{\ell+(k-1)L}-\widehat{\mathbf{p}}-\Delta\mathbf{p}\right\|_2-\left\|\mathbf{v}_1-\widehat{\mathbf{p}}-\Delta\mathbf{p}\right\|_2\right)\\
&-\left(\left\|\mathbf{v}_{\ell+(k-1)L}-\widehat{\mathbf{p}}\right\|_2-\left\|\mathbf{v}_1-\widehat{\mathbf{p}}\right\|_2\right)\\
\overset{(a)}{\approx}\ &\left(
\frac{(\mathbf{v}_1-\widehat{\mathbf{p}})^{\mathrm{T}}}{\|\mathbf{v}_1-\widehat{\mathbf{p}}\|_2}-\frac{(\mathbf{v}_{\ell+(k-1)L}-\widehat{\mathbf{p}})^{\mathrm{T}}}{\|\mathbf{v}_{\ell+(k-1)L}-\widehat{\mathbf{p}}\|_2}
\right)\Delta\mathbf{p}\\
=\ &\bm{\eta}_{\ell+(k-1)L}^{\mathrm{T}}\Delta\mathbf{p},
\end{split}
\end{equation}
where $\bm{\eta}_{\ell+(k-1)L}=\left(\frac{\mathbf{v}_1-\widehat{\mathbf{p}}}{\|\mathbf{v}_1-\widehat{\mathbf{p}}\|_2}-\frac{\mathbf{v}_{\ell+(k-1)L}-\widehat{\mathbf{p}}}{\|\mathbf{v}_{\ell+(k-1)L}-\widehat{\mathbf{p}}\|_2}\right)$ is independent of $\Delta\mathbf{p}$; derivation $(a)$ uses the property that for a fixed real-valued vector $\mathbf{b}$ and a variable $\mathbf{x}$, $\|\mathbf{b}-\mathbf{x}\|_2$ can be well approximated by its first-order approximation of $\|\mathbf{b}-\mathbf{x}\|_2\approx \|\mathbf{b}\|_2+\left\langle\nabla_{\mathbf{x}=\mathbf{0}} \|\mathbf{b}-\mathbf{x}\|_2,\mathbf{x}\right\rangle=\|\mathbf{b}\|_2-\frac{\mathbf{b}^{\mathrm{T}}}{\|\mathbf{b}\|_2}\mathbf{x}$.

It is remarkable that when $\mathbf{v}_{\ell+(k-1)L}$ and $\widehat{\mathbf{p}}$ are given, deriving $\epsilon_{\mathrm{RU}}^{\mathrm{LoS}}$ is equivalent to deriving the maximum of $\Omega\left(\Delta\mathbf{p}\right)$
under the constraint of $\|\Delta\mathbf{p}\|_2\leq \epsilon_{\Delta\mathbf{p}}$, 
which however, is a challenging task, as $\Delta\mathbf{p}$ appears in both the cosine function and the $\ell_2$-norm. In view of this issue, we will further transform $\Omega\left(\Delta\mathbf{p}\right)$ into a simpler form by means of approximation.

Nevertheless, the approximation of $\Omega\left(\Delta\mathbf{p}\right)$ may lead to an inaccurate $\epsilon_{\mathrm{RU}}^{\mathrm{LoS}}$, implying that the theoretically derived $\epsilon_{\mathrm{RU}}^{\mathrm{LoS}}$ will be either higher or lower than the practical $\epsilon_{\mathrm{RU}}^{\mathrm{LoS}}$. Both of the two outcomes will influence the optimal solution of problem (\ref{Initial_Problem}) to some extent. Fortunately, if the theoretically derived $\epsilon_{\mathrm{RU}}^{\mathrm{LoS}}$ is higher than the practical $\epsilon_{\mathrm{RU}}^{\mathrm{LoS}}$, the worst-case CSI experienced during the optimization process will become even worse than the practical worst-case CSI, hence resulting in a more robust solution for problem (\ref{Initial_Problem}). Consequently, in order to preserve the robustness of the transmit and passive beamforming, we need to derive an approximate \textit{upper bound} of the maximum of $\Omega\left(\Delta\mathbf{p}\right)$ under $\|\Delta\mathbf{p}\|_2\leq \epsilon_{\Delta\mathbf{p}}$. The result is provided in the following Lemma 1.

\begin{lemma}
When $\|\Delta\mathbf{p}\|_2\leq \epsilon_{\Delta\mathbf{p}}$, the approximate upper bound of the maximum of $\Omega\left(\Delta\mathbf{p}\right)$, denoted by $\Omega^{\mathrm{Upp}}_{\mathrm{max}}$, can be derived as
\begin{equation}\label{Omega_upp_max}
\Omega^{\mathrm{Upp}}_{\mathrm{max}}=\max_{\mathbf{P}\succeq\bm{0}}\left\{-\frac{4\pi^4}{3\lambda^4 N}tr^2\left(\mathbf{S}\mathbf{P}\right) + tr\left(\mathbf{R}\mathbf{P}\right)\right\}
\end{equation}
from the solution of the following convex problem:
\begin{subequations}\label{Obtain_CSI_Error_Bound_Problem}
\begin{align}
\mathop{\max}\limits_{\mathbf{P}\succeq\bm{0}}&
\ -\frac{4\pi^4}{3\lambda^4 N}tr^2\left(\mathbf{S}\mathbf{P}\right) + tr\left(\mathbf{R}\mathbf{P}\right),
\\ \mathrm{s.t.} &\ \ tr\left(\mathbf{P}\right)\leq \epsilon_{\Delta\mathbf{p}}^2,
\end{align}
\end{subequations}
where $\mathbf{S}$ and $\mathbf{R}$ are expressed as
\begin{equation}\label{mathbf_S}
\begin{split}
\mathbf{S}=\sum_{k=1}^{L}\sum_{\ell=1}^{L}&\left\{
\left(\left\|\mathbf{v}_{\ell+(k-1)L}-\widehat{\mathbf{p}}\right\|_2-\epsilon_{\Delta\mathbf{p}}\right)^{-\frac{\alpha}{4}}\right.\\
&\left.  \times\left\|\mathbf{v}_{\ell+(k-1)L}-\widehat{\mathbf{p}}\right\|_2^{-\frac{\alpha}{4}}
\mathbf{\Xi}_{\ell+(k-1)L}
\right\},
\end{split}
\end{equation}
\begin{equation}\label{mathbf_R}
\begin{split}
\mathbf{R}=\sum_{k=1}^{L}\sum_{\ell=1}^{L}&\left\{
\frac{1}{2}\mathbf{G}_{\alpha}
- \left\|\mathbf{v}_{\ell+(k-1)L}-\widehat{\mathbf{p}}\right\|_2^{-\frac{\alpha}{2}}\mathbf{G}_{\frac{\alpha}{2}}\right.\\
&\left.  +\frac{4\pi^2}{\lambda^2} \left(\left\|\mathbf{v}_{\ell+(k-1)L}-\widehat{\mathbf{p}}\right\|_2-\epsilon_{\Delta\mathbf{p}}\right)^{-\frac{\alpha}{2}}\right.\\
&\left.  \times\left\|\mathbf{v}_{\ell+(k-1)L}-\widehat{\mathbf{p}}\right\|_2^{-\frac{\alpha}{2}}
\mathbf{\Xi}_{\ell+(k-1)L}
\right\},
\end{split}
\end{equation}
in which $\mathbf{G}_{\alpha}$, $\mathbf{G}_{\frac{\alpha}{2}}$ and $\mathbf{\Xi}_{\ell+(k-1)L}$ are given by
\begin{equation}
\begin{split}
\mathbf{G}_{\alpha}
=\ &\alpha(\alpha+2)\left\|\mathbf{v}_{\ell+(k-1)L}-\widehat{\mathbf{p}}\right\|_2^{-\alpha-4}\\
&\times \left(\widehat{\mathbf{p}}-\mathbf{v}_{\ell+(k-1)L}\right) \left(\widehat{\mathbf{p}}-\mathbf{v}_{\ell+(k-1)L}\right)^{\mathrm{T}}\\
&- \alpha \left\|\mathbf{v}_{\ell+(k-1)L}-\widehat{\mathbf{p}}\right\|_2^{-\alpha-2} \mathbf{I}_{3\times 3},
\end{split}
\end{equation}
\begin{equation}
\begin{split}
\mathbf{G}_{\frac{\alpha}{2}}
=\ &\frac{\alpha}{2}\left(\frac{\alpha}{2}+2\right)\left\|\mathbf{v}_{\ell+(k-1)L}-\widehat{\mathbf{p}}\right\|_2^{-\frac{\alpha}{2}-4}\\
&\times \left(\widehat{\mathbf{p}}-\mathbf{v}_{\ell+(k-1)L}\right) \left(\widehat{\mathbf{p}}-\mathbf{v}_{\ell+(k-1)L}\right)^{\mathrm{T}}\\
&- \frac{\alpha}{2} \left\|\mathbf{v}_{\ell+(k-1)L}-\widehat{\mathbf{p}}\right\|_2^{-\frac{\alpha}{2}-2} \mathbf{I}_{3\times 3},
\end{split}
\end{equation}
\begin{equation}
\mathbf{\Xi}_{\ell+(k-1)L} = \bm{\eta}_{\ell+(k-1)L} \bm{\eta}_{\ell+(k-1)L}^{\mathrm{T}}.
\end{equation}
\end{lemma}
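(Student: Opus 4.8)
The plan is to bound the $(k,\ell)$-th summand of $\Omega(\Delta\mathbf{p})$ in (\ref{Omega_Delta_p}), recast the bound as a function of the rank-one matrix $\mathbf{P}=\Delta\mathbf{p}\Delta\mathbf{p}^{\mathrm{T}}$, invoke the power mean inequality to produce the $tr^{2}(\mathbf{S}\mathbf{P})$ term, and finally relax the rank-one constraint (SDR). Fix $k,\ell$, set $m=\ell+(k-1)L$, and write $\mathbf{c}_{m}=\mathbf{v}_{m}-\widehat{\mathbf{p}}$, $a_{m}=\|\mathbf{c}_{m}\|_{2}$, $b_{m}=\|\mathbf{c}_{m}-\Delta\mathbf{p}\|_{2}$, and (using the first-order approximation of $\mathfrak{U}_{m}$ already established above) $\theta_{m}=\tfrac{2\pi}{\lambda}\mathfrak{U}_{m}\approx\tfrac{2\pi}{\lambda}\bm{\eta}_{m}^{\mathrm{T}}\Delta\mathbf{p}$. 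The summand then splits as
\begin{equation}\nonumber
\begin{split}
&b_{m}^{-\alpha}+a_{m}^{-\alpha}-2b_{m}^{-\frac{\alpha}{2}}a_{m}^{-\frac{\alpha}{2}}\cos\theta_{m}\\
&\ =\left(b_{m}^{-\frac{\alpha}{2}}-a_{m}^{-\frac{\alpha}{2}}\right)^{2}+2b_{m}^{-\frac{\alpha}{2}}a_{m}^{-\frac{\alpha}{2}}\left(1-\cos\theta_{m}\right),
\end{split}
\end{equation}
which isolates a ``distance-mismatch'' part (i) and a ``phase-mismatch'' part (ii).

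For part (i), I would Taylor-expand $b_{m}^{-\alpha}$ and $b_{m}^{-\alpha/2}$ to second order about $\Delta\mathbf{p}=\mathbf{0}$. A direct computation gives that the gradient and Hessian of $\mathbf{x}\mapsto\|\mathbf{c}_{m}-\mathbf{x}\|_{2}^{-\gamma}$ at $\mathbf{x}=\mathbf{0}$ are $\gamma a_{m}^{-\gamma-2}\mathbf{c}_{m}$ and $\gamma(\gamma+2)a_{m}^{-\gamma-4}\mathbf{c}_{m}\mathbf{c}_{m}^{\mathrm{T}}-\gamma a_{m}^{-\gamma-2}\mathbf{I}_{3\times 3}$; for $\gamma=\alpha$ and $\gamma=\alpha/2$ these Hessians are exactly $\mathbf{G}_{\alpha}$ and $\mathbf{G}_{\alpha/2}$ of the statement (noting $\mathbf{c}_{m}\mathbf{c}_{m}^{\mathrm{T}}=(\widehat{\mathbf{p}}-\mathbf{v}_{m})(\widehat{\mathbf{p}}-\mathbf{v}_{m})^{\mathrm{T}}$). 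Substituting the two expansions into $b_{m}^{-\alpha}-2b_{m}^{-\alpha/2}a_{m}^{-\alpha/2}+a_{m}^{-\alpha}$, the zeroth-order terms cancel and the first-order terms cancel ($\alpha a_{m}^{-\alpha-2}\mathbf{c}_{m}^{\mathrm{T}}\Delta\mathbf{p}$ against $-2a_{m}^{-\alpha/2}\!\cdot\!\tfrac{\alpha}{2}a_{m}^{-\alpha/2-2}\mathbf{c}_{m}^{\mathrm{T}}\Delta\mathbf{p}$), leaving exactly $\Delta\mathbf{p}^{\mathrm{T}}\!\big(\tfrac12\mathbf{G}_{\alpha}-a_{m}^{-\alpha/2}\mathbf{G}_{\alpha/2}\big)\Delta\mathbf{p}$, i.e.\ the first two matrices in the summand of $\mathbf{R}$ in (\ref{mathbf_R}). (As a sanity check this quadratic form equals $\tfrac{\alpha^{2}}{4}a_{m}^{-\alpha-4}(\mathbf{c}_{m}^{\mathrm{T}}\Delta\mathbf{p})^{2}\ge0$, consistent with (i) being a square.)

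For part (ii), the triangle inequality gives $b_{m}\ge a_{m}-\|\Delta\mathbf{p}\|_{2}\ge a_{m}-\epsilon_{\Delta\mathbf{p}}>0$ (assuming, naturally, that $\widehat{\mathbf{p}}$ lies farther than $\epsilon_{\Delta\mathbf{p}}$ from every reflecting element), hence $b_{m}^{-\alpha/2}\le(a_{m}-\epsilon_{\Delta\mathbf{p}})^{-\alpha/2}$, and since $1-\cos\theta_{m}\ge0$ part (ii) is at most $2(a_{m}-\epsilon_{\Delta\mathbf{p}})^{-\alpha/2}a_{m}^{-\alpha/2}(1-\cos\theta_{m})$. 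Next I would use the truncated expansion $1-\cos\theta_{m}\le\tfrac{\theta_{m}^{2}}{2}-\tfrac{\theta_{m}^{4}}{24}$ --- this is where the bound becomes \emph{approximate} (the exact series is $\tfrac{\theta^{2}}{2}-\tfrac{\theta^{4}}{24}+O(\theta^{6})$, so the truncation is an approximate upper bound, accurate when $|\theta_{m}|$, equivalently the location uncertainty or the RIS--UE distance, is moderate). Writing $\theta_{m}^{2}=\tfrac{4\pi^{2}}{\lambda^{2}}\Delta\mathbf{p}^{\mathrm{T}}\mathbf{\Xi}_{m}\Delta\mathbf{p}$ and $\theta_{m}^{4}=\tfrac{16\pi^{4}}{\lambda^{4}}(\Delta\mathbf{p}^{\mathrm{T}}\mathbf{\Xi}_{m}\Delta\mathbf{p})^{2}$, the $\theta^{2}$ piece supplies the remaining matrix in the summand of $\mathbf{R}$, while the $\theta^{4}$ pieces, after summing over $k,\ell$, contribute $-\tfrac{4\pi^{4}}{3\lambda^{4}}\sum_{m=1}^{N}\big[(a_{m}-\epsilon_{\Delta\mathbf{p}})^{-\alpha/4}a_{m}^{-\alpha/4}\,\Delta\mathbf{p}^{\mathrm{T}}\mathbf{\Xi}_{m}\Delta\mathbf{p}\big]^{2}$. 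Summing (i) and (ii) over all $N=L^{2}$ indices and putting $\mathbf{P}=\Delta\mathbf{p}\Delta\mathbf{p}^{\mathrm{T}}$ therefore gives, approximately, $\Omega(\Delta\mathbf{p})\le tr(\mathbf{R}\mathbf{P})-\tfrac{4\pi^{4}}{3\lambda^{4}}\sum_{m=1}^{N}\big[(a_{m}-\epsilon_{\Delta\mathbf{p}})^{-\alpha/4}a_{m}^{-\alpha/4}\,tr(\mathbf{\Xi}_{m}\mathbf{P})\big]^{2}$.

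Finally, since each bracketed quantity is non-negative, the power mean (Cauchy--Schwarz) inequality $\sum_{i=1}^{N}w_{i}^{2}\ge\tfrac1N\big(\sum_{i=1}^{N}w_{i}\big)^{2}$ lower-bounds the last sum by $\tfrac1N\,tr^{2}(\mathbf{S}\mathbf{P})$ with $\mathbf{S}$ as in (\ref{mathbf_S}), so $\Omega(\Delta\mathbf{p})\le tr(\mathbf{R}\mathbf{P})-\tfrac{4\pi^{4}}{3\lambda^{4}N}tr^{2}(\mathbf{S}\mathbf{P})$. Maximizing the right-hand side over $\|\Delta\mathbf{p}\|_{2}\le\epsilon_{\Delta\mathbf{p}}$ is the same as maximizing over rank-one $\mathbf{P}\succeq\mathbf{0}$ with $tr(\mathbf{P})\le\epsilon_{\Delta\mathbf{p}}^{2}$; dropping the rank-one constraint (SDR) can only enlarge the optimum and produces exactly problem (\ref{Obtain_CSI_Error_Bound_Problem}), which is convex because $\mathbf{S}\succeq\mathbf{0}$ makes $\mathbf{P}\mapsto-tr^{2}(\mathbf{S}\mathbf{P})$ concave on the positive semidefinite cone. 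Chaining the inequalities yields $\max_{\|\Delta\mathbf{p}\|_{2}\le\epsilon_{\Delta\mathbf{p}}}\Omega(\Delta\mathbf{p})\le\Omega^{\mathrm{Upp}}_{\mathrm{max}}$ in the approximate sense, which is the claim. The main obstacle is the second-order Taylor bookkeeping of part (i) --- computing the two Hessians correctly, recognizing them as $\mathbf{G}_{\alpha}$ and $\mathbf{G}_{\alpha/2}$, and verifying the exact cancellation of the zeroth- and first-order contributions of $b_{m}^{-\alpha}$ and $-2b_{m}^{-\alpha/2}a_{m}^{-\alpha/2}$; the conceptually decisive move is then applying the power mean inequality to the $\theta^{4}$ terms, which is what turns the estimate into the convex, SDP-expressible program (\ref{Obtain_CSI_Error_Bound_Problem}). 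Of the remaining steps, the triangle-inequality bound on $b_{m}^{-\alpha/2}$, the power mean inequality, and the SDR all overestimate $\Omega$ (hence keep the downstream beamforming design conservative), whereas the $1-\cos\theta_{m}$ truncation, together with the first-order treatment of $\mathfrak{U}_{m}$, is precisely what makes the bound ``approximate'' rather than exact.
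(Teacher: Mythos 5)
Your proposal is correct and follows essentially the same route as the paper's proof: the square-plus-phase-mismatch split is just a regrouping of the paper's $T_1+T_2$ decomposition, and you use the same fourth-order cosine truncation, triangle-inequality bound on $\left\|\mathbf{v}_{\ell+(k-1)L}-\widehat{\mathbf{p}}-\Delta\mathbf{p}\right\|_2^{-\frac{\alpha}{2}}$, second-order Taylor expansion yielding $\frac{1}{2}\mathbf{G}_{\alpha}-\|\cdot\|_2^{-\frac{\alpha}{2}}\mathbf{G}_{\frac{\alpha}{2}}$, power mean inequality on the quartic terms, and SDR relaxation, with the Hessian bookkeeping and the identification of which steps are exact bounds versus approximations matching the paper.
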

\begin{proof}
The proof is given in Appendix A.
\end{proof}

After obtaining $\Omega^{\mathrm{Upp}}_{\mathrm{max}}$ from (\ref{Omega_upp_max}), according to (\ref{CSI_Error_Norm_Expansion}), we have
\begin{equation}\label{CSI_Error_Bound_Upp}
\begin{split}
\epsilon_{\mathrm{RU}}^{\mathrm{LoS}}
\approx \sqrt{\zeta_0 \left(\frac{1}{d_0}\right)^{-\alpha} \Omega^{\mathrm{Upp}}_{\mathrm{max}}}.
\end{split}
\end{equation}

\vspace*{-15pt}



\subsection{Determination of $\epsilon_{\Delta \mathbf{h}_{\mathrm{RU}}}$}

After deriving $\epsilon_{\mathrm{RU}}^{\mathrm{LoS}}$, based on (\ref{First_Expansion_CSI_Error_Norm}), we consequently obtain
\begin{equation}\label{Finally_Derived_CSIerrorbound}
\begin{split}
\epsilon_{\Delta \mathbf{h}_{\mathrm{RU}}}=&\sqrt{\frac{\kappa_R}{1+\kappa_R}}\epsilon_{\mathrm{RU}}^{\mathrm{LoS}} + \left(1-\sqrt{\frac{\kappa_R}{1+\kappa_R}}\right)\left\|\widehat{\mathbf{h}}_{\mathrm{RU}}^{\mathrm{LoS}} \right\|_2 \\
&+ \sqrt{\frac{1}{1+\kappa_R}} \delta_{\mathrm{RU}}^{\mathrm{NLoS}},
\end{split}
\end{equation}
which completes the derivation of the CSI error bound.


\section{Robust Transmit and Passive Beamforming}

Since $\epsilon_{\Delta \mathbf{h}_{\mathrm{RU}}}$ has been derived and the constraint of (\ref{Initial_Problem}c) has been determined, we are now ready to solve problem (\ref{Initial_Problem}), by proposing a robust beamforming optimization approach using the CSI acquired completely from the location information.

According to (\ref{Initial_Problem}a) and (\ref{Initial_Problem}b), the optimal transmit beamforming vector, denoted by $\overline{\mathbf{w}}$, is readily formed {\color{black} by
\begin{equation}\label{Opt_w}
\overline{\mathbf{w}}=
\frac{\mathbf{H}_{\mathrm{BR}}^{\mathrm{H}} \mathbf{\Theta}^{\mathrm{H}} (\widehat{\mathbf{h}}_{\mathrm{RU}}^{\mathrm{LoS}}+\Delta\mathbf{h}_{\mathrm{RU}}) + \mathbf{h}_{\mathrm{BU}}}
{\|(\widehat{\mathbf{h}}_{\mathrm{RU}}^{\mathrm{LoS}}+\Delta\mathbf{h}_{\mathrm{RU}})^{\mathrm{H}} \mathbf{\Theta}\mathbf{H}_{\mathrm{BR}} + \mathbf{h}_{\mathrm{BU}}^{\mathrm{H}}\|_2}.
\end{equation}
Note that in (\ref{Opt_w}), $\mathbf{\Theta}$, $\Delta\mathbf{h}_{\mathrm{RU}}$ and $\mathbf{h}_{\mathrm{BU}}$ are currently all undetermined variables. Hence, in order to settle $\overline{\mathbf{w}}$, the optimal $\mathbf{\Theta}$ as well as the worst-case $\Delta\mathbf{h}_{\mathrm{RU}}$ and $\mathbf{h}_{\mathrm{BU}}$ should be determined first}.
By substituting (\ref{Opt_w}) into (\ref{Initial_Problem}a) and (\ref{Initial_Problem}b), problem (\ref{Initial_Problem}) is recast {\color{black}as
\begin{subequations}\label{P2-0}
\begin{align}
\mathop{\max}\limits_{\mathbf{\Theta}}\mathop{\min}\limits_{\Delta\mathbf{h}_{\mathrm{RU}},  \mathbf{h}_{\mathrm{BU}}}&
\ \left\|(\widehat{\mathbf{h}}_{\mathrm{RU}}^{\mathrm{LoS}}+\Delta\mathbf{h}_{\mathrm{RU}})^{\mathrm{H}} \mathbf{\Theta}\mathbf{H}_{\mathrm{BR}}+\mathbf{h}_{\mathrm{BU}}^{\mathrm{H}}\right\|_2,
\\&\ \|\Delta \mathbf{h}_{\mathrm{RU}}\|_2\leq \epsilon_{\Delta \mathbf{h}_{\mathrm{RU}}},
\\&\ \|\mathbf{h}_{\mathrm{BU}}\|_2=\delta_{\mathrm{BU}},
\\&\ |[\mathbf{\Theta}]_{(i,i)}|=\beta,\ \ i=1,2,...,N,
\\&\ \arg\left\{[\mathbf{\Theta}]_{(i,i)}\right\} \in \mathcal{S},\ i=1,2,...,N,
\end{align}
\end{subequations}
which is a max-min problem with respect to $\mathbf{\Theta}$, $\Delta\mathbf{h}_{\mathrm{RU}}$ and $\mathbf{h}_{\mathrm{BU}}$. Note that in accordance with the triangle inequality, the objective function in (\ref{P2-0}a) satisfies
\begin{equation}\label{add1}
\begin{split}
&\left\|(\widehat{\mathbf{h}}_{\mathrm{RU}}^{\mathrm{LoS}}+\Delta\mathbf{h}_{\mathrm{RU}})^{\mathrm{H}} \mathbf{\Theta}\mathbf{H}_{\mathrm{BR}}+\mathbf{h}_{\mathrm{BU}}^{\mathrm{H}}\right\|_2\\
\geq &
\left\|(\widehat{\mathbf{h}}_{\mathrm{RU}}^{\mathrm{LoS}}+\Delta\mathbf{h}_{\mathrm{RU}})^{\mathrm{H}} \mathbf{\Theta}\mathbf{H}_{\mathrm{BR}}\right\|_2 - \delta_{\mathrm{BU}},
\end{split}
\end{equation}
where the equality holds when 
\begin{equation}\label{worst-case h_BU}
\mathbf{h}_{\mathrm{BU}}=\overline{\mathbf{h}}_{\mathrm{BU}}=- \frac{\mathbf{H}_{\mathrm{BR}}^{\mathrm{H}} \mathbf{\Theta}^{\mathrm{H}}(\widehat{\mathbf{h}}_{\mathrm{RU}}^{\mathrm{LoS}}+\Delta\mathbf{h}_{\mathrm{RU}}) } { \left\|(\widehat{\mathbf{h}}_{\mathrm{RU}}^{\mathrm{LoS}}+\Delta\mathbf{h}_{\mathrm{RU}})^{\mathrm{H}} \mathbf{\Theta}\mathbf{H}_{\mathrm{BR}}\right\|_2}  \delta_{\mathrm{BU}},
\end{equation}
with $\overline{\mathbf{h}}_{\mathrm{BU}}$ representing the worst-case $\mathbf{h}_{\mathrm{BU}}$ of (\ref{add1}). 
Then, the objective of problem (\ref{P2-0}) is equivalent to $\mathop{\max}\limits_{\mathbf{\Theta}}\mathop{\min}\limits_{\Delta\mathbf{h}_{\mathrm{RU}}}
\ \|(\widehat{\mathbf{h}}_{\mathrm{RU}}^{\mathrm{LoS}}+\Delta\mathbf{h}_{\mathrm{RU}})^{\mathrm{H}} \mathbf{\Theta}\mathbf{H}_{\mathrm{BR}}\|_2$ or $\mathop{\max}\limits_{\mathbf{\Theta}}\mathop{\min}\limits_{\Delta\mathbf{h}_{\mathrm{RU}}}
\ \|(\widehat{\mathbf{h}}_{\mathrm{RU}}^{\mathrm{LoS}}+\Delta\mathbf{h}_{\mathrm{RU}})^{\mathrm{H}} \mathbf{\Theta}\mathbf{H}_{\mathrm{BR}}\|_2^2$. 
As a result, problem (\ref{P2-0}) is transformed into}
\begin{subequations}\label{P2}
\begin{align}
\mathop{\max}\limits_{\mathbf{\Theta}}\mathop{\min}\limits_{\Delta\mathbf{h}_{\mathrm{RU}}}&
\ \|(\widehat{\mathbf{h}}_{\mathrm{RU}}^{\mathrm{LoS}}+\Delta\mathbf{h}_{\mathrm{RU}})^\mathrm{H} \mathbf{\Theta}\mathbf{H}_{\mathrm{BR}}\|_2^2,
\\ \mathrm{s.t.} &\ \|\Delta \mathbf{h}_{\mathrm{RU}}\|_2\leq \epsilon_{\Delta \mathbf{h}_{\mathrm{RU}}},
\\&\ |[\mathbf{\Theta}]_{(i,i)}|=\beta,\ \ i=1,2,...,N,
\\&\ \arg\left\{[\mathbf{\Theta}]_{(i,i)}\right\} \in \mathcal{S},\ i=1,2,...,N,
\end{align}
\end{subequations}
which is now a max-min problem with respect to $\mathbf{\Theta}$ and $\Delta\mathbf{h}_{\mathrm{RU}}$.

To solve a max-min problem, existing methods such as S-procedure \cite{S-Procedure2009} or saddle point theory \cite{Saddle2013} based techniques, can be applied to eliminate the disturbance caused by the nondeterminacy of the CSI error, or to find the worst-case CSI from an equivalent min-max problem. However, in view of 1) the existence of the {\color{black}constant-modulus} constraint in (\ref{P2}c) and the argument constraint in (\ref{P2}d), and 2) the difficulty in determining the closed-form optimal $\mathbf{\Theta}$, using these conventional methods to solve problem (\ref{P2}) is a hard nut to crack. Therefore, in this section, we develop an iterative optimization approach to solve problem (\ref{P2}), and design the overall optimization algorithm.

\subsection{Proposed Iterative Optimization Approach}

{\color{black}This subsection focuses on finding the solution of problem (\ref{P2}), by firstly solving the inner minimization problem and then solving the outer maximization problem. In particular, unlike the existing iterative algorithms in e.g. \cite{G.Zhou-TSP2020} or \cite{G.Zhou-WCL2020} which provide approximate solutions of the subproblems in each iteration, here we propose a novel alternating optimization process through the instrumentality of Lagrange multiplier and matrix inverse lemma. In this process, the optimal solutions of the Lagrange dual variable in the inner minimization and the RIS phase shifts in the outer maximization are alternatively obtained, via the bisection search and the SDR/BnB, respectively. Results will finally demonstrate that the proposed approach can outperform the ones in \cite{G.Zhou-TSP2020} or \cite{G.Zhou-WCL2020}, in terms of both the eventual optimization result and the overall convergence rate.}

$\ $

\textbf{\textit{1) Solve the Inner Minimization Problem:}}

To solve problem (\ref{P2}), let us first consider the following inner minimization problem when treating $\mathbf{\Theta}$ as a fixed term:
\begin{subequations}\label{P3}
\begin{align}
\mathop{\min}\limits_{\Delta\mathbf{h}_{\mathrm{RU}}}&
\ \|(\widehat{\mathbf{h}}_{\mathrm{RU}}^{\mathrm{LoS}}+\Delta\mathbf{h}_{\mathrm{RU}})^{\mathrm{H}} \mathbf{\Theta}\mathbf{H}_{\mathrm{BR}}\|_2^2,
\\ \mathrm{s.t.} &\ \|\Delta \mathbf{h}_{\mathrm{RU}}\|_2\leq \epsilon_{\Delta \mathbf{h}_{\mathrm{RU}}}.
\end{align}
\end{subequations}
Here, we can rewrite constraint (\ref{P3}b) as
\begin{equation}
\Delta \mathbf{h}_{\mathrm{RU}}^{\mathrm{H}} \Delta \mathbf{h}_{\mathrm{RU}} - \epsilon_{\Delta \mathbf{h}_{\mathrm{RU}}}^2 \leq 0,
\end{equation}
so that problem (\ref{P3}) is a standard QCQP problem with respect to $\Delta \mathbf{h}_{\mathrm{RU}}$. To acquire the optimal solution of $\Delta \mathbf{h}_{\mathrm{RU}}$ {\color{black}(e.g. the worst-case $\Delta \mathbf{h}_{\mathrm{RU}}$)}, we construct the Lagrange function:
\begin{equation}\label{Lagrange}
\begin{split}
\mathcal{L}(\Delta \mathbf{h}_{\mathrm{RU}},\mu)
=\ &\|(\widehat{\mathbf{h}}_{\mathrm{RU}}^{\mathrm{LoS}}+\Delta\mathbf{h}_{\mathrm{RU}})^{\mathrm{H}} \mathbf{\Theta}\mathbf{H}_{\mathrm{BR}}\|_2^2 \\
&+ \mu\left(\Delta \mathbf{h}_{\mathrm{RU}}^{\mathrm{H}} \Delta \mathbf{h}_{\mathrm{RU}} - \epsilon_{\Delta \mathbf{h}_{\mathrm{RU}}}^2\right)\\
=\ &\Delta \mathbf{h}_{\mathrm{RU}}^{\mathrm{H}} \left(\mathbf{\Theta}\mathbf{H}_{\mathrm{BR}}\mathbf{H}_{\mathrm{BR}}^{\mathrm{H}}\mathbf{\Theta}^{\mathrm{H}}+\mu\mathbf{I}\right)\Delta \mathbf{h}_{\mathrm{RU}}\\
&+2\mathfrak{Re}\left\{(\widehat{\mathbf{h}}_{\mathrm{RU}}^{\mathrm{LoS}})^{\mathrm{H}}\mathbf{\Theta}\mathbf{H}_{\mathrm{BR}}\mathbf{H}_{\mathrm{BR}}^{\mathrm{H}}\mathbf{\Theta}^{\mathrm{H}} \Delta \mathbf{h}_{\mathrm{RU}}\right\}\\
&+(\widehat{\mathbf{h}}_{\mathrm{RU}}^{\mathrm{LoS}})^{\mathrm{H}}\mathbf{\Theta}\mathbf{H}_{\mathrm{BR}}\mathbf{H}_{\mathrm{BR}}^{\mathrm{H}}\mathbf{\Theta}^{\mathrm{H}} \widehat{\mathbf{h}}_{\mathrm{RU}}^{\mathrm{LoS}}
- \mu \epsilon_{\Delta \mathbf{h}_{\mathrm{RU}}}^2,
\end{split}
\end{equation}
where $\mu$ is the Lagrange dual variable.

The optimal solutions of $\mu$ and $\Delta \mathbf{h}_{\mathrm{RU}}$ should satisfy the Karush-Kuhn-Tucker (KKT) conditions, listed as
\begin{subequations}\label{KKT}
\begin{align}
\frac{\partial\mathcal{L}(\Delta \mathbf{h}_{\mathrm{RU}},\mu)}{\partial\Delta \mathbf{h}_{\mathrm{RU}}}&=\bm{0},\ \ \ \ \\
\Delta \mathbf{h}_{\mathrm{RU}}^{\mathrm{H}} \Delta \mathbf{h}_{\mathrm{RU}} - \epsilon_{\Delta \mathbf{h}_{\mathrm{RU}}}^2 &\leq 0,\\
\mu &\geq 0,\\
\mu\left(\Delta \mathbf{h}_{\mathrm{RU}}^{\mathrm{H}} \Delta \mathbf{h}_{\mathrm{RU}} - \epsilon_{\Delta \mathbf{h}_{\mathrm{RU}}}^2\right) &= 0.
\end{align}
\end{subequations}

By calculating $\frac{\partial\mathcal{L}(\Delta \mathbf{h}_{\mathrm{RU}},\mu)}{\partial\Delta \mathbf{h}_{\mathrm{RU}}}=\bm{0}$, we obtain
\begin{equation}
\begin{split}
&\left(\mathbf{\Theta}\mathbf{H}_{\mathrm{BR}}\mathbf{H}_{\mathrm{BR}}^{\mathrm{H}}\mathbf{\Theta}^{\mathrm{H}}+\mu\mathbf{I}\right)^{\mathrm{T}} \Delta \mathbf{h}_{\mathrm{RU}}^* \\
&\ \ \ \ \ \ \ \ +\left(\mathbf{\Theta}\mathbf{H}_{\mathrm{BR}}\mathbf{H}_{\mathrm{BR}}^{\mathrm{H}}\mathbf{\Theta}^{\mathrm{H}} \widehat{\mathbf{h}}_{\mathrm{RU}}^{\mathrm{LoS}}\right)^*=\bm{0},
\end{split}
\end{equation}
from which we derive the closed-form worst-case $\Delta \mathbf{h}_{\mathrm{RU}}$ with respect to $\mu$ and $\mathbf{\Theta}$, denoted by $\overline{\Delta \mathbf{h}}_{\mathrm{RU}}(\mu,\mathbf{\Theta})$, as
\begin{equation}\label{Delta_h_Opt}
\begin{split}
\overline{\Delta \mathbf{h}}_{\mathrm{RU}}(\mu,\mathbf{\Theta})=-&\left(\mathbf{\Theta}\mathbf{H}_{\mathrm{BR}}\mathbf{H}_{\mathrm{BR}}^{\mathrm{H}}\mathbf{\Theta}^{\mathrm{H}}+\mu\mathbf{I}\right)^{-1}\\
&\times
\mathbf{\Theta}\mathbf{H}_{\mathrm{BR}}\mathbf{H}_{\mathrm{BR}}^{\mathrm{H}}\mathbf{\Theta}^{\mathrm{H}} \widehat{\mathbf{h}}_{\mathrm{RU}}^{\mathrm{LoS}}.
\end{split}
\end{equation}

According to (\ref{KKT}d), the optimal solutions of $\Delta \mathbf{h}_{\mathrm{RU}}$ and $\mu$ should either satisfy $\mu=0$ or $\Delta \mathbf{h}_{\mathrm{RU}}^{\mathrm{H}} \Delta \mathbf{h}_{\mathrm{RU}} - \epsilon_{\Delta \mathbf{h}_{\mathrm{RU}}}^2 = 0$. If $\mu=0$, $\overline{\Delta \mathbf{h}}_{\mathrm{RU}}(\mu,\mathbf{\Theta})$ in (\ref{Delta_h_Opt}) degenerates into $\overline{\Delta \mathbf{h}}_{\mathrm{RU}}(\mu,\mathbf{\Theta})=\overline{\Delta \mathbf{h}}_{\mathrm{RU}}(0,\mathbf{\Theta})=-\widehat{\mathbf{h}}_{\mathrm{RU}}^{\mathrm{LoS}}$, which makes the objective function in (\ref{P3}a) reduce to zero. In this case, the optimization process fails to find {\color{black}the} optimal $\mathbf{\Theta}$. Therefore, the optimization process is feasible only if $\mu>0$ and $\Delta \mathbf{h}_{\mathrm{RU}}^{\mathrm{H}} \Delta \mathbf{h}_{\mathrm{RU}} - \epsilon_{\Delta \mathbf{h}_{\mathrm{RU}}}^2 = 0$. 

{\color{black}By substituting $\overline{\Delta \mathbf{h}}_{\mathrm{RU}}(\mu,\mathbf{\Theta})$ into $\Delta \mathbf{h}_{\mathrm{RU}}^{\mathrm{H}} \Delta \mathbf{h}_{\mathrm{RU}} = \epsilon_{\Delta \mathbf{h}_{\mathrm{RU}}}^2$ and $\|(\widehat{\mathbf{h}}_{\mathrm{RU}}^{\mathrm{LoS}}+\Delta\mathbf{h}_{\mathrm{RU}})^{\mathrm{H}} \mathbf{\Theta}\mathbf{H}_{\mathrm{BR}}\|_2^2$, respectively, we obtain (\ref{bisection}) and (\ref{mathcal_F}) on the top of the next page. 
\begin{figure*}[!t]
\normalsize
\begin{scriptsize}
\begin{equation}\label{bisection}
\begin{split}
\mathcal{C}(\mu,\mathbf{\Theta})=
(\widehat{\mathbf{h}}_{\mathrm{RU}}^{\mathrm{LoS}})^{\mathrm{H}}\mathbf{\Theta}\mathbf{H}_{\mathrm{BR}}\mathbf{H}_{\mathrm{BR}}^{\mathrm{H}}\mathbf{\Theta}^{\mathrm{H}}
\left(\mathbf{\Theta}\mathbf{H}_{\mathrm{BR}}\mathbf{H}_{\mathrm{BR}}^{\mathrm{H}}\mathbf{\Theta}^{\mathrm{H}}+\mu\mathbf{I}\right)^{-2}
\mathbf{\Theta}\mathbf{H}_{\mathrm{BR}}\mathbf{H}_{\mathrm{BR}}^{\mathrm{H}}\mathbf{\Theta}^{\mathrm{H}} \widehat{\mathbf{h}}_{\mathrm{RU}}^{\mathrm{LoS}}
-\epsilon_{\Delta \mathbf{h}_{\mathrm{RU}}}^2
=0.
\end{split}
\end{equation}
\end{scriptsize}
\vspace*{-15pt}
\end{figure*}
\begin{figure*}[!t]
\normalsize
\begin{scriptsize}
\begin{equation}\label{mathcal_F}
\mathcal{F}(\mu,\mathbf{\Theta})=
\left\|
\left[\widehat{\mathbf{h}}_{\mathrm{RU}}^{\mathrm{LoS}}-\left(\mathbf{\Theta}\mathbf{H}_{\mathrm{BR}}\mathbf{H}_{\mathrm{BR}}^{\mathrm{H}}\mathbf{\Theta}^{\mathrm{H}}+\mu\mathbf{I}\right)^{-1}\mathbf{\Theta}\mathbf{H}_{\mathrm{BR}}\mathbf{H}_{\mathrm{BR}}^{\mathrm{H}}\mathbf{\Theta}^{\mathrm{H}} \widehat{\mathbf{h}}_{\mathrm{RU}}^{\mathrm{LoS}}\right]^{\mathrm{H}} \mathbf{\Theta}\mathbf{H}_{\mathrm{BR}}
\right\|_2^2.
\end{equation}
\end{scriptsize}
\hrulefill
\vspace*{-15pt}
\end{figure*}
It is noted that if (\ref{bisection}) is satisfied, (\ref{mathcal_F}) is the lower bound of $\|(\widehat{\mathbf{h}}_{\mathrm{RU}}^{\mathrm{LoS}}+\Delta\mathbf{h}_{\mathrm{RU}})^{\mathrm{H}} \mathbf{\Theta}\mathbf{H}_{\mathrm{BR}}\|_2^2$ for any $\|\Delta \mathbf{h}_{\mathrm{RU}}\|_2\leq \epsilon_{\Delta \mathbf{h}_{\mathrm{RU}}}$, and the minimum objective value of the inner minimization problem is achieved. Therefore, based on (\ref{bisection}) and (\ref{mathcal_F}), problem (\ref{P2}) is transformed into
\begin{subequations}\label{P4}
\begin{align}
\mathop{\max}\limits_{\mu,\mathbf{\Theta}}&
\ \mathcal{F}(\mu,\mathbf{\Theta}),
\\ \mathrm{s.t.} &\ \mathcal{C}(\mu,\mathbf{\Theta})= 0,
\\&\ |[\mathbf{\Theta}]_{(i,i)}|=\beta,\ \ i=1,2,...,N,
\\&\ \arg\left\{[\mathbf{\Theta}]_{(i,i)}\right\} \in \mathcal{S},\ i=1,2,...,N,
\end{align}
\end{subequations}
where $\mathcal{C}(\mu,\mathbf{\Theta})$ and $\mathcal{F}(\mu,\mathbf{\Theta})$ are specified in (\ref{bisection}) and (\ref{mathcal_F})}.

$\ $

\textbf{\textit{2) Solve the Outer Maximization Problem (\ref{P4}):}}

Subsequently, we focus on solving problem (\ref{P4}). It is {\color{black}remarkable} that
problem (\ref{P4}) is a complicated non-convex problem containing the inverse of $\mathbf{\Theta}\mathbf{H}_{\mathrm{BR}}\mathbf{H}_{\mathrm{BR}}^{\mathrm{H}}\mathbf{\Theta}^{\mathrm{H}}+\mu\mathbf{I}$, which makes problem (\ref{P4}) difficult to be solved efficiently. To deal with $\left(\mathbf{\Theta}\mathbf{H}_{\mathrm{BR}}\mathbf{H}_{\mathrm{BR}}^{\mathrm{H}}\mathbf{\Theta}^{\mathrm{H}}+\mu\mathbf{I}\right)^{-1}$, one feasible choice is to define a new variable as $\mathbf{Q}=\left(\mathbf{\Theta}\mathbf{H}_{\mathrm{BR}}\mathbf{H}_{\mathrm{BR}}^{\mathrm{H}}\mathbf{\Theta}^{\mathrm{H}}+\mu\mathbf{I}\right)^{-1}$, and optimize $\mathbf{\Theta}$ and $\mathbf{Q}$ iteratively until the optimization process converges \cite{Rui Wang}. However, this choice has one major drawback, i.e. additional iterations are included in the optimization process, leading to an increase of the overall computational complexity. In view of this defect, we focus on pursuing another effective way to simplify the objective function and the constraints by adopting matrix inversion lemma in the following Proposition 1.

\begin{proposition}
With the aid of the matrix inversion lemma, problem (\ref{P4}) can be transformed {\color{black}into 
\begin{subequations}\label{P5}
\begin{align}
\mathop{\max}\limits_{\mu,\bm{\theta}}&
\ \bm{\theta}^{\mathrm{T}} \mathbf{\Upsilon}(\mu) \bm{\theta}^*,
\\ \mathrm{s.t.} &\ \bm{\theta}^{\mathrm{T}}\mathbf{\Gamma}(\mu)\bm{\theta}^*=\epsilon_{\Delta \mathbf{h}_{\mathrm{RU}}}^2,
\\&\ |[\bm{\theta}]_{i}|=\beta,\ \ i=1,2,...,N,
\\&\ \arg\left\{[\bm{\theta}]_{i}\right\} \in \mathcal{S},\ i=1,2,...,N,
\end{align}
\end{subequations}
where $\bm{\theta}=\beta\left(e^{j\theta_1},e^{j\theta_2},...,e^{j\theta_N}\right)^{\mathrm{T}}$ is a column vector containing the diagonal elements of $\mathbf{\Theta}$; $\mathbf{\Upsilon}(\mu)$ and $\mathbf{\Gamma}(\mu)$, detailed in (\ref{mathbf_Y}) and (\ref{Gamma_mu}) in Appendix B, are positive semidefinite matrices with respect to $\mu$ and are independent of $\bm{\theta}$}.
\end{proposition}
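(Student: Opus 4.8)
The plan is to rewrite the objective $\mathcal{F}(\mu,\mathbf{\Theta})$ of (\ref{mathcal_F}) and the equality constraint $\mathcal{C}(\mu,\mathbf{\Theta})=0$ of (\ref{bisection}) as \emph{quadratic forms in} $\bm{\theta}$, by peeling the diagonal matrix $\mathbf{\Theta}=\mathrm{diag}(\bm{\theta})$ off the outermost positions and collapsing the $N\times N$ inverse $(\mathbf{\Theta}\mathbf{H}_{\mathrm{BR}}\mathbf{H}_{\mathrm{BR}}^{\mathrm{H}}\mathbf{\Theta}^{\mathrm{H}}+\mu\mathbf{I})^{-1}$ into an $M\times M$ inverse that no longer involves $\bm{\theta}$. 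Throughout, abbreviate $\mathbf{A}=\mathbf{\Theta}\mathbf{H}_{\mathrm{BR}}\mathbf{H}_{\mathrm{BR}}^{\mathrm{H}}\mathbf{\Theta}^{\mathrm{H}}\succeq\bm{0}$ and $\widehat{\mathbf{h}}=\widehat{\mathbf{h}}_{\mathrm{RU}}^{\mathrm{LoS}}$; recall from the discussion preceding (\ref{P4}) that a feasible point must have $\mu>0$, so $\mathbf{A}+\mu\mathbf{I}$ is invertible.

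First I would telescope the residual inside (\ref{mathcal_F}): since $\widehat{\mathbf{h}}-(\mathbf{A}+\mu\mathbf{I})^{-1}\mathbf{A}\widehat{\mathbf{h}}=(\mathbf{A}+\mu\mathbf{I})^{-1}\big[(\mathbf{A}+\mu\mathbf{I})-\mathbf{A}\big]\widehat{\mathbf{h}}=\mu(\mathbf{A}+\mu\mathbf{I})^{-1}\widehat{\mathbf{h}}$, one gets $\mathcal{F}(\mu,\mathbf{\Theta})=\mu^{2}\widehat{\mathbf{h}}^{\mathrm{H}}(\mathbf{A}+\mu\mathbf{I})^{-1}\mathbf{A}(\mathbf{A}+\mu\mathbf{I})^{-1}\widehat{\mathbf{h}}$, while (\ref{bisection}) already reads $\mathcal{C}(\mu,\mathbf{\Theta})+\epsilon_{\Delta\mathbf{h}_{\mathrm{RU}}}^{2}=\widehat{\mathbf{h}}^{\mathrm{H}}\mathbf{A}(\mathbf{A}+\mu\mathbf{I})^{-2}\mathbf{A}\widehat{\mathbf{h}}$. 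Next I apply the matrix inversion (Woodbury) lemma to $\mathbf{A}+\mu\mathbf{I}=\mu\mathbf{I}+(\mathbf{\Theta}\mathbf{H}_{\mathrm{BR}})(\mathbf{H}_{\mathrm{BR}}^{\mathrm{H}}\mathbf{\Theta}^{\mathrm{H}})$, obtaining $(\mathbf{A}+\mu\mathbf{I})^{-1}=\tfrac{1}{\mu}\mathbf{I}-\tfrac{1}{\mu}\mathbf{\Theta}\mathbf{H}_{\mathrm{BR}}\big(\mu\mathbf{I}_{M}+\mathbf{H}_{\mathrm{BR}}^{\mathrm{H}}\mathbf{\Theta}^{\mathrm{H}}\mathbf{\Theta}\mathbf{H}_{\mathrm{BR}}\big)^{-1}\mathbf{H}_{\mathrm{BR}}^{\mathrm{H}}\mathbf{\Theta}^{\mathrm{H}}$. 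The crucial observation is that the constant-modulus constraint (\ref{P4}c) gives $\mathbf{\Theta}^{\mathrm{H}}\mathbf{\Theta}=\beta^{2}\mathbf{I}_{N}$, hence $\mathbf{H}_{\mathrm{BR}}^{\mathrm{H}}\mathbf{\Theta}^{\mathrm{H}}\mathbf{\Theta}\mathbf{H}_{\mathrm{BR}}=\beta^{2}\mathbf{H}_{\mathrm{BR}}^{\mathrm{H}}\mathbf{H}_{\mathrm{BR}}$, so the inner $M\times M$ matrix $\mathbf{D}(\mu):=\big(\mu\mathbf{I}_{M}+\beta^{2}\mathbf{H}_{\mathrm{BR}}^{\mathrm{H}}\mathbf{H}_{\mathrm{BR}}\big)^{-1}$ does not depend on $\bm{\theta}$ and is positive definite for $\mu>0$.

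Using $\mathbf{\Theta}^{\mathrm{H}}\mathbf{\Theta}=\beta^{2}\mathbf{I}_{N}$ together with $\mathbf{D}(\mu)^{-1}=\mu\mathbf{I}_{M}+\beta^{2}\mathbf{H}_{\mathrm{BR}}^{\mathrm{H}}\mathbf{H}_{\mathrm{BR}}$, the Woodbury form then collapses to $(\mathbf{A}+\mu\mathbf{I})^{-1}\mathbf{\Theta}\mathbf{H}_{\mathrm{BR}}=\mathbf{\Theta}\mathbf{H}_{\mathrm{BR}}\mathbf{D}(\mu)$ and $\mathbf{A}(\mathbf{A}+\mu\mathbf{I})^{-1}=(\mathbf{A}+\mu\mathbf{I})^{-1}\mathbf{A}=\mathbf{\Theta}\mathbf{H}_{\mathrm{BR}}\mathbf{D}(\mu)\mathbf{H}_{\mathrm{BR}}^{\mathrm{H}}\mathbf{\Theta}^{\mathrm{H}}$, whence $\mathcal{F}(\mu,\mathbf{\Theta})=\mu^{2}\widehat{\mathbf{h}}^{\mathrm{H}}\mathbf{\Theta}\mathbf{H}_{\mathrm{BR}}\mathbf{D}(\mu)^{2}\mathbf{H}_{\mathrm{BR}}^{\mathrm{H}}\mathbf{\Theta}^{\mathrm{H}}\widehat{\mathbf{h}}$ and $\mathcal{C}(\mu,\mathbf{\Theta})+\epsilon_{\Delta\mathbf{h}_{\mathrm{RU}}}^{2}=\beta^{2}\widehat{\mathbf{h}}^{\mathrm{H}}\mathbf{\Theta}\mathbf{H}_{\mathrm{BR}}\mathbf{D}(\mu)\mathbf{H}_{\mathrm{BR}}^{\mathrm{H}}\mathbf{H}_{\mathrm{BR}}\mathbf{D}(\mu)\mathbf{H}_{\mathrm{BR}}^{\mathrm{H}}\mathbf{\Theta}^{\mathrm{H}}\widehat{\mathbf{h}}$. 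Finally, since $\mathbf{\Theta}=\mathrm{diag}(\bm{\theta})$ is diagonal we have $\mathbf{\Theta}^{\mathrm{H}}\widehat{\mathbf{h}}=\mathrm{diag}(\widehat{\mathbf{h}})\,\bm{\theta}^{*}$, so both quantities become $\bm{\theta}^{\mathrm{T}}\mathbf{\Upsilon}(\mu)\bm{\theta}^{*}$ and $\bm{\theta}^{\mathrm{T}}\mathbf{\Gamma}(\mu)\bm{\theta}^{*}$ with $\mathbf{\Upsilon}(\mu)=\mu^{2}\mathrm{diag}(\widehat{\mathbf{h}})^{\mathrm{H}}\mathbf{H}_{\mathrm{BR}}\mathbf{D}(\mu)^{2}\mathbf{H}_{\mathrm{BR}}^{\mathrm{H}}\mathrm{diag}(\widehat{\mathbf{h}})$ and $\mathbf{\Gamma}(\mu)=\beta^{2}\mathrm{diag}(\widehat{\mathbf{h}})^{\mathrm{H}}\mathbf{H}_{\mathrm{BR}}\mathbf{D}(\mu)\mathbf{H}_{\mathrm{BR}}^{\mathrm{H}}\mathbf{H}_{\mathrm{BR}}\mathbf{D}(\mu)\mathbf{H}_{\mathrm{BR}}^{\mathrm{H}}\mathrm{diag}(\widehat{\mathbf{h}})$; writing $\mathbf{Y}=\mathbf{H}_{\mathrm{BR}}^{\mathrm{H}}\mathrm{diag}(\widehat{\mathbf{h}})$, these are the congruence transforms $\mu^{2}\mathbf{Y}^{\mathrm{H}}\mathbf{D}(\mu)^{2}\mathbf{Y}$ and $\beta^{2}\mathbf{Y}^{\mathrm{H}}\big(\mathbf{D}(\mu)\mathbf{H}_{\mathrm{BR}}^{\mathrm{H}}\mathbf{H}_{\mathrm{BR}}\mathbf{D}(\mu)\big)\mathbf{Y}$ of positive semidefinite matrices, hence $\mathbf{\Upsilon}(\mu),\mathbf{\Gamma}(\mu)\succeq\bm{0}$ and depend on $\mu$ alone; constraints (\ref{P4}c)--(\ref{P4}d) pass to (\ref{P5}c)--(\ref{P5}d) verbatim under $[\bm{\theta}]_{i}=\beta e^{j\theta_{i}}$, so (\ref{P4}) indeed becomes (\ref{P5}).

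I expect the main obstacle to be the bookkeeping in the collapsing step: the cancellations that wipe out every $\bm{\theta}$-dependent $M\times M$ factor and leave only $\mathbf{D}(\mu)$ must chain the Woodbury identity, the relation $\mathbf{D}(\mu)^{-1}=\mu\mathbf{I}_{M}+\beta^{2}\mathbf{H}_{\mathrm{BR}}^{\mathrm{H}}\mathbf{H}_{\mathrm{BR}}$, and $\mathbf{\Theta}^{\mathrm{H}}\mathbf{\Theta}=\beta^{2}\mathbf{I}_{N}$ in exactly the right order, and a stray conjugate or transpose there would break both the collapse and the claimed positive semidefiniteness. A minor additional point is to verify that $\mathbf{A}+\mu\mathbf{I}$ and $\mu\mathbf{I}_{M}+\beta^{2}\mathbf{H}_{\mathrm{BR}}^{\mathrm{H}}\mathbf{H}_{\mathrm{BR}}$ are invertible on the feasible set, which is exactly where the fact $\mu>0$ is used.
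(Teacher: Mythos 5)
Your proposal is correct and follows essentially the same route as the paper's Appendix B: apply the matrix inversion lemma to $(\mathbf{\Theta}\mathbf{H}_{\mathrm{BR}}\mathbf{H}_{\mathrm{BR}}^{\mathrm{H}}\mathbf{\Theta}^{\mathrm{H}}+\mu\mathbf{I})^{-1}$, exploit $\mathbf{\Theta}^{\mathrm{H}}\mathbf{\Theta}=\beta^{2}\mathbf{I}$ so the inner inverse becomes $\bm{\theta}$-independent, and then use the diagonal identities $(\widehat{\mathbf{h}}_{\mathrm{RU}}^{\mathrm{LoS}})^{\mathrm{H}}\mathbf{\Theta}=\bm{\theta}^{\mathrm{T}}\mathrm{diag}\{(\widehat{\mathbf{h}}_{\mathrm{RU}}^{\mathrm{LoS}})^{\mathrm{H}}\}$ and $\mathbf{\Theta}^{\mathrm{H}}\widehat{\mathbf{h}}_{\mathrm{RU}}^{\mathrm{LoS}}=\mathrm{diag}\{\widehat{\mathbf{h}}_{\mathrm{RU}}^{\mathrm{LoS}}\}\bm{\theta}^{*}$ to obtain quadratic forms in $\bm{\theta}$. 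Your telescoping of the residual to $\mu(\mathbf{A}+\mu\mathbf{I})^{-1}\widehat{\mathbf{h}}$ and the push-through identity $(\mathbf{A}+\mu\mathbf{I})^{-1}\mathbf{\Theta}\mathbf{H}_{\mathrm{BR}}=\mathbf{\Theta}\mathbf{H}_{\mathrm{BR}}\mathbf{D}(\mu)$ merely streamline the paper's longer term-by-term expansion, and your closed forms for $\mathbf{\Upsilon}(\mu)$ and $\mathbf{\Gamma}(\mu)$ coincide with (\ref{mathbf_Y})--(\ref{Z_mu}) and (\ref{Gamma_mu})--(\ref{mathbf_X}) after the same identity is applied.
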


\begin{proof}
The proof is given in Appendix B.
\end{proof}

{\color{black}Problem (\ref{P5}) is still non-convex and difficult to be solved directly, due to the coupling of $\mu$ and $\bm{\theta}$. In order to decouple the two variables and solve (\ref{P5}) efficiently, here we propose a relaxed alternating optimization process (RAOP) to optimize $\mu$ and $\bm{\theta}$ iteratively. In specific, with a given $\bm{\theta}$, we find the optimal $\mu$ by solving the following problem:
\begin{subequations}\label{Optimize_mu}
\begin{align}
\mathop{\mathrm{find}}\limits_{\mu}&
\ \mu,
\\ \mathrm{s.t.} &\ \bm{\theta}^{\mathrm{T}}\mathbf{\Gamma}(\mu)\bm{\theta}^*=\epsilon_{\Delta \mathbf{h}_{\mathrm{RU}}}^2,
\end{align}
\end{subequations}
which can be solved by the well-known bisection search \cite{Bisection Ref}.
After the solution of $\mu$ is obtained, we optimize $\bm{\theta}$ by solving the following relaxed problem:
\begin{subequations}\label{Optimize_theta}
\begin{align}
\mathop{\max}\limits_{\bm{\theta}}&
\ \bm{\theta}^{\mathrm{T}} \mathbf{\Upsilon}(\mu) \bm{\theta}^*,
\\ \mathrm{s.t.} &\ \bm{\theta}^{\mathrm{T}}\mathbf{\Gamma}(\mu)\bm{\theta}^*\geq\epsilon_{\Delta \mathbf{h}_{\mathrm{RU}}}^2,
\\&\ |[\bm{\theta}]_{i}|=\beta,\ \ i=1,2,...,N,
\\&\ \arg\left\{[\bm{\theta}]_{i}\right\} \in \mathcal{S},\ i=1,2,...,N.
\end{align}
\end{subequations}
The above procedure repeats until some convergence criteria (e.g. the difference between the objective values of two adjacent iterations becomes smaller than a threshold) are met.
In (\ref{Optimize_theta}), the constraint of $\bm{\theta}^{\mathrm{T}}\mathbf{\Gamma}(\mu)\bm{\theta}^*=\epsilon_{\Delta \mathbf{h}_{\mathrm{RU}}}^2$ is relaxed into $\bm{\theta}^{\mathrm{T}}\mathbf{\Gamma}(\mu)\bm{\theta}^*\geq\epsilon_{\Delta \mathbf{h}_{\mathrm{RU}}}^2$, in order to guarantee the convergence of the alternating optimization, by ensuring that the objective value of $\bm{\theta}^{\mathrm{T}} \mathbf{\Upsilon}(\mu) \bm{\theta}^*$ is monotonically increasing during the iteration process (For more details, please refer to the convergence analysis in Section IV-C). The relaxation will become tight when the alternating optimization converges, owing to the equality of (44b).

Problem (\ref{Optimize_theta})} can be solved by various existing techniques. Here, we briefly introduce two state-of-the-art algorithms for solving problem (\ref{Optimize_theta}), which are semidefinite relaxation (SDR) algorithm and branch-and-bound (BnB) algorithm. 
\footnote{\color{black} Problem (\ref{Optimize_theta}) is a non-convex constant-modulus and argument constrained QCQP problem, which is NP-hard in general. To solve such a problem, this paper introduces SDR since it has been validated to be a powerful, computationally efficient technique that can transform non-convex objectives/constraints into convex trace forms, and can provide accurate or sometimes near-optimal solutions \cite{Z.-Q. Luo-SPM2010}.
It is noted that the SDR simply drops the argument constraint, so that it only performs well when $\mathcal{S}=[0,2\pi]$. Therefore, this paper also introduces the recently developed BnB to handle other argument sets and solve problem (\ref{Optimize_theta}) globally \cite{BnB1}.}
 In specific, the BnB is slightly modified to deal with our problem. 

1) \textbf{SDR algorithm}: When $\mathcal{S}=[0,2\pi]$, problem (\ref{Optimize_theta}) can be solved by the SDR algorithm, which has been widely utilized in wireless communication and signal processing fields to transform non-convex problems into semidefinite programming (SDP) problems. Considering problem (\ref{Optimize_theta}) as an example, the main idea of the SDR is to introduce a new variable $\mathbf{C}=\bm{\theta}^*\bm{\theta}^{\mathrm{T}}$, and transform problem (\ref{Optimize_theta}) by dropping the rank-one constraint {\color{black}into
\begin{subequations}\label{P6-SDR}
\begin{align}
\mathop{\max}\limits_{\mathbf{C}\succeq\mathbf{0}}&
\ tr\left(\mathbf{\Upsilon}(\mu)\mathbf{C}\right),
\\ \mathrm{s.t.} &\ tr\left(\mathbf{\Gamma}(\mu)\mathbf{C}\right)\geq\epsilon_{\Delta \mathbf{h}_{\mathrm{RU}}}^2,
\\&\ tr(\mathbf{E}_{i}\mathbf{C})=\beta^2,\ \ i=1,2,...,N,
\end{align}
\end{subequations}
where} the matrix $\mathbf{E}_{i}$ satisfies
\begin{equation}
\left[\mathbf{E}_i\right]_{(m,n)}=\left\{\begin{matrix}1,\ \ \ m=n=i;\\0,\ \ \ \ \mathrm{otherwise.}\\\end{matrix}\right.
\end{equation}

{\color{black}Problem (\ref{P6-SDR}) can be solved by existing methods such as interior-point method. To address the omitted rank-one constraint after solving problem (\ref{P6-SDR}) and obtaining the solution of $\mathbf{C}$, denoted by $\overline{\mathbf{C}}$, one can perform eigenvalue decomposition for $\overline{\mathbf{C}}$ to acquire the optimal $\bm{\theta}$ if $\overline{\mathbf{C}}$ is rank-one \cite{Z.-Q. Luo-SPM2010}, or use some convex relaxation techniques for phase-only beamforming, e.g., the SDP concave-convex procedure \cite{rank-one solution}, to iteratively acquire an approximate rank-one solution if the SDR for problem (\ref{P6-SDR}) is not intrinsically guaranteed to be tight.}
{\color{black}Fortunately, it can be proved that the SDR for problem (\ref{P6-SDR}) is sometimes tight. For instance, regarding problem (\ref{P6-SDR}), here we theoretically derive two specific conditions making $rank(\overline{\mathbf{C}})=1$ hold in the following Proposition 2:
\begin{proposition}
$rank(\overline{\mathbf{C}})=1$ holds if the conditions below are concurrently satisfied:

Condition 1: $\mathbf{\Upsilon}(\mu)$ is rank-one, and each element in its eigenvector is non-zero.

Condition 2: The solution $\overline{\mathbf{C}}$ satisfies $tr\left(\mathbf{\Gamma}(\mu)\overline{\mathbf{C}}\right)>\epsilon_{\Delta \mathbf{h}_{\mathrm{RU}}}^2$.
\end{proposition}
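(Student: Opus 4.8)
The plan is to use the two conditions to collapse the semidefinite program (\ref{P6-SDR}) to one whose \emph{unique} optimizer is visibly rank-one, and then to argue that any optimal $\overline{\mathbf{C}}$ must be that optimizer. \emph{Step 1 (discard the inactive constraint).} Since Condition 2 gives $tr(\mathbf{\Gamma}(\mu)\overline{\mathbf{C}})>\epsilon_{\Delta \mathbf{h}_{\mathrm{RU}}}^2$ strictly, I would first show that $\overline{\mathbf{C}}$ is also optimal for the reduced problem
\[
\max_{\mathbf{C}\succeq\mathbf{0}}\ tr(\mathbf{\Upsilon}(\mu)\mathbf{C})\quad \mathrm{s.t.}\quad tr(\mathbf{E}_i\mathbf{C})=\beta^2,\ i=1,\dots,N,
\]
obtained from (\ref{P6-SDR}) by deleting the $\mathbf{\Gamma}$-constraint. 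This is a short convexity argument: if a feasible $\mathbf{C}'$ of the reduced problem had a strictly larger objective, then for small $t>0$ the convex combination $(1-t)\overline{\mathbf{C}}+t\mathbf{C}'$ stays positive semidefinite, keeps every diagonal entry equal to $\beta^2$, and --- by continuity together with Condition 2 --- still satisfies $tr(\mathbf{\Gamma}(\mu)\,\cdot\,)\ge\epsilon_{\Delta \mathbf{h}_{\mathrm{RU}}}^2$, while strictly increasing $tr(\mathbf{\Upsilon}(\mu)\,\cdot\,)$, contradicting optimality of $\overline{\mathbf{C}}$ in (\ref{P6-SDR}).

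\emph{Step 2 (solve the reduced problem explicitly).} Invoking Condition 1, I would write $\mathbf{\Upsilon}(\mu)=\mathbf{u}\mathbf{u}^{\mathrm{H}}$ after absorbing its single positive eigenvalue into $\mathbf{u}$, so that no entry of $\mathbf{u}$ vanishes. For any feasible $\mathbf{C}$, positive semidefiniteness and $[\mathbf{C}]_{(i,i)}=\beta^2$ give $|[\mathbf{C}]_{(i,j)}|\le\beta^2$, hence
\[
tr(\mathbf{\Upsilon}(\mu)\mathbf{C})=\mathbf{u}^{\mathrm{H}}\mathbf{C}\mathbf{u}=\Big|\textstyle\sum_{i,j}[\mathbf{u}]_i^{*}[\mathbf{C}]_{(i,j)}[\mathbf{u}]_j\Big|\le\beta^2\Big(\textstyle\sum_i|[\mathbf{u}]_i|\Big)^2,
\]
using that $\mathbf{u}^{\mathrm{H}}\mathbf{C}\mathbf{u}\ge0$. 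The equality analysis is the heart of the proof: equality forces $[\mathbf{u}]_i^{*}[\mathbf{C}]_{(i,j)}[\mathbf{u}]_j\ge0$ and $|[\mathbf{C}]_{(i,j)}|=\beta^2$ for \emph{every} $(i,j)$, and because every $[\mathbf{u}]_i\ne0$ these two facts pin down $[\mathbf{C}]_{(i,j)}=\beta^2[\mathbf{u}]_i[\mathbf{u}]_j^{*}/(|[\mathbf{u}]_i|\,|[\mathbf{u}]_j|)$, i.e. $\mathbf{C}=\beta^2\widetilde{\mathbf{u}}\widetilde{\mathbf{u}}^{\mathrm{H}}$ with $[\widetilde{\mathbf{u}}]_i=[\mathbf{u}]_i/|[\mathbf{u}]_i|$. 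This matrix is positive semidefinite with all diagonal entries $\beta^2$, so it is feasible and attains the bound, hence is the unique maximizer of the reduced problem. Combined with Step 1 this yields $\overline{\mathbf{C}}=\beta^2\widetilde{\mathbf{u}}\widetilde{\mathbf{u}}^{\mathrm{H}}$, which has rank one, and the optimal $\bm{\theta}$ then follows from an ordinary eigen-decomposition of $\overline{\mathbf{C}}$ with no rank-one approximation required.

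I expect the delicate point to be the rigidity of this equality analysis rather than any lengthy computation. The hypothesis that every entry of the eigenvector is non-zero is essential: if some $[\mathbf{u}]_i$ were zero, the $i$-th row and column of $\mathbf{C}$ would be invisible to the objective and constrained only by $\mathbf{C}\succeq\mathbf{0}$ and $[\mathbf{C}]_{(i,i)}=\beta^2$, which would permit higher-rank optimizers; Condition 1 removes precisely this degree of freedom, making $\widetilde{\mathbf{u}}\widetilde{\mathbf{u}}^{\mathrm{H}}$ the \emph{unique} optimum so that whichever optimal $\overline{\mathbf{C}}$ is returned must coincide with it. A subsidiary point worth noting in passing is that the optimum of (\ref{P6-SDR}) is attained, since the diagonal constraints force $tr(\mathbf{C})=N\beta^2$ and thus make the feasible set compact, so that ``the solution $\overline{\mathbf{C}}$'' appearing in Condition 2 is well defined.
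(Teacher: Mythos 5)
Your proposal is correct, but it proves the result by a genuinely different route than the paper. The paper's Appendix C works through the KKT system of problem (\ref{P6-SDR}): Condition 2 kills the multiplier of the $\mathbf{\Gamma}$-constraint by complementary slackness, the stationarity condition together with $\mathbf{M}\overline{\mathbf{C}}=\mathbf{0}$ yields $\bigl(\sum_i\lambda_{\mathrm{c},i}\mathbf{E}_i\bigr)\overline{\mathbf{C}}=\mathbf{\Upsilon}(\mu)\overline{\mathbf{C}}$, Condition 1 is then used to show the diagonal multiplier matrix is full rank, and a rank inequality gives $rank(\overline{\mathbf{C}})\le 1$. You instead avoid duality altogether: your perturbation argument (convex combination with a hypothetically better point, feasible for small $t$ because the $\mathbf{\Gamma}$-constraint is inactive) shows $\overline{\mathbf{C}}$ also solves the reduced SDP without that constraint, and your rigidity/equality analysis of $\mathbf{u}^{\mathrm{H}}\mathbf{C}\mathbf{u}\le\beta^2\bigl(\sum_i|[\mathbf{u}]_i|\bigr)^2$ pins the reduced problem's maximizer down to the unique matrix $\beta^2\widetilde{\mathbf{u}}\widetilde{\mathbf{u}}^{\mathrm{H}}$. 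What your approach buys is more than the paper's statement: you get uniqueness and the explicit closed form of $\overline{\mathbf{C}}$ (hence of $\bm{\theta}$, whose phases align with those of $\mathbf{u}$), you sidestep any constraint-qualification/strong-duality caveat implicit in invoking KKT for the SDP, and you make the role of the nonzero-entry hypothesis transparent (a zero entry would leave a row/column of $\mathbf{C}$ unconstrained by the objective). The paper's KKT route is shorter to state given that the Lagrangian machinery is already set up, and it mirrors the standard template for SDR tightness proofs, but it does not deliver uniqueness or the explicit optimizer. Your step-by-step reasoning (inactive constraint removal, the $2\times 2$ principal-minor bound $|[\mathbf{C}]_{(i,j)}|\le\beta^2$, the phase-alignment equality case, and the compactness remark ensuring the optimum is attained) is sound as written.
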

\begin{proof}
The proof is given in Appendix C.
\end{proof}

\begin{remark}
For Condition 1 in Proposition 2, according to (\ref{mathbf_Y}) and (\ref{Z_mu}), if $rank(\mathbf{H}_{\mathrm{BR}})=1$, we obtain $rank(\mathbf{\Upsilon}(\mu))=1$. Since the elements in $\mathbf{H}_{\mathrm{BR}}$ are non-zero, the elements in the eigenvector of $\mathbf{\Upsilon}(\mu)$ are non-zero as well according to (87) in Appendix D. Therefore, Condition 1 can hold when $rank(\mathbf{H}_{\mathrm{BR}})=1$, i.e. $\mathbf{H}_{\mathrm{BR}}$ is a far-field LoS channel \cite{Our-Previous}. For Condition 2 in Proposition 2, let $\overline{\mathbf{C}}_k$ and $\overline{\mathbf{C}}_{k+1}$ denote the solutions of $\mathbf{C}$ in the $k$-th iteration and the $(k+1)$-th iteration, respectively (similar definitions are given for $\mu_k$ and $\mu_{k+1}$). If the iteration process is not terminated in the $(k+1)$-th iteration, we have $tr\left(\mathbf{\Upsilon}(\mu_{k+1})\overline{\mathbf{C}}_{k}\right)\neq tr\left(\mathbf{\Upsilon}(\mu_{k+1})\overline{\mathbf{C}}_{k+1}\right)$ and $tr\left(\mathbf{\Gamma}(\mu_{k+1})\overline{\mathbf{C}}_{k}\right)\neq tr\left(\mathbf{\Gamma}(\mu_{k+1})\overline{\mathbf{C}}_{k+1}\right)$ according to the structures in (\ref{mathbf_Y}) and (\ref{Gamma_mu}). Moreover, because $tr\left(\mathbf{\Gamma}(\mu_{k+1})\overline{\mathbf{C}}_{k}\right)=\epsilon_{\Delta \mathbf{h}_{\mathrm{RU}}}^2$ holds in accordance with problem (\ref{Optimize_mu}), we obtain $tr\left(\mathbf{\Gamma}(\mu_{k+1})\overline{\mathbf{C}}_{k+1}\right)\neq \epsilon_{\Delta \mathbf{h}_{\mathrm{RU}}}^2$. Finally, because $\overline{\mathbf{C}}_{k+1}$ should be in the feasible region of (\ref{P6-SDR}b), we obtain $tr\left(\mathbf{\Gamma}(\mu_{k+1})\overline{\mathbf{C}}_{k+1}\right)> \epsilon_{\Delta \mathbf{h}_{\mathrm{RU}}}^2$. Therefore, Condition 2 can hold if the iteration process is not terminated.
\end{remark}

Although $rank(\overline{\mathbf{C}})=1$ can possibly hold in some cases, a non-zero gap between $\overline{\mathbf{C}}$ and the true optimal $\mathbf{C}$ generally exists when $rank(\mathbf{\Upsilon}(\mu))\neq 1$}. In addition, if $\mathcal{S}$ is an arbitrary argument set of $\mathcal{S}=[\ell_l,\ell_u]$ instead of $\mathcal{S}=[0,2\pi]$, the SDR fails to solve problem (\ref{Optimize_theta}).
Forasmuch as these drawbacks of SDR, the BnB has been proposed, with the details stated below.

2) \textbf{BnB algorithm}: The BnB algorithm was proposed in \cite{BnB1,BnB2} to find the global optimal solution of complex quadratic programming problems, by branching on the argument sets. Compared with the SDR, the BnB has two major advantages. First, the BnB can deal with arbitrary argument sets of $\mathcal{S}=[\ell_l,\ell_u]$ with $\ell_u-\ell_l\leq \pi$, such as $\mathcal{S}=[0,\pi]$, $\mathcal{S}=[0,\frac{\pi}{2}]$, or discrete argument values, whereas the SDR is only able to handle $\mathcal{S}=[0,2\pi]$. Second, the solution of BnB can be very close to the true optimum after sufficient times of iterations. 

Since the BnB requires $\ell_u-\ell_l\leq \pi$, when dealing with $\mathcal{S}=[\ell_l,\ell_u]$ in which $\ell_u-\ell_l> \pi$, we first divide $\mathcal{S}$ into two subsets of $\mathcal{S}_1=[\ell_l,\ell_m]$ and $\mathcal{S}_2=[\ell_m,\ell_u]$, where $\mathcal{S}_1+\mathcal{S}_2=\mathcal{S}$ and $\ell_m-\ell_l=\pi$ \footnote{For example, if $\mathcal{S}=[\ell_l,\ell_u]=[0,2\pi]$, we divide $\mathcal{S}$ into $\mathcal{S}_1=[0,\pi]$ and $\mathcal{S}_2=[\pi,2\pi]$, with $\ell_l=0$, $\ell_m=\pi$ and $\ell_u=2\pi$.}.
For each subset, we construct a convex envelope, denoted by $\mathcal{E}_i=\{[\bm{\theta}]_i|\mathfrak{Re}\{a_i^*[\bm{\theta}]_i\}\geq \cos{\frac{\ell_{\mathrm{right}}-\ell_{\mathrm{left}}}{2}}\}$, where $a_i=\cos{\frac{\ell_{\mathrm{right}}+\ell_{\mathrm{left}}}{2}}+j\sin{\frac{\ell_{\mathrm{right}}+\ell_{\mathrm{left}}}{2}}$; $\ell_{\mathrm{left}}$ and $\ell_{\mathrm{right}}$ satisfy $\ell_{\mathrm{left}}=\ell_l$ and $\ell_{\mathrm{right}}=\ell_m$ for subset $\mathcal{S}_1$, or satisfy $\ell_{\mathrm{left}}=\ell_m$ and $\ell_{\mathrm{right}}=\ell_u$ for subset $\mathcal{S}_2$. For more visual details, one can refer to [Fig. 1, 45] which depicts the convex envelope. Then, based on $\mathcal{E}_i$, we reformulate problem (\ref{Optimize_theta}) {\color{black} as
\begin{subequations}\label{P7-BnB}
\begin{align}
\mathop{\max}\limits_{\mathbf{C}\succeq\mathbf{0},\bm{\theta}}&
\ tr\left(\mathbf{\Upsilon}(\mu)\mathbf{C}\right),
\\ \mathrm{s.t.} &\ (\ref{P6-SDR}\mathrm{b}),(\ref{P6-SDR}\mathrm{c}),
\\&\ [\bm{\theta}]_i\in \mathcal{E}_i, \ \ i=1,2,...,N,
\\&\ \mathbf{C}\succeq\bm{\theta}^*\bm{\theta}^{\mathrm{T}}.
\end{align}
\end{subequations}
which} is convex. Afterwards, considering problem (\ref{P7-BnB}), we adopt [Algorithm 1, 45] to find the optimal solutions for the two subsets of $\mathcal{S}_1$ and $\mathcal{S}_2$ separately, which are represented by $\overline{\bm{\theta}}_{\mathcal{S}_1}$ and $\overline{\bm{\theta}}_{\mathcal{S}_2}$. The main idea of [Algorithm 1, 45] is to continuously execute the following procedure:

1) Cut $\mathcal{S}_1$ or $\mathcal{S}_2$ into smaller sets and solve problem (\ref{P7-BnB}).

2) Use the rounding operation to acquire a projection of the solution of $\bm{\theta}$ on the feasible domain of $|[\bm{\theta}]_{i}|=\beta$.

The procedure stops when the gap between the objective values with respect to the projection and the solution of problem (\ref{P7-BnB}) is smaller than a predetermined error tolerance $\epsilon_{BnB}$.
After obtaining $\overline{\bm{\theta}}_{\mathcal{S}_1}$ and $\overline{\bm{\theta}}_{\mathcal{S}_2}$, we substitute them into (\ref{Optimize_theta}a) to calculate the objective values. Then, from $\overline{\bm{\theta}}_{\mathcal{S}_1}$ and $\overline{\bm{\theta}}_{\mathcal{S}_2}$, we choose the one corresponding to the higher objective value as the final solution. 
The flowchart of the modified BnB in our algorithm is described in Fig. \ref{BnB_Flowchart}, which shows the above process.

Moreover, when dealing with the discrete phase shift argument sets, the BnB algorithm in \cite{BnB1} can be readily adopted by constructing a polyhedral convex hull. One can refer to \cite{BnB1} for more details.

\begin{figure}[!t]
\includegraphics[width=2.9in]{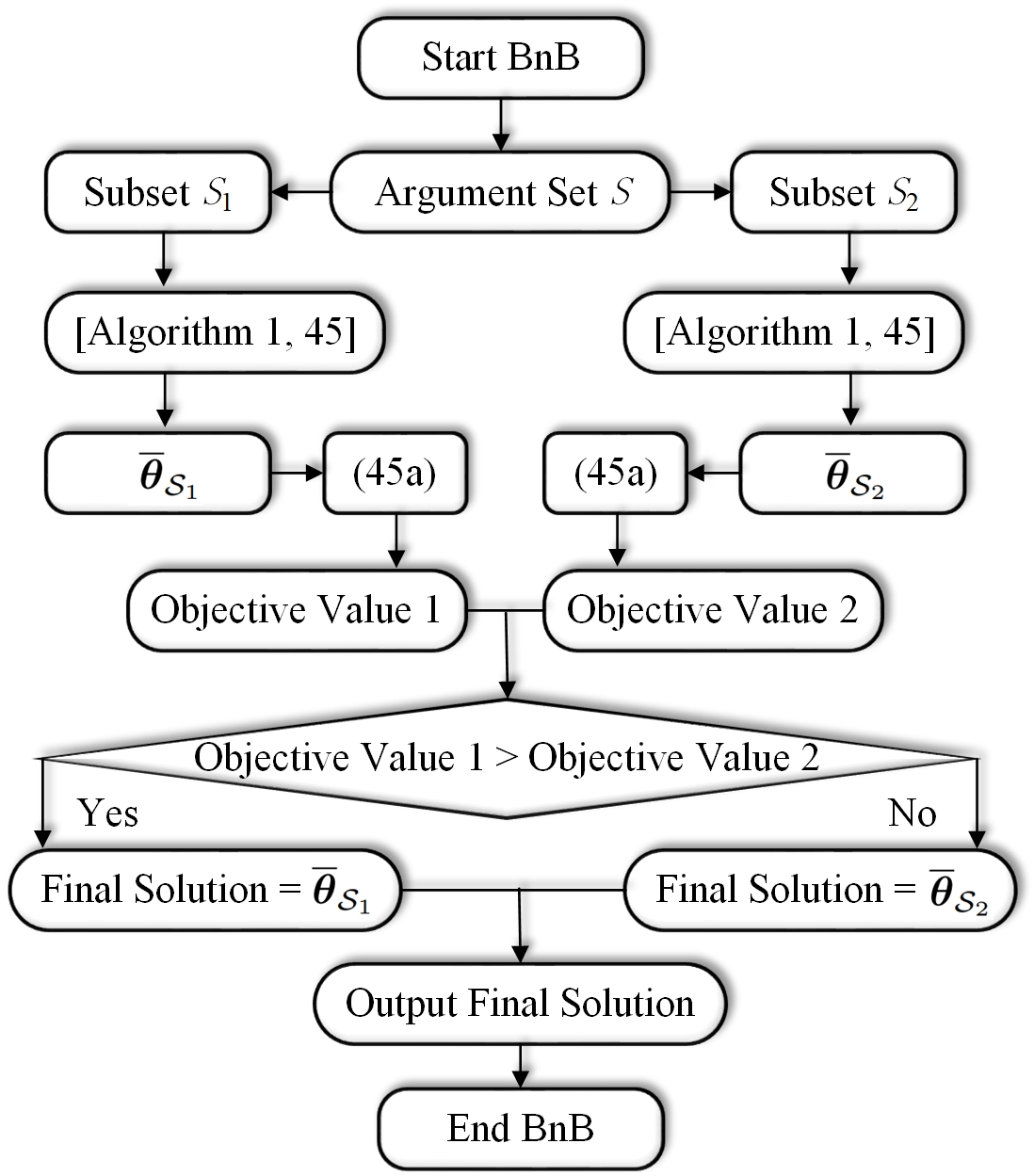}
\hfil
\centering
\caption{\color{black}The flowchart of the modified BnB in our algorithm. }
\label{BnB_Flowchart}
\end{figure}


Based on the analysis in this subsection, the overall robust beamforming optimization algorithm is designed as follows.

\subsection{Overall Algorithm Design}

Our robust beamforming optimization algorithm is designed as \textbf{Algorithm 1}, which can be summarized as 4 steps.

\begin{itemize}
\item[•] \textbf{Step 1: Input parameters:} the parameters including locations, number of antennas and reflecting elements, etc., are input to the algorithm.

\item[•] \textbf{Step 2: Compute CSI error bound:} the CSI error bound $\epsilon_{\Delta \mathbf{h}_{\mathrm{RU}}}$ is calculated using (\ref{Finally_Derived_CSIerrorbound}).

\item[•] \textbf{Step 3: Robust beamforming optimization:} first, the phase-shift matrix, total iteration times, iteration index, terminating condition and objective value are initialized. Then, problem (\ref{Optimize_mu}) and problem (\ref{Optimize_theta}) are iteratively solved by bisection method and SDR/BnB to optimize $\mu$ and $\mathbf{\Theta}$, until the optimization procedure converges.

\item[•] \textbf{Step 4: Output solutions:} the optimal Lagrange dual variable, worst-case CSI error, {\color{black}worst-case BS-UE channel}, and optimal transmit and passive beamforming are obtained and output.
\end{itemize}

\begin{algorithm}
\begin{small}
\caption{\begin{small} Proposed Overall Optimization Algorithm.\end{small}}
\LinesNumbered
{\bf Input:} $\mathbf{q}_1$, $\mathbf{v}_1$, $\widehat{\mathbf{p}}$, $d_{BS}$, $d_{RIS}$, $\lambda$, $M$, $L$, $N$, $\epsilon_{\Delta\mathbf{p}}$\;
Compute $\mathbf{H}_{\mathrm{BR}}$ and $\widehat{\mathbf{h}}_{\mathrm{RU}}^{\mathrm{LoS}}$ according to (\ref{H_B-R}) and (\ref{h_R-M})\;

$\%$ \textit{Begin computation for the CSI error bound}\;
Compute CSI error bound $\epsilon_{\Delta \mathbf{h}_{\mathrm{RU}}}$ using (\ref{Finally_Derived_CSIerrorbound})\;

$\%$ \textit{Begin the proposed RAOP}\;
Initialize: $\theta_1=\theta_2=\dots=\theta_N=0$, total iteration times $T$, iteration index $t=1$, terminating condition $\epsilon_{R}$, initial objective value $R_{Record}^{obj}=0$\;
  \While{$t\leq T$}{
  Record the $t$-th $\mathbf{\Theta}$: $\mathbf{\Theta}_{Record} \leftarrow \mathbf{\Theta}$\;
  Solve (\ref{Optimize_mu}) by using bisection method with given $\mathbf{\Theta}$, and then obtain $\mu$\;
  Compute {\color{black}$\mathbf{\Upsilon}(\mu)$ and $\mathbf{\Gamma}(\mu)$}\;
  Solve problem (\ref{Optimize_theta}) using BnB/SDR method, obtain $\mathbf{\Theta}$ and the value of the objective function $R^{obj}$\;
  \If{$|R^{obj}-R^{obj}_{Record}| \leq \epsilon_{R}$,}{
  \textbf{break}\;
  }
  $t\leftarrow t+1$\;
  Record the $t$-th $R^{obj}$: $R^{obj}_{Record} \leftarrow R^{obj}$\;
  }
 { \color{black}$\%$ \textit{Output the optimization results}\;}
  Obtain the optimal $\mu$: $\overline{\mu} \leftarrow \mu$\;
  Obtain the optimal $\mathbf{\Theta}$: $\overline{\mathbf{\Theta}} \leftarrow \mathbf{\Theta}$\;
Compute $\overline{\Delta \mathbf{h}}_{\mathrm{RU}}(\overline{\mu},\overline{\mathbf{\Theta}})$ using (\ref{Delta_h_Opt})\;
{\color{black}Compute $\overline{\mathbf{h}}_{\mathrm{BU}}$ using (\ref{worst-case h_BU})}\;
Compute $\overline{\mathbf{w}}$ using (\ref{Opt_w})\;
{\bf Output:} $\overline{\mathbf{\Theta}}$, $\overline{\mathbf{w}}$\;

\end{small}
\end{algorithm}

{\color{black}
\subsection{Convergence Analysis}

Since \textbf{Algorithm 1} includes an iteration procedure, it is necessary to investigate its convergence behaviour. To facilitate the analysis, we first prove the following property:

\begin{proposition}
With a given $\bm{\theta}$, if $\mu$ grows, the value of $\bm{\theta}^{\mathrm{T}} \mathbf{\Upsilon}(\mu) \bm{\theta}^*$ increases, whereas the value of $\bm{\theta}^{\mathrm{T}} \mathbf{\Gamma}(\mu) \bm{\theta}^*$ decreases.
\end{proposition}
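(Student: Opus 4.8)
The plan is to collapse the two quadratic forms into explicit scalar functions of $\mu$ and then differentiate them term by term. First I would fix notation: with $\bm{\theta}$ held fixed, set $\mathbf{A}=\mathbf{\Theta}\mathbf{H}_{\mathrm{BR}}\mathbf{H}_{\mathrm{BR}}^{\mathrm{H}}\mathbf{\Theta}^{\mathrm{H}}\succeq\mathbf{0}$, where $\mathbf{\Theta}=\mathrm{diag}(\bm{\theta})$. Since problem (\ref{P5}) is merely problem (\ref{P4}) rewritten via the matrix inversion lemma (Proposition 1), the two quadratic forms keep the meaning of the objective and of the (shifted) constraint of (\ref{P4}), i.e. $\bm{\theta}^{\mathrm{T}}\mathbf{\Upsilon}(\mu)\bm{\theta}^*=\mathcal{F}(\mu,\mathbf{\Theta})$ and $\bm{\theta}^{\mathrm{T}}\mathbf{\Gamma}(\mu)\bm{\theta}^*=\mathcal{C}(\mu,\mathbf{\Theta})+\epsilon_{\Delta\mathbf{h}_{\mathrm{RU}}}^2=\|\overline{\Delta\mathbf{h}}_{\mathrm{RU}}(\mu,\mathbf{\Theta})\|_2^2$. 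Invoking the elementary identity $\widehat{\mathbf{h}}_{\mathrm{RU}}^{\mathrm{LoS}}-(\mathbf{A}+\mu\mathbf{I})^{-1}\mathbf{A}\widehat{\mathbf{h}}_{\mathrm{RU}}^{\mathrm{LoS}}=\mu(\mathbf{A}+\mu\mathbf{I})^{-1}\widehat{\mathbf{h}}_{\mathrm{RU}}^{\mathrm{LoS}}$ in (\ref{mathcal_F}), (\ref{bisection}) and (\ref{Delta_h_Opt}), together with the commutativity of $\mathbf{A}$ and $(\mathbf{A}+\mu\mathbf{I})^{-1}$, these become
\begin{align}
\bm{\theta}^{\mathrm{T}}\mathbf{\Upsilon}(\mu)\bm{\theta}^*&=\mu^2(\widehat{\mathbf{h}}_{\mathrm{RU}}^{\mathrm{LoS}})^{\mathrm{H}}(\mathbf{A}+\mu\mathbf{I})^{-1}\mathbf{A}(\mathbf{A}+\mu\mathbf{I})^{-1}\widehat{\mathbf{h}}_{\mathrm{RU}}^{\mathrm{LoS}},\nonumber\\
\bm{\theta}^{\mathrm{T}}\mathbf{\Gamma}(\mu)\bm{\theta}^*&=(\widehat{\mathbf{h}}_{\mathrm{RU}}^{\mathrm{LoS}})^{\mathrm{H}}(\mathbf{A}+\mu\mathbf{I})^{-1}\mathbf{A}^2(\mathbf{A}+\mu\mathbf{I})^{-1}\widehat{\mathbf{h}}_{\mathrm{RU}}^{\mathrm{LoS}}.\nonumber
\end{align}

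Next I would diagonalize $\mathbf{A}=\sum_{i=1}^{N}\lambda_i\mathbf{u}_i\mathbf{u}_i^{\mathrm{H}}$ with $\lambda_i\geq 0$ and an orthonormal eigenbasis $\{\mathbf{u}_i\}$, and set $c_i=|\mathbf{u}_i^{\mathrm{H}}\widehat{\mathbf{h}}_{\mathrm{RU}}^{\mathrm{LoS}}|^2\geq 0$. The displayed forms then reduce to the scalar sums $\bm{\theta}^{\mathrm{T}}\mathbf{\Upsilon}(\mu)\bm{\theta}^*=\sum_i c_i\,\mu^2\lambda_i/(\lambda_i+\mu)^2$ and $\bm{\theta}^{\mathrm{T}}\mathbf{\Gamma}(\mu)\bm{\theta}^*=\sum_i c_i\,\lambda_i^2/(\lambda_i+\mu)^2$. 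Differentiating the generic summands gives, for all $\mu\geq 0$ and $\lambda_i\geq 0$,
\[
\frac{d}{d\mu}\frac{\mu^2\lambda_i}{(\lambda_i+\mu)^2}=\frac{2\mu\lambda_i^2}{(\lambda_i+\mu)^3}\geq 0,\qquad \frac{d}{d\mu}\frac{\lambda_i^2}{(\lambda_i+\mu)^2}=-\frac{2\lambda_i^2}{(\lambda_i+\mu)^3}\leq 0,
\]
and since the coefficients $c_i$ are non-negative this shows $\bm{\theta}^{\mathrm{T}}\mathbf{\Upsilon}(\mu)\bm{\theta}^*$ is non-decreasing and $\bm{\theta}^{\mathrm{T}}\mathbf{\Gamma}(\mu)\bm{\theta}^*$ is non-increasing in $\mu$. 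The monotonicity is strict whenever some eigenvalue $\lambda_i>0$ has $c_i>0$, which is exactly the regime in which problem (\ref{Optimize_mu}) is feasible: since $\bm{\theta}^{\mathrm{T}}\mathbf{\Gamma}(\mu)\bm{\theta}^*=\|(\mathbf{A}+\mu\mathbf{I})^{-1}\mathbf{A}\widehat{\mathbf{h}}_{\mathrm{RU}}^{\mathrm{LoS}}\|_2^2$ can equal the positive value $\epsilon_{\Delta\mathbf{h}_{\mathrm{RU}}}^2$ only if $\mathbf{A}\widehat{\mathbf{h}}_{\mathrm{RU}}^{\mathrm{LoS}}\neq\mathbf{0}$, i.e. only if $\widehat{\mathbf{h}}_{\mathrm{RU}}^{\mathrm{LoS}}$ has a nonzero component in $\mathrm{range}(\mathbf{A})$; recall also that only $\mu>0$ is relevant in the RAOP.

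The only slightly delicate point is the first step — confirming that, after the matrix-inversion-lemma manipulation that defines $\mathbf{\Upsilon}(\mu)$ and $\mathbf{\Gamma}(\mu)$ in Appendix B, the quadratic forms are indeed $\mathcal{F}(\mu,\mathbf{\Theta})$ and $\|\overline{\Delta\mathbf{h}}_{\mathrm{RU}}(\mu,\mathbf{\Theta})\|_2^2$, so that their entire $\mu$-dependence is channelled through functions of the single fixed positive-semidefinite matrix $\mathbf{A}$; once that identification is in place, everything else is the one-line scalar calculus above. An equivalent route that avoids re-deriving those closed forms is to argue directly at the matrix level that $\mu\mapsto\mu^2(\mathbf{A}+\mu\mathbf{I})^{-1}\mathbf{A}(\mathbf{A}+\mu\mathbf{I})^{-1}$ is monotone non-decreasing and $\mu\mapsto(\mathbf{A}+\mu\mathbf{I})^{-1}\mathbf{A}^2(\mathbf{A}+\mu\mathbf{I})^{-1}$ is monotone non-increasing in the positive-semidefinite ordering, which by simultaneous diagonalization reduces to precisely the same two scalar inequalities.
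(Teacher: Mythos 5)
Your proposal is correct, and at its core it uses the same mechanism as the paper's Appendix D: reduce each quadratic form to a non-negative combination of scalar functions of $\mu$ via an eigendecomposition, then differentiate those scalars. The packaging differs, though. The paper differentiates $\mathbf{\Upsilon}(\mu)$ and $\mathbf{\Gamma}(\mu)$ directly: it eigendecomposes $\mathbf{H}_{\mathrm{BR}}\mathbf{H}_{\mathrm{BR}}^{\mathrm{H}}=\mathbf{U}_1\mathbf{\Sigma}_1\mathbf{U}_1^{\mathrm{H}}$, writes $\mathbf{\Upsilon}(\mu)=\mathrm{diag}\{(\widehat{\mathbf{h}}_{\mathrm{RU}}^{\mathrm{LoS}})^{\mathrm{H}}\}\mathbf{U}_1\mathbf{\Sigma}'_1(\mu)\mathbf{U}_1^{\mathrm{H}}\mathrm{diag}\{\widehat{\mathbf{h}}_{\mathrm{RU}}^{\mathrm{LoS}}\}$, and shows each diagonal entry of $\mathbf{\Sigma}'_1(\mu)$ has non-negative $\mu$-derivative, so that $\partial_\mu\mathbf{\Upsilon}(\mu)\succeq\mathbf{0}$ (and analogously $\partial_\mu\mathbf{\Gamma}(\mu)\preceq\mathbf{0}$); this yields monotonicity of the quadratic forms for \emph{any} fixed $\bm{\theta}$, without invoking the modulus constraint at this stage. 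You instead route the argument back through Proposition~1, identifying $\bm{\theta}^{\mathrm{T}}\mathbf{\Upsilon}(\mu)\bm{\theta}^*$ with $\mathcal{F}(\mu,\mathbf{\Theta})$ and $\bm{\theta}^{\mathrm{T}}\mathbf{\Gamma}(\mu)\bm{\theta}^*$ with $\|\overline{\Delta\mathbf{h}}_{\mathrm{RU}}(\mu,\mathbf{\Theta})\|_2^2$, and then using the resolvent identity to obtain the compact forms $\mu^2(\widehat{\mathbf{h}}_{\mathrm{RU}}^{\mathrm{LoS}})^{\mathrm{H}}(\mathbf{A}+\mu\mathbf{I})^{-1}\mathbf{A}(\mathbf{A}+\mu\mathbf{I})^{-1}\widehat{\mathbf{h}}_{\mathrm{RU}}^{\mathrm{LoS}}$ and $(\widehat{\mathbf{h}}_{\mathrm{RU}}^{\mathrm{LoS}})^{\mathrm{H}}(\mathbf{A}+\mu\mathbf{I})^{-1}\mathbf{A}^2(\mathbf{A}+\mu\mathbf{I})^{-1}\widehat{\mathbf{h}}_{\mathrm{RU}}^{\mathrm{LoS}}$; with eigenvalues $\lambda_i=\beta^2\sigma_i$ of $\mathbf{A}$ your scalar summands coincide (up to the weight rescaling by $\beta^2$) with the paper's $\mu^2\sigma_i/(\mu+\beta^2\sigma_i)^2$ and $\beta^2\sigma_i^2/(\mu+\beta^2\sigma_i)^2$, so the two computations agree. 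Your version is shorter and more interpretable ($\bm{\theta}^{\mathrm{T}}\mathbf{\Gamma}\bm{\theta}^*$ is literally the worst-case error energy, which shrinks as the regularization $\mu$ grows), and you sensibly pin down when the monotonicity is strict; the one caveat is that your identification with $\mathcal{F}$ and $\|\overline{\Delta\mathbf{h}}_{\mathrm{RU}}\|_2^2$ is valid only because $\mathbf{\Theta}^{\mathrm{H}}\mathbf{\Theta}=\beta^2\mathbf{I}$, i.e.\ it implicitly uses the constant-modulus constraint $|[\bm{\theta}]_i|=\beta$ — harmless here since every iterate of the RAOP satisfies it, but worth stating explicitly, whereas the paper's Loewner-order statement $\partial_\mu\mathbf{\Upsilon}\succeq\mathbf{0}$, $\partial_\mu\mathbf{\Gamma}\preceq\mathbf{0}$ does not need it.
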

\begin{proof}
The proof is given in Appendix D.
\end{proof}

Based on Proposition 3, we are now ready to prove the convergence of \textbf{Algorithm 1}.

\begin{proposition}
The proposed \textbf{Algorithm 1} is convergent.
\end{proposition}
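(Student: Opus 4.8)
The plan is to prove convergence by showing that the sequence of objective values generated by the relaxed alternating optimization process (RAOP), i.e. the inner loop of \textbf{Algorithm 1}, is monotonically non-decreasing and bounded from above, and then invoking the monotone convergence theorem. Let $\bm{\theta}_t$ and $\mu_t$ denote, respectively, the phase-shift vector returned by (\ref{Optimize_theta}) and the dual variable returned by (\ref{Optimize_mu}) in the $t$-th iteration, and write $R_t=\bm{\theta}_t^{\mathrm{T}}\mathbf{\Upsilon}(\mu_t)\bm{\theta}_t^{*}$ for the associated objective value. By construction, given $\bm{\theta}_{t-1}$ the bisection step of (\ref{Optimize_mu}) produces $\mu_t$ with $\bm{\theta}_{t-1}^{\mathrm{T}}\mathbf{\Gamma}(\mu_t)\bm{\theta}_{t-1}^{*}=\epsilon_{\Delta\mathbf{h}_{\mathrm{RU}}}^{2}$ (such a finite $\mu_t>0$ exists whenever $\epsilon_{\Delta\mathbf{h}_{\mathrm{RU}}}<\|\widehat{\mathbf{h}}_{\mathrm{RU}}^{\mathrm{LoS}}\|_2$, which is the non-degenerate regime identified after (\ref{Delta_h_Opt})); the subsequent SDR/BnB step then returns $\bm{\theta}_t$ that is feasible for (\ref{Optimize_theta}), hence satisfies $\bm{\theta}_t^{\mathrm{T}}\mathbf{\Gamma}(\mu_t)\bm{\theta}_t^{*}\ge\epsilon_{\Delta\mathbf{h}_{\mathrm{RU}}}^{2}$ and the modulus/argument constraints, and maximizes $\bm{\theta}^{\mathrm{T}}\mathbf{\Upsilon}(\mu_t)\bm{\theta}^{*}$ over that feasible set.

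First I would establish that $\{\mu_t\}$ is non-decreasing. Since $\bm{\theta}_t$ is feasible for (\ref{Optimize_theta}) at $\mu_t$, and since $\mu_{t+1}$ is defined by $\bm{\theta}_t^{\mathrm{T}}\mathbf{\Gamma}(\mu_{t+1})\bm{\theta}_t^{*}=\epsilon_{\Delta\mathbf{h}_{\mathrm{RU}}}^{2}$, we have $\bm{\theta}_t^{\mathrm{T}}\mathbf{\Gamma}(\mu_t)\bm{\theta}_t^{*}\ge\epsilon_{\Delta\mathbf{h}_{\mathrm{RU}}}^{2}=\bm{\theta}_t^{\mathrm{T}}\mathbf{\Gamma}(\mu_{t+1})\bm{\theta}_t^{*}$. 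Because Proposition~3 asserts that $\bm{\theta}^{\mathrm{T}}\mathbf{\Gamma}(\mu)\bm{\theta}^{*}$ is (strictly) decreasing in $\mu$ for fixed $\bm{\theta}$, this forces $\mu_{t+1}\ge\mu_t$. Next, I would chain the objective inequalities. The vector $\bm{\theta}_t$ remains feasible for (\ref{Optimize_theta}) at $\mu_{t+1}$, precisely because the equality $\bm{\theta}_t^{\mathrm{T}}\mathbf{\Gamma}(\mu_{t+1})\bm{\theta}_t^{*}=\epsilon_{\Delta\mathbf{h}_{\mathrm{RU}}}^{2}$ meets the relaxed constraint (\ref{Optimize_theta}b) and the constant-modulus/argument constraints are unchanged; hence optimality of $\bm{\theta}_{t+1}$ gives $\bm{\theta}_{t+1}^{\mathrm{T}}\mathbf{\Upsilon}(\mu_{t+1})\bm{\theta}_{t+1}^{*}\ge\bm{\theta}_t^{\mathrm{T}}\mathbf{\Upsilon}(\mu_{t+1})\bm{\theta}_t^{*}$. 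Combining this with the $\mu$-monotonicity of $\bm{\theta}^{\mathrm{T}}\mathbf{\Upsilon}(\mu)\bm{\theta}^{*}$ from Proposition~3 and $\mu_{t+1}\ge\mu_t$ yields
\begin{equation}\nonumber
R_{t+1}=\bm{\theta}_{t+1}^{\mathrm{T}}\mathbf{\Upsilon}(\mu_{t+1})\bm{\theta}_{t+1}^{*}\ \ge\ \bm{\theta}_{t}^{\mathrm{T}}\mathbf{\Upsilon}(\mu_{t+1})\bm{\theta}_{t}^{*}\ \ge\ \bm{\theta}_{t}^{\mathrm{T}}\mathbf{\Upsilon}(\mu_{t})\bm{\theta}_{t}^{*}=R_{t},
\end{equation}
so $\{R_t\}$ is monotonically non-decreasing. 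This handover across the $\mu$-update is exactly why (\ref{Optimize_theta}b) is relaxed from an equality to an inequality, and it is the step I expect to require the most care.

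Then I would show $\{R_t\}$ is bounded above. Recalling from the derivation preceding (\ref{P4}) that $R_t=\mathcal{F}(\mu_t,\mathbf{\Theta}_t)=\|(\widehat{\mathbf{h}}_{\mathrm{RU}}^{\mathrm{LoS}}+\overline{\Delta\mathbf{h}}_{\mathrm{RU}}(\mu_t,\mathbf{\Theta}_t))^{\mathrm{H}}\mathbf{\Theta}_t\mathbf{H}_{\mathrm{BR}}\|_2^{2}$, and that $\|(\mathbf{\Theta}\mathbf{H}_{\mathrm{BR}}\mathbf{H}_{\mathrm{BR}}^{\mathrm{H}}\mathbf{\Theta}^{\mathrm{H}}+\mu\mathbf{I})^{-1}\mathbf{\Theta}\mathbf{H}_{\mathrm{BR}}\mathbf{H}_{\mathrm{BR}}^{\mathrm{H}}\mathbf{\Theta}^{\mathrm{H}}\|\le1$ for all $\mu\ge0$ by (\ref{Delta_h_Opt}), a triangle-inequality and submultiplicativity bound gives $\|\widehat{\mathbf{h}}_{\mathrm{RU}}^{\mathrm{LoS}}+\overline{\Delta\mathbf{h}}_{\mathrm{RU}}(\mu_t,\mathbf{\Theta}_t)\|_2\le2\|\widehat{\mathbf{h}}_{\mathrm{RU}}^{\mathrm{LoS}}\|_2$ and hence $R_t\le4\,\|\widehat{\mathbf{h}}_{\mathrm{RU}}^{\mathrm{LoS}}\|_2^{2}\,\|\mathbf{\Theta}_t\mathbf{H}_{\mathrm{BR}}\|_2^{2}\le4\beta^{2}\|\widehat{\mathbf{h}}_{\mathrm{RU}}^{\mathrm{LoS}}\|_2^{2}\|\mathbf{H}_{\mathrm{BR}}\|_2^{2}$, a finite constant independent of $t$ since $|[\mathbf{\Theta}_t]_{(i,i)}|=\beta$. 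A non-decreasing sequence bounded above converges, so $\{R_t\}$ converges; consequently $|R^{obj}-R^{obj}_{Record}|\to0$, so the terminating test in \textbf{Algorithm 1} is met after finitely many iterations for any prescribed $\epsilon_R>0$. Assembling the monotone-increase claim, the uniform upper bound, and the monotone convergence theorem completes the proof. The main obstacle, as noted, is the $\mu$-update: one must argue that replacing $\mu_t$ by $\mu_{t+1}$ never decreases the objective, which relies jointly on $\mu_{t+1}\ge\mu_t$, on Proposition~3, and on the feasibility of $\bm{\theta}_t$ under the relaxed constraint.
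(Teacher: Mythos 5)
Your proposal is correct and follows essentially the same route as the paper's own proof: you establish $\mu_{t+1}\ge\mu_t$ by comparing $\bm{\theta}_t^{\mathrm{T}}\mathbf{\Gamma}(\mu_t)\bm{\theta}_t^{*}\ge\epsilon_{\Delta\mathbf{h}_{\mathrm{RU}}}^{2}=\bm{\theta}_t^{\mathrm{T}}\mathbf{\Gamma}(\mu_{t+1})\bm{\theta}_t^{*}$ and invoking Proposition~3, then chain the optimality of $\bm{\theta}_{t+1}$ with the $\mu$-monotonicity of $\bm{\theta}^{\mathrm{T}}\mathbf{\Upsilon}(\mu)\bm{\theta}^{*}$ to get monotone non-decrease, and conclude by boundedness, exactly as the paper does. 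Your only addition is an explicit upper bound $R_t\le 4\beta^{2}\|\widehat{\mathbf{h}}_{\mathrm{RU}}^{\mathrm{LoS}}\|_2^{2}\|\mathbf{H}_{\mathrm{BR}}\|_2^{2}$ where the paper simply cites the constant-modulus constraint; this is a harmless refinement, not a different argument.
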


\begin{proof}
Let $\bm{\theta}_k$ ($\bm{\theta}_{k+1}$) and $\mu_k$ ($\mu_{k+1}$) be the solutions of $\bm{\theta}$ and $\mu$ in the $k$-th ($(k+1)$-th) iteration, respectively. To prove the algorithm convergence, we need to prove that
\begin{equation}
\bm{\theta}_{k+1}^{\mathrm{T}} \mathbf{\Upsilon}(\mu_{k+1}) \bm{\theta}_{k+1}^* \geq \bm{\theta}_{k}^{\mathrm{T}} \mathbf{\Upsilon}(\mu_{k}) \bm{\theta}_{k}^*
\end{equation}
strictly holds for $\forall k>0$. 

First, when $\mu_{k+1}$ is given, we can readily obtain
\begin{equation}
\bm{\theta}_{k+1}^{\mathrm{T}} \mathbf{\Upsilon}(\mu_{k+1}) \bm{\theta}_{k+1}^* \geq \bm{\theta}_{k}^{\mathrm{T}} \mathbf{\Upsilon}(\mu_{k+1}) \bm{\theta}_{k}^*,
\end{equation}
owing to the maximization of the objective function in problem (\ref{Optimize_theta}). Afterwards, when $\bm{\theta}_{k}$ is given, we need to compare $\bm{\theta}_{k}^{\mathrm{T}} \mathbf{\Upsilon}(\mu_{k+1}) \bm{\theta}_{k}^*$ and $\bm{\theta}_{k}^{\mathrm{T}} \mathbf{\Upsilon}(\mu_{k}) \bm{\theta}_{k}^*$. 

Since $\bm{\theta}_{k}$ is the solution of problem (\ref{Optimize_theta}) with a given $\mu_{k}$, according to (\ref{Optimize_theta}b), we have
\begin{equation}\label{53}
\bm{\theta}_k^{\mathrm{T}}\mathbf{\Gamma}(\mu_k)\bm{\theta}_k^*\geq\epsilon_{\Delta \mathbf{h}_{\mathrm{RU}}}^2.
\end{equation}
Subsequently, by solving problem (\ref{Optimize_mu}), we obtain $\mu_{k+1}$ with a given $\bm{\theta}_k$, such that
\begin{equation}\label{54}
\bm{\theta}_k^{\mathrm{T}}\mathbf{\Gamma}(\mu_{k+1})\bm{\theta}_k^*=\epsilon_{\Delta \mathbf{h}_{\mathrm{RU}}}^2.
\end{equation}
Thus, based on (\ref{53}) and (\ref{54}), we have
\begin{equation}\label{55}
\bm{\theta}_k^{\mathrm{T}}\mathbf{\Gamma}(\mu_k)\bm{\theta}_k^*\geq\bm{\theta}_k^{\mathrm{T}}\mathbf{\Gamma}(\mu_{k+1})\bm{\theta}_k^*.
\end{equation}
Combining (\ref{55}) with Proposition 3, we obtain $\mu_k\leq \mu_{k+1}$, resulting in $\bm{\theta}_k^{\mathrm{T}}\mathbf{\Upsilon}(\mu_k)\bm{\theta}_k^*\leq\bm{\theta}_k^{\mathrm{T}}\mathbf{\Upsilon}(\mu_{k+1})\bm{\theta}_k^*$. Consequently, we obtain
\begin{equation}
\bm{\theta}_{k+1}^{\mathrm{T}} \mathbf{\Upsilon}(\mu_{k+1}) \bm{\theta}_{k+1}^* \geq \bm{\theta}_{k}^{\mathrm{T}} \mathbf{\Upsilon}(\mu_{k+1}) \bm{\theta}_{k}^*\geq \bm{\theta}_{k}^{\mathrm{T}} \mathbf{\Upsilon}(\mu_{k}) \bm{\theta}_{k}^*,
\end{equation}
implying that the solution of \textbf{Algorithm 1} in the $(k+1)$-th iteration is better than that in the $k$-th iteration. In addition, the objective value of $\bm{\theta}^{\mathrm{T}} \mathbf{\Upsilon}(\mu) \bm{\theta}^*$ is upper-bounded owing to the constraint of $|[\bm{\theta}]_{i}|=\beta$ for $i=1,2,...,N$. Therefore, we prove that \textbf{Algorithm 1} is convergent.
\end{proof}

}

\subsection{Computational Complexity Analysis}

Here we analyse the approximate computational complexity of \textbf{Algorithm 1} through the comparisons with the following benchmarks: 

\begin{itemize}
\item[1)] \textbf{Benchmark 1 (B1)}: the non-robust approach in \cite{Zhong-TCOM2020}, which directly uses the estimated CSI depending on angle-of-departure/arrival (AOD/AOA) to design the beams, without considering the CSI errors.

\item[2)] \textbf{Benchmark 2 (B2)}: the worst-case robust beamforming optimization algorithms in \cite{G.Zhou-TSP2020} or \cite{G.Zhou-WCL2020} based on S-procedure and penalty CCP. 
\end{itemize}

\begin{table*}
\caption{Approximate computational complexities.}
\label{Table_Complexity}
\centering
\begin{small}
\begin{tabular}{ccc}
\hline
Algorithms & Approximate complexities per iteration & Overall complexities\\
\hline
Proposed algorithm with SDR & $o_{\mu}+o_{(SDR)}$ & $\left(o_{\mu}+o_{(SDR)}\right)\times T_P^{Con}$\\
Proposed algorithm with BnB & $o_{\mu}+o_{(BnB)}$ & $\left(o_{\mu}+o_{(BnB)}\right)\times T_P^{Con}$\\
Robust beamforming in \cite{G.Zhou-TSP2020} or \cite{G.Zhou-WCL2020} (\textbf{B2}) & $o_{\mathbf{w}}+o_{\mathbf{\Theta}}$ & $\left(o_{\mathbf{w}}+o_{\mathbf{\Theta}}\right)\times T_R^{Con}$\\
Non-robust beamforming in \cite{Zhong-TCOM2020} (\textbf{B1}) & $o_{(SDR)}$ & $o_{(SDR)}$\\
\hline
\end{tabular}
\end{small}
\end{table*}

Table \ref{Table_Complexity} lists the approximate computational complexities of our algorithm and \textbf{B1}, \textbf{B2}, where

\begin{itemize}
\item[1)] $o_{\mu}\approx\mathcal{O}\left(\log_2 W_{\mu}\right)$ is the complexity of the bisection search, which is used to find the optimal $\mu$ from (\ref{bisection}), where $W_{\mu}$ denotes the length of the search interval.

\item[2)] $o_{(SDR)}\approx\mathcal{O}\left((N+1)^4 N^{\frac{1}{2}} \log_2\frac{1}{\epsilon_{(SDR)}}\right)$ is the complexity of SDR \cite{Z.-Q. Luo-SPM2010}, which is used to solve problem (\ref{P5}), where $\epsilon_{(SDR)}>0$ represents the solution accuracy.

\item[3)] $o_{(BnB)}\approx\left(\prod_{i=1}^{N}\lceil \frac{2(\ell_u-\ell_l)}{\delta} \rceil\right) o_{(SDR)}$ is the complexity of BnB, which is used to solve problem (\ref{P5}), where $\delta$ is specified in [Lemma 4, 45].

\item[4)] $o_{\mathbf{w}}\approx\mathcal{O}((NM+M+2)^{\frac{1}{2}}M[M^2 + M ((NM+1)^2 + (1+M)^2) + ((NM+1)^3 + (1+M)^3)
])$ and $o_{\mathbf{\Theta}}\approx\mathcal{O}(
(NM+2+2N)^{\frac{1}{2}} N [N^2 + N((NM+1)^2+1) + ((NM+1)^3 + 1) +N^2]
)$ are the complexities of the transmit beamforming optimization and passive beamforming optimization of \textbf{B2}.

\item[5)] $T_P^{Con}$ and $T_R^{Con}$ stand for the total iteration times required for the convergence of \textbf{Algorithm 1} and \textbf{B2}.
\end{itemize}

Table \ref{Table_Complexity} indicates that: 1) \textbf{B1} requires only one time of SDR optimization. It is the simplest approach but may suffer from serious performance degradation in the presence of CSI errors.
2) In each iteration, \textbf{B2} is more computationally complex than the proposed algorithm with SDR, whereas their overall complexities depend on the total iteration times for convergence, {\color{black}which will be numerically investigated in Section V-C}.
3) Although the proposed algorithm with BnB is also computationally complicated in one iteration, it has the advantages of providing near-optimal solutions and dealing with arbitrary phase shift argument sets. These advantages will be justified in Section V-C as well.

\section{Simulation Results}

In this section, the system parameters are configured and the performance metric is defined. Then, the CSI error bound, derived in Section III, is numerically investigated. Finally, the performance of \textbf{Algorithm 1} is evaluated and compared with those of \textbf{B1} and \textbf{B2}, in terms of the worst-case SNR at the {\color{black}UE} and the algorithm efficiency.

\subsection{System Parameters and Performance Metric}

\subsubsection{System Parameters}

In the simulations, the system parameters are set as follows. $\mathbf{q}_1$, $\mathbf{v}_1$ and $\widehat{\mathbf{p}}$ are set as $\mathbf{q}_1=(0,0,25)^{\mathrm{T}}$, $\mathbf{v}_1=(2,-2,26)^{\mathrm{T}}$ and $\widehat{\mathbf{p}}=(10,5,18)^{\mathrm{T}}$ in meters. The antenna/element spacing is $d_{BS}=d_{RIS}=0.5$ cm. The number of transmit antennas is $M=32$.
The total transmit power is $P_T=27$ dBm. The noise power at the receiver side is $\sigma_n^2=-80$ dBm. The carrier frequency is $f_c=60$ GHz. The speed of light is $c\approx 2.99792458\times 10^8$ m/s. The signal wavelength is calculated by $\lambda=\frac{c}{f_c}$. The reflection amplitude is $\beta=1$.
For the large-scale path loss, we set $\zeta_0=-30$ dB, $d_0=1$ m, and $\alpha=2.2$ \cite{C.Pan-JSAC2020, K.Zhi-WCL2021}. The Rician factor is $\kappa_R=20$. The $\ell_2$-norm of $\mathbf{h}_{\mathrm{RU}}^{\mathrm{NLoS}}$ and $\mathbf{h}_{\mathrm{BU}}$ are set as $\delta_{\mathrm{RU}}^{\mathrm{NLoS}}=10^{-4}$ and $\delta_{\mathrm{BU}}=0$, respectively.
When performing \textbf{Algorithm 1}, we set $\epsilon_R=0.0001$, and configure the error tolerance of BnB and the maximum of total iteration times of BnB to be $0.0002$ and $1000$, respectively.
In addition, the user location error bound $\epsilon_{\Delta\mathbf{p}}$ and the number of reflecting elements $N$ (or $L$) will vary for diverse observations.

\subsubsection{Performance Metric}

For evaluating the performance of the proposed algorithm, the worst-case SNR at the {\color{black}UE} with respect to the optimized transmit and passive beamforming is considered as a performance metric:
\begin{equation}\label{S_w}
S_w(\overline{\mathbf{\Theta}},\overline{\mathbf{w}})
\!=\!\frac{P_T \left|\left[(\mathbf{h}_{\mathrm{RU}}^{\mathrm{worst}})^{\mathrm{H}}\overline{\mathbf{\Theta}}\mathbf{H}_{\mathrm{BR}} + \overline{\mathbf{h}}_{\mathrm{BU}}^{\mathrm{H}}\right]\overline{\mathbf{w}}\right|^2}{\sigma_n^2},
\end{equation}
which is a function of $\overline{\mathbf{\Theta}}$ and $\overline{\mathbf{w}}$, where $\mathbf{h}_{\mathrm{RU}}^{\mathrm{worst}}=\widehat{\mathbf{h}}_{\mathrm{RU}}^{\mathrm{LoS}}+\overline{\Delta \mathbf{h}}_{\mathrm{RU}}(\overline{\mu},\overline{\mathbf{\Theta}})$ is the worst-case channel from RIS to {\color{black}UE}. In the simulation results, $S_w(\overline{\mathbf{\Theta}},\overline{\mathbf{w}})$ is presented as original ratio value instead of decibel (dB).

\subsection{CSI Error Bound}

Based on the system setup, we first numerically verify the theoretical derivation of the CSI error bound in Section III.
Specifically, we compare the following two results to examine the tightness of the CSI error bound: 

1) $\epsilon_{\Delta \mathbf{h}_{\mathrm{RU}}}$, which is theoretically derived from (\ref{Finally_Derived_CSIerrorbound}).

2) The actual CSI error bound, which is obtained by finding the maximum of $\|\Delta \mathbf{h}_{\mathrm{RU}}\|_2$ from $50000$ computations of $\|\Delta \mathbf{h}_{\mathrm{RU}}\|_2=\|\mathbf{h}_{\mathrm{RU}}-\widehat{\mathbf{h}}_{\mathrm{RU}}^{\mathrm{LoS}}\|_2$, where each computation corresponds to one channel realization of $\mathbf{h}_{\mathrm{RU}}$ with a random $\mathbf{p}$ in the spherical uncertainty region of $\|\Delta\mathbf{p}\|_2\leq \epsilon_{\Delta\mathbf{p}}$.

\begin{figure}[!t]
\includegraphics[width=3.2in]{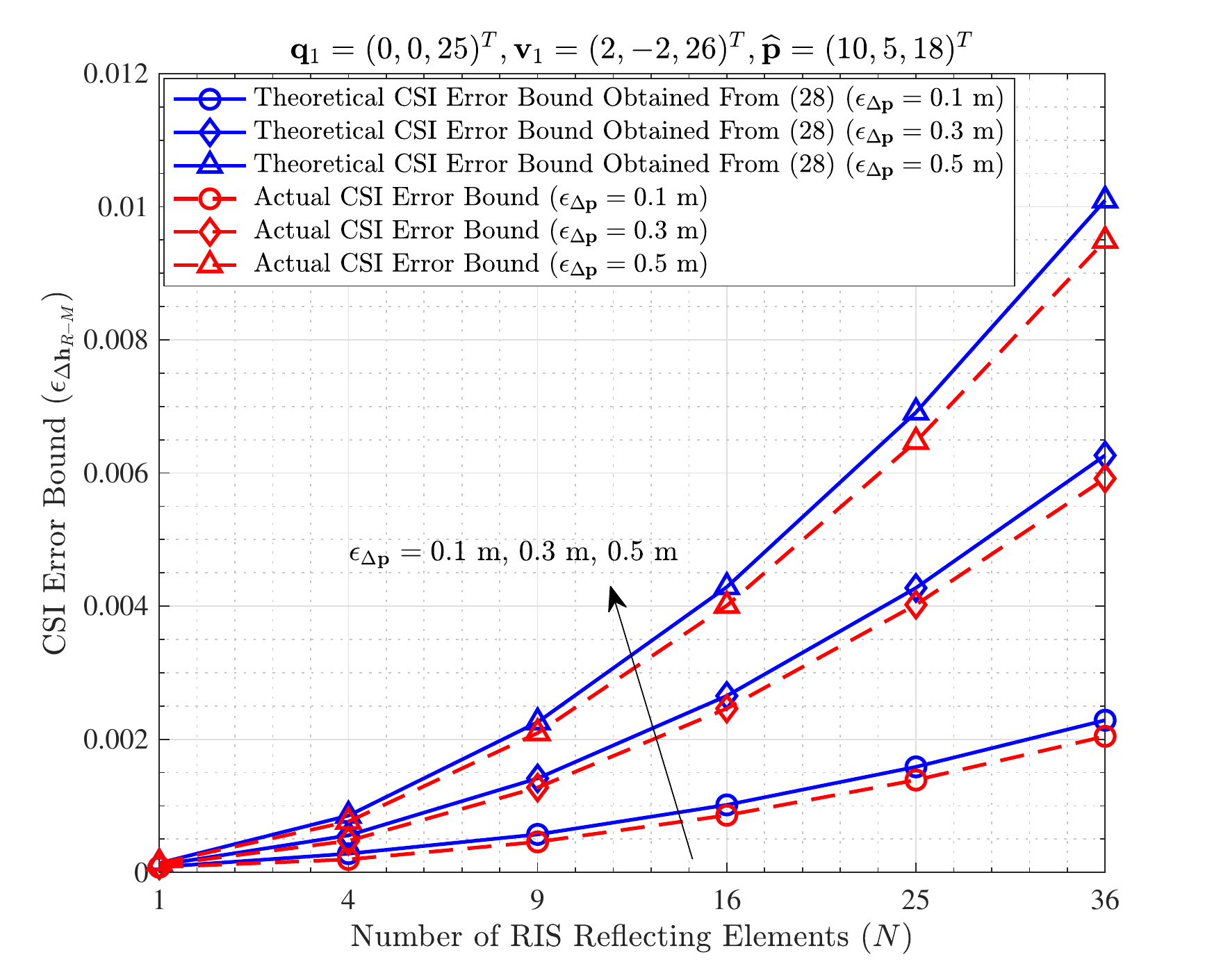}
\hfil
\centering
\caption{The CSI error bounds as functions of $N$ with different $\epsilon_{\Delta\mathbf{p}}$. }
\label{CSI_Error_Bound_N}
\end{figure}

Fig. \ref{CSI_Error_Bound_N} depicts the CSI error bounds with respect to $N$ and $\epsilon_{\Delta\mathbf{p}}$. It indicates that first, the overall theoretical results fit well with the actual ones, implying that the derivation of the CSI error bound is correct. Second, the CSI error bounds increase as $N$ or $\epsilon_{\Delta\mathbf{p}}$ grows, illustrating that more reflecting elements or higher level of user location uncertainty will lead to graver CSI error. Third, the red dashed curves are always below the blue solid curves, manifesting that the theoretical results are upper bounds of the corresponding actual results, which become tight when the level of the user location uncertainty is {\color{black}moderate}.

\begin{figure*}[!t]
\centering
\subfloat[$\widehat{p}_y=5$ m and $\widehat{p}_z=18$ m.]{\includegraphics[width=2.2in]{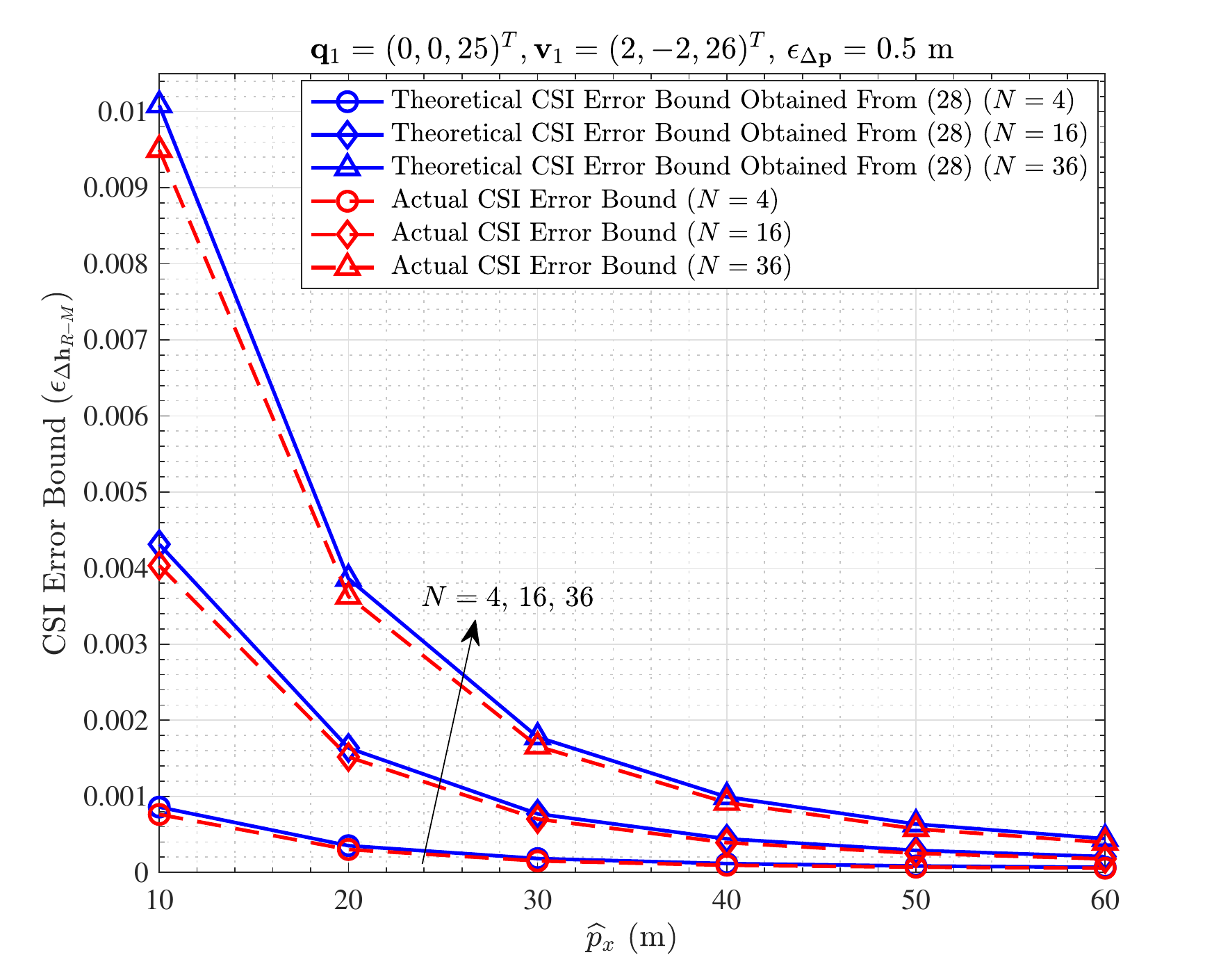}}
\label{CSI_Error_Bound_Final_Nearfield_x}
\subfloat[$\widehat{p}_x=10$ m and $\widehat{p}_z=18$ m.]{\includegraphics[width=2.2in]{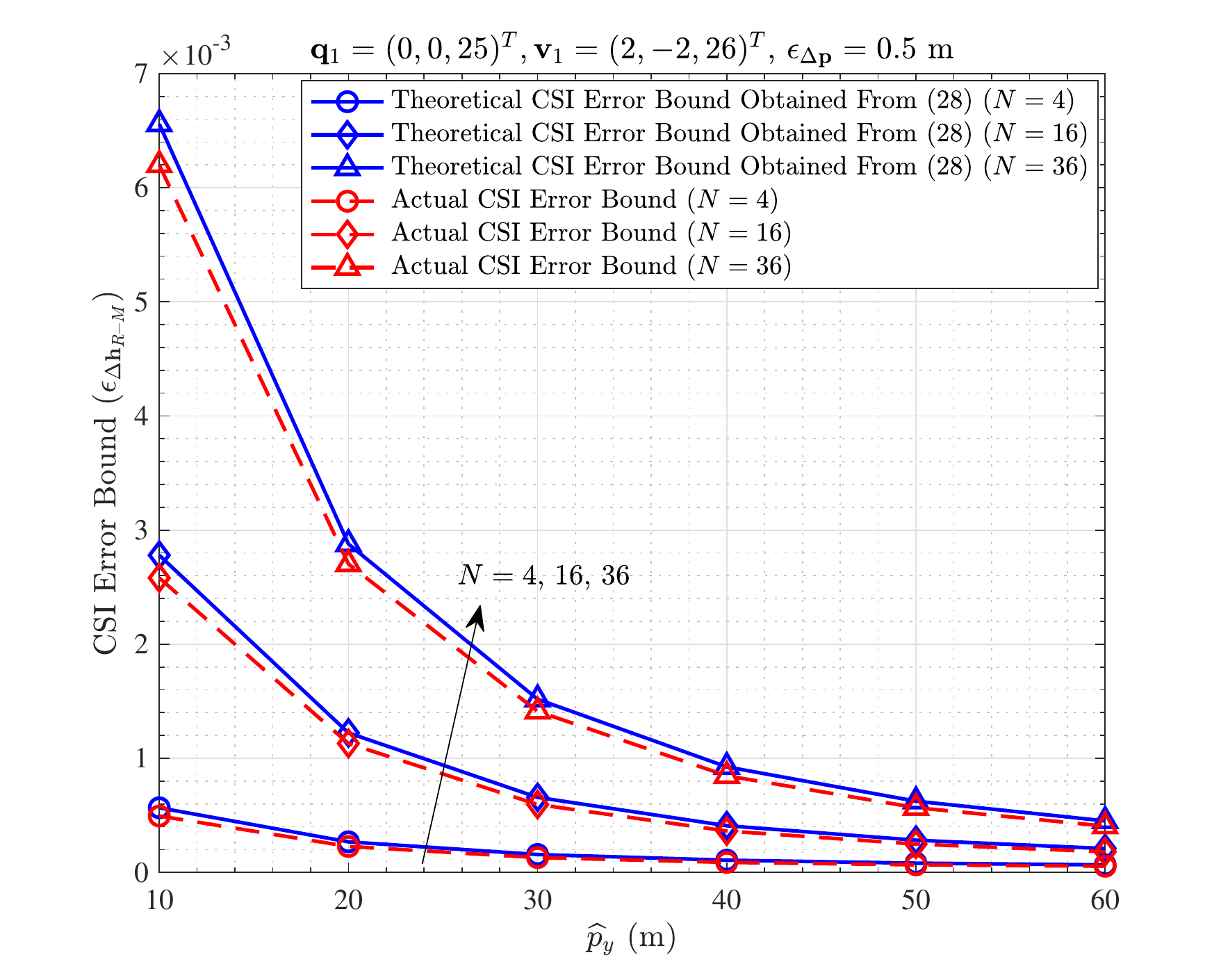}}
\label{CSI_Error_Bound_Final_Nearfield_y}
\subfloat[$\widehat{p}_x=10$ m and $\widehat{p}_y=5$ m.]{\includegraphics[width=2.2in]{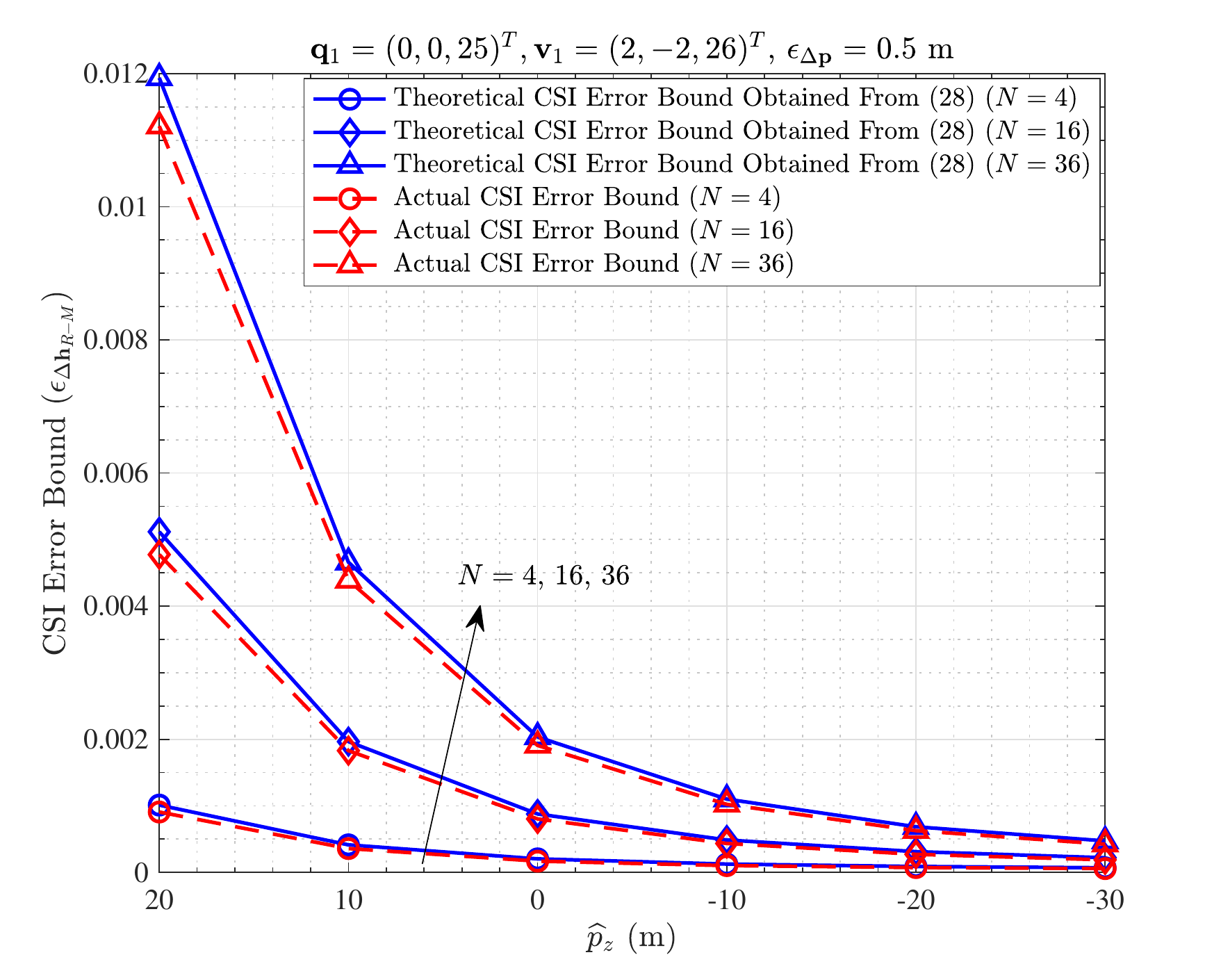}}
\label{CSI_Error_Bound_Final_Nearfield_z}
\hfil
\caption{The CSI error bounds as functions of user coordinates with $\epsilon_{\Delta\mathbf{p}}=0.5$ m and different $N$.}
\label{CSI_Error_Bound_xyz}
\end{figure*}

Fig. \ref{CSI_Error_Bound_xyz} displays the CSI error bounds with respect to user coordinates, with $\epsilon_{\Delta\mathbf{p}}=0.5$ m and different $N$. It demonstrates that when the distance between the RIS and the {\color{black}UE} increases while the user location error bound is fixed, the theoretical CSI error bounds descend and become tighter. 

\subsection{Algorithm Performance Evaluation}

Subsequently, we investigate the performance of the proposed algorithm through the comparisons with \textbf{B1} and \textbf{B2}, in terms of both the worst-case SNR and the algorithm efficiency. As \textbf{B1} and \textbf{B2} can only apply to the phase shift argument set of $[0,2\pi]$, here we first compare the performances under the condition of $\mathcal{S}=[0,2\pi]$, and then justify the advantages of our algorithm with BnB in the cases of other $\mathcal{S}=[\ell_l,\ell_u]$.


\begin{figure}[!t]
\includegraphics[width=3.2in]{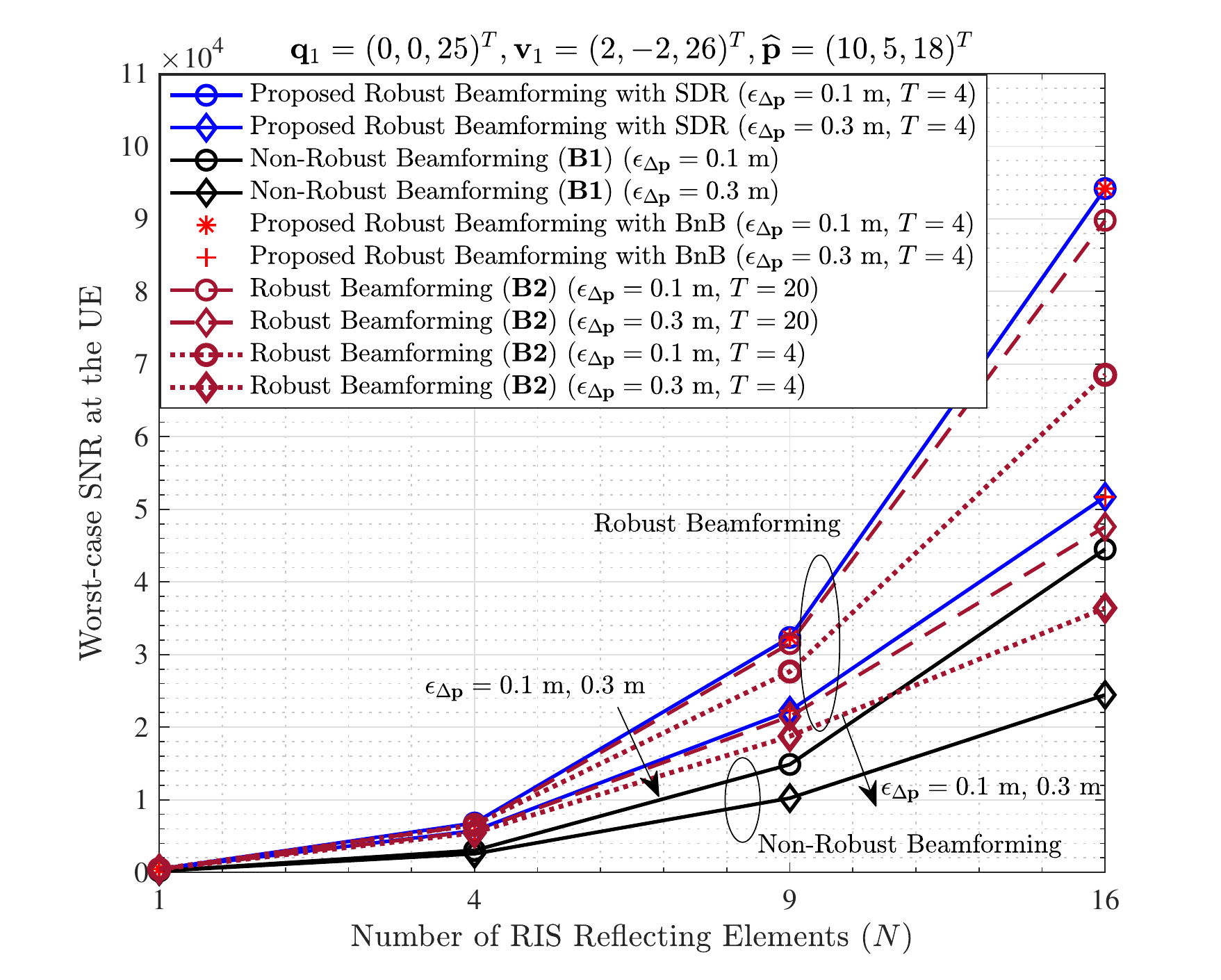}
\hfil
\centering
\caption{\color{black}Worst-case SNRs at the UE with respect to $N$ when $\mathcal{S}=[0,2\pi]$. }
\label{Worstcase_Signal_Strength}
\end{figure}

\begin{figure}[!t]
\includegraphics[width=3.2in]{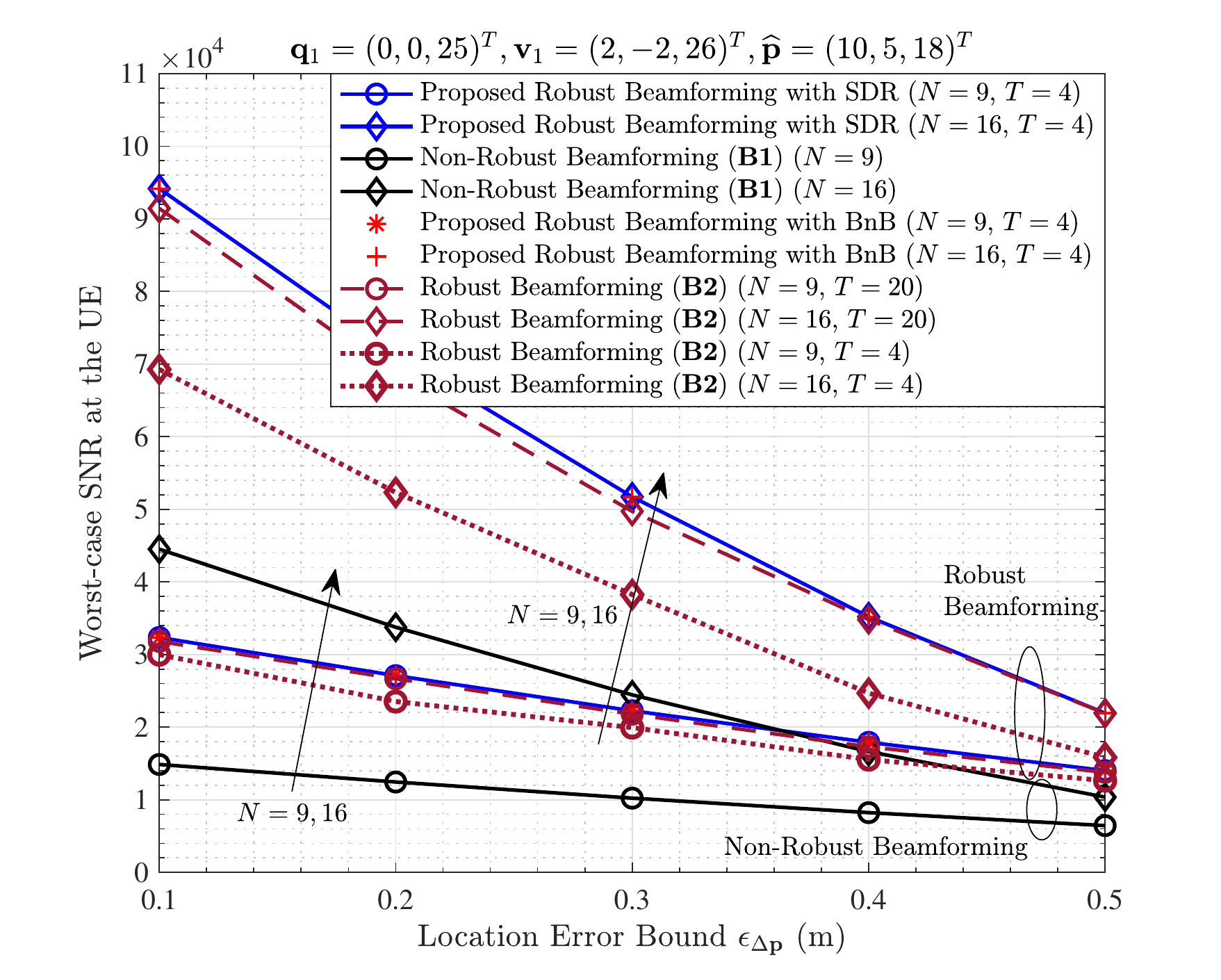}
\hfil
\centering
\caption{\color{black}Worst-case SNRs at the UE with respect to $\epsilon_{\Delta\mathbf{p}}$ when $\mathcal{S}=[0,2\pi]$. }
\label{Worstcase_Signal_Strength_2}
\end{figure}

Fig. \ref{Worstcase_Signal_Strength} and Fig.  \ref{Worstcase_Signal_Strength_2} show the worst-case SNRs at the {\color{black}UE} with respect to $N$ and $\epsilon_{\Delta\mathbf{p}}$ when $\mathcal{S}=[0,2\pi]$. 
Results reveal that: 1) the worst-case SNRs at the {\color{black}UE} are strongly influenced by the number of the RIS reflecting elements and the level of the CSI uncertainty.
2) The proposed approach significantly outperforms the non-robust approach, i.e. \textbf{B1}, thus validating the robustness of our algorithm. 
3) The proposed algorithm is superior to \textbf{B2} in terms of the worst-case SNR performance, when the number of the iterations is small ($T<5$ in specific, as shown in the figures). However, it can also be seen that if $T$ is large enough, the worst-case SNRs of \textbf{B2} are close to those of the proposed algorithm. Thereupon, we will evaluate the performance from the perspective of the convergence rate as well, in order to show a more comprehensive comparison.

\begin{figure}[!t]
\includegraphics[width=3.2in]{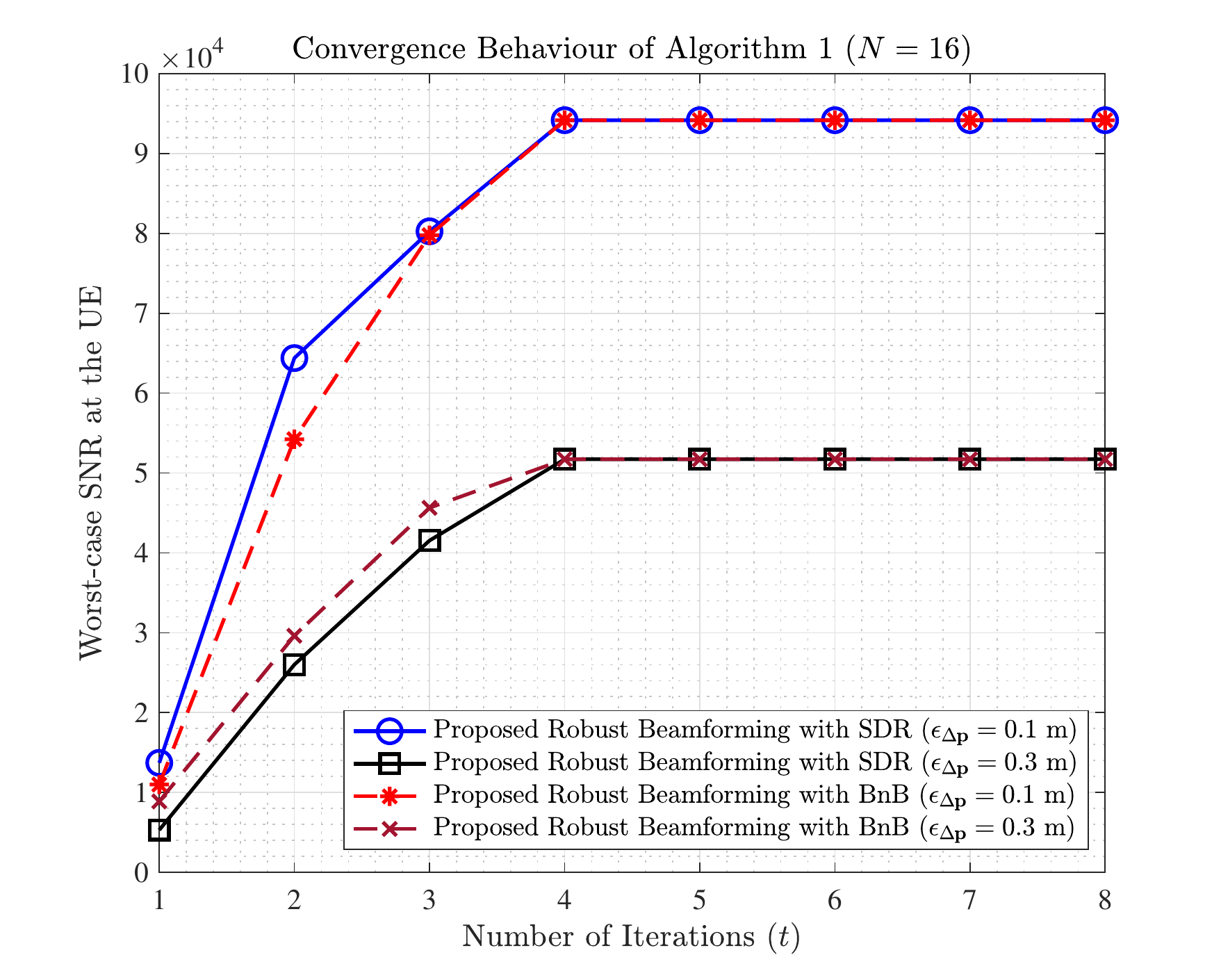}
\hfil
\centering
\caption{\color{black}Convergence behaviour of the proposed algorithm with respect to distinct $\epsilon_{\Delta\mathbf{p}}$ when $N=16$. }
\label{Convergence_Behaviour}
\end{figure}

{\color{black} Fig. \ref{Convergence_Behaviour} depicts the convergence behaviour of the proposed algorithm at $N=16$.  In Fig. \ref{Convergence_Behaviour}, each simulation curve is obtained by averaging on five Monte Carlo trials with randomly initialized RIS phase shifts. It is demonstrated that the proposed algorithm with SDR/BnB is convergent, verifying the theoretical analysis in Proposition 3 and 4. Furthermore, it also indicates that the proposed algorithm can possibly converge to the optimum after around 4 iterations. This could be a potential key advantage facilitating the practical application of the proposed design.}

\begin{figure}[!t]
\includegraphics[width=3.2in]{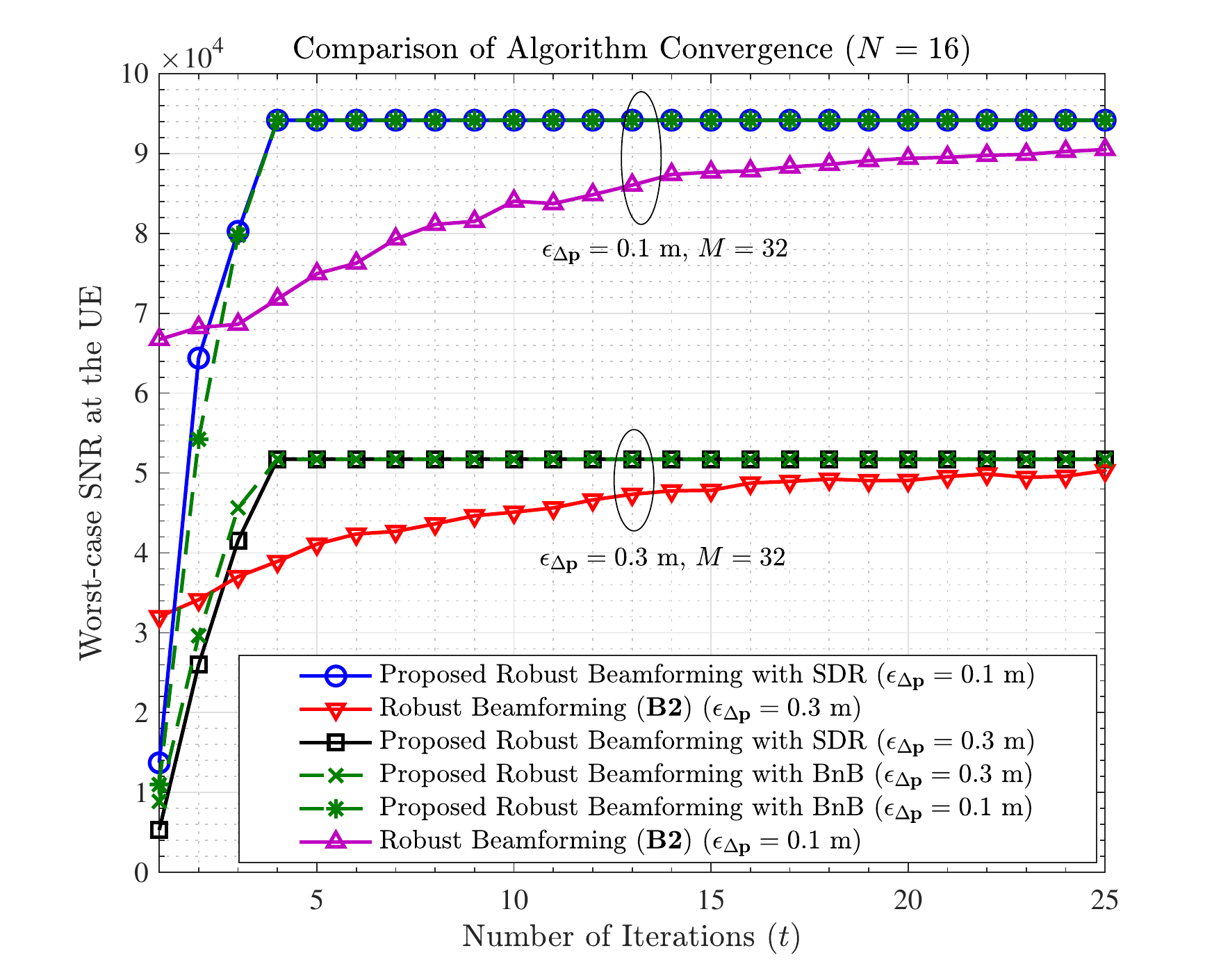}
\hfil
\centering
\caption{\color{black}Comparisons of the convergence rates of the proposed algorithm and \textbf{B2}, when $\mathcal{S}=[0,2\pi]$, $N=16$, and $\epsilon_{\Delta\mathbf{p}}=0.1$ m, $0.3$ m. }
\label{Convergence_Comparison}
\end{figure}

Fig. \ref{Convergence_Comparison} compares the convergence rates of the proposed algorithm and \textbf{B2} when $\mathcal{S}=[0,2\pi]$, showing that the proposed algorithm converges after around $T_P^{Con}=4$ iterations, while \textbf{B2} converges after around $T_R^{Con}=20$ iterations. This demonstrates that our algorithm can outperform \textbf{B2} in terms of the convergence rate, illustrating meanwhile that our proposed design would be more advantageous in terms of the algorithm efficiency.

\begin{figure}[!t]
\includegraphics[width=3.2in]{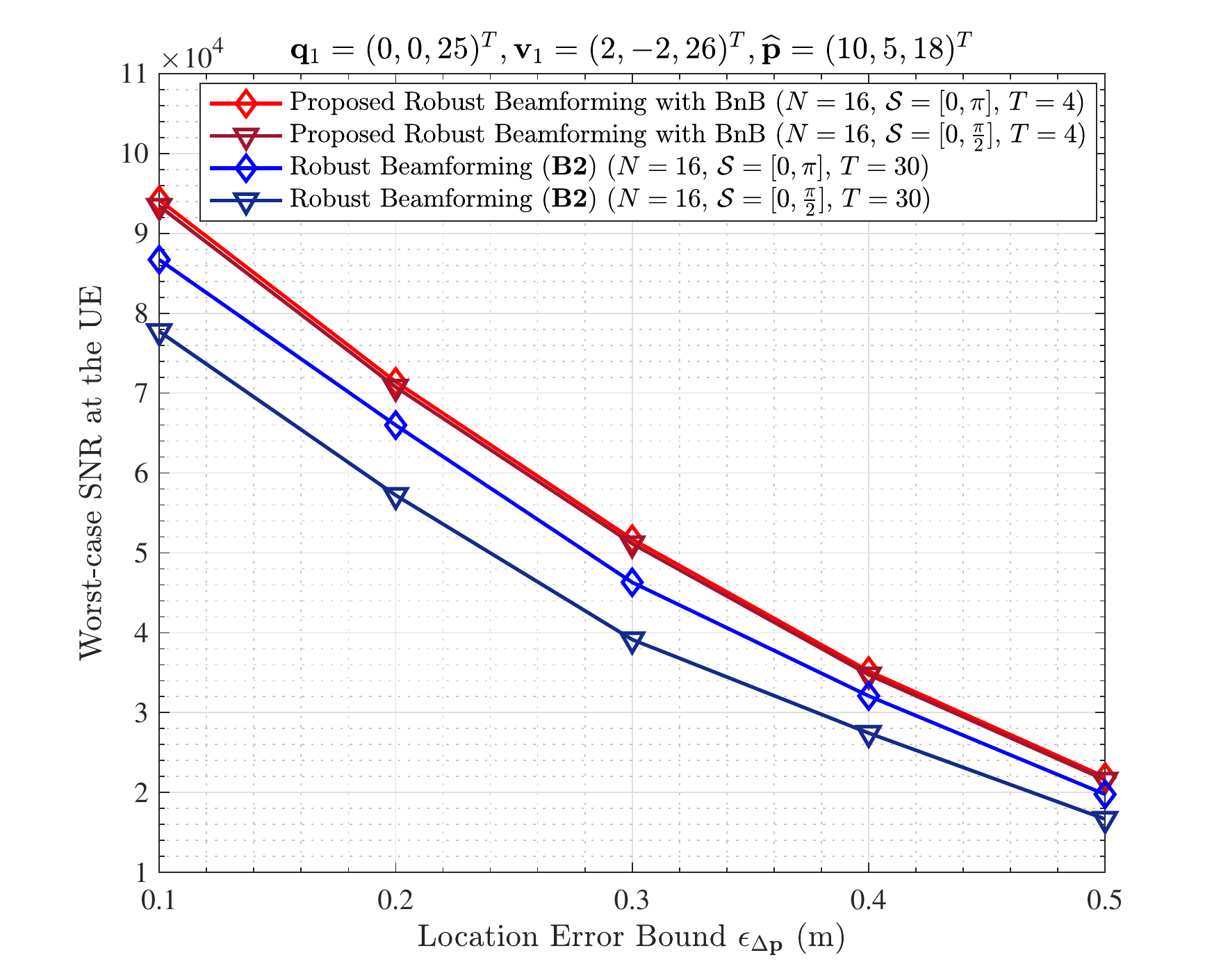}
\hfil
\centering
\caption{\color{black}Worst-case SNRs at the UE when the phase shift argument sets are $\mathcal{S}=[0,\pi]$ and $\mathcal{S}=[0,\frac{\pi}{2}]$.}
\label{Worstcase_Signal_Strength_0topi}
\end{figure}

Finally, in order to justify that the proposed algorithm with BnB possesses the advantage of handling arbitrary phase shift argument sets, we take $\mathcal{S}=[\ell_l,\ell_u]$ as an example, and investigate the performances of \textbf{B2} and the proposed algorithm with BnB in the cases of $\mathcal{S}=[0,\pi]$ and $\mathcal{S}=[0,\frac{\pi}{2}]$. Since the solutions of \textbf{B2} intrinsically belong to $[0,2\pi]$, we perform the rounding operations for $\theta_{\ell+(k-1)L}$ to map the arguments into $\theta_{\ell+(k-1)L}\in[0,\pi]$ and $\theta_{\ell+(k-1)L}\in[0,\frac{\pi}{2}]$ after running \textbf{B2}. Specifically, under the argument constraint of $\mathcal{S}=[0,\pi]$, we perform
\begin{equation}
\left\{\begin{matrix}\theta_{\ell+(k-1)L} = \pi,\ \ \ \mathrm{if}\ \theta_{\ell+(k-1)L}\in(\pi,\frac{3\pi}{2}) ;\\\theta_{\ell+(k-1)L} = 0,\ \ \ \mathrm{if}\ \theta_{\ell+(k-1)L}\in[\frac{3\pi}{2},2\pi],\\\end{matrix}\right.
\end{equation}
and under the argument constraint of $\mathcal{S}=[0,\frac{\pi}{2}]$, we perform
\begin{equation}
\left\{\begin{matrix}\theta_{\ell+(k-1)L} = \frac{\pi}{2},\ \ \ \mathrm{if}\ \theta_{\ell+(k-1)L}\in(\frac{\pi}{2},\frac{5\pi}{4}) ;\\\theta_{\ell+(k-1)L} = 0,\ \ \ \mathrm{if}\ \theta_{\ell+(k-1)L}\in[\frac{5\pi}{4},2\pi].\\\end{matrix}\right.
\end{equation}

Fig. \ref{Worstcase_Signal_Strength_0topi} depicts the worst-case SNRs at the {\color{black}UE} with the two phase shift argument sets, demonstrating that when $\mathcal{S}\neq[0,2\pi]$, the BnB can still maintain good SNR performance, whereas \textbf{B2} performs worse as \textbf{B2} only applies to the phase shift argument constraint of $\mathcal{S}=[0,2\pi]$.

\vspace*{-8pt}

\section{Conclusions and Prospects}

In this paper, when adopting the user location to acquire the CSI in the RIS-aided wireless communication system, we theoretically derived an approximate CSI error bound in the presence of user location uncertainty, with the aid of Taylor approximation, triangle and power mean inequalities, and SDR method. Then, in order to resist the impact of location error on the beamforming design, we formulated a worst-case robust beamforming optimization problem to jointly optimize the transmit and passive beamforming. To solve this non-convex problem efficiently, we proposed an iterative optimization approach based on Lagrange multiplier and matrix inverse lemma, and evaluated its performance in terms of the worst-case SNR at the UE, convergence behaviour, and algorithm efficiency.
Results verified the theoretical derivation of the CSI error bound, and validated the robustness of the proposed approach. Compared with the existing S-procedure and penalty CCP based robust beamforming techniques, our algorithm could perform better and converge more quickly to the optimum. Consequently, the proposed design in this work would be promising in contributing to the development of location information assisted RIS-aided communications.

{\color{black}Finally, a brief discussion on the potential applicability and extensibility of the proposed beamforming design is presented, from the following two aspects:

\textbf{1) UE mobility}: If the UE is moving, the system performance could be maintained only when the actual UE location is still within the uncertainty region of $\|\Delta\mathbf{p}\|_2\leq \epsilon_{\Delta\mathbf{p}}$ before the transmit/passive beamforming is updated. Nevertheless, the high-speed mobility of the UE may generally not be neglected in several cases, e.g. when the UE is inside a moving vehicle on the high-way. If the UE location changes significantly within a short period, the location information assisted beamforming optimization may provide a hysteretic result, which degrades the current system performance to some extent. Hence, it is worth developing a high-efficiency and low-latency approach to combat the potential negative effect caused by the high-speed UE movement in the future.
\textbf{2) RIS calibration}: This work considers the constant-modulus reflection model, implying that the inherent element responses of the RIS reflectors are approximately invariant and same toward arbitrary incident/reflective directions, which however, may not exactly be the case in practice. As such, the mismatch between the reconstructed and the actual RIS-UE channels may generally be non-negligible, on account of the differences of the RIS element responses toward distinct UE locations. To compensate for such a mismatch, developing an efficient RIS calibration procedure deserves further effort.

}


\appendices

\section{Proof of Lemma 1}

In Appendix A, we prove the results in Lemma 1. Since the cosine term in (\ref{Omega_Delta_p}) is the dominant factor that makes the derivation for the maximum of $\Omega\left(\Delta\mathbf{p}\right)$ become challenging, we first apply the fourth-order Taylor expansion of $\cos{x}\approx 1-\frac{1}{2!}x^2+\frac{1}{4!}x^4$ at $x=0$ to approximate $\cos{\left(\frac{2\pi}{\lambda} \mathfrak{U}_{\ell+(k-1)L}\right)}$, and obtain
\begin{equation}\label{Cosine_Taylor}
\cos{\left(\frac{2\pi}{\lambda} \mathfrak{U}_{\ell+(k-1)L}\right)}\approx
1 - \frac{2\pi^2}{\lambda^2}\mathfrak{U}_{\ell+(k-1)L}^2
+ \frac{2\pi^4}{3\lambda^4}\mathfrak{U}_{\ell+(k-1)L}^4.
\end{equation}
Substituting (\ref{Cosine_Taylor}) into (\ref{Omega_Delta_p}), after some manipulations, we have 
\begin{equation}\label{Omega_Delta_p_fourth_Taylor}
\Omega\left(\Delta\mathbf{p}\right) \approx T_1\left(\Delta\mathbf{p}\right) + T_2\left(\Delta\mathbf{p}\right),
\end{equation}
where $T_1\left(\Delta\mathbf{p}\right)$ and $T_2\left(\Delta\mathbf{p}\right)$ are expressed as (\ref{T_1_D_p}) and (\ref{T_2_D_p}) on the top of the next page.
\begin{figure*}[!t]
\normalsize
\begin{scriptsize}
\begin{equation}\label{T_1_D_p}
\begin{split}
T_1\left(\Delta\mathbf{p}\right)=&
\sum_{k=1}^{L}\sum_{\ell=1}^{L}\left\{\left\|\mathbf{v}_{\ell+(k-1)L}-\widehat{\mathbf{p}}-\Delta\mathbf{p}\right\|_2^{-\alpha} + \left\|\mathbf{v}_{\ell+(k-1)L}-\widehat{\mathbf{p}}\right\|_2^{-\alpha} - 2\left\|\mathbf{v}_{\ell+(k-1)L}-\widehat{\mathbf{p}}-\Delta\mathbf{p}\right\|_2^{-\frac{\alpha}{2}} \left\|\mathbf{v}_{\ell+(k-1)L}-\widehat{\mathbf{p}}\right\|_2^{-\frac{\alpha}{2}} \right\},
\end{split}
\end{equation}
\begin{equation}\label{T_2_D_p}
\begin{split}
T_2\left(\Delta\mathbf{p}\right)= \sum_{k=1}^{L}\sum_{\ell=1}^{L}\left\{ \left\|\mathbf{v}_{\ell+(k-1)L}-\widehat{\mathbf{p}}-\Delta\mathbf{p}\right\|_2^{-\frac{\alpha}{2}} 
 \left\|\mathbf{v}_{\ell+(k-1)L}-\widehat{\mathbf{p}}\right\|_2^{-\frac{\alpha}{2}} 
\left(\frac{4\pi^2}{\lambda^2}\mathfrak{U}_{\ell+(k-1)L}^2
- \frac{4\pi^4}{3\lambda^4}\mathfrak{U}_{\ell+(k-1)L}^4\right)
\right\}.
\end{split}
\end{equation}
\end{scriptsize}
\vspace*{-18pt}
\end{figure*}

Note that because $\cos{\left(\frac{2\pi}{\lambda} \mathfrak{U}_{\ell+(k-1)L}\right)} \leq 1$, we have
$1 - \frac{2\pi^2}{\lambda^2}\mathfrak{U}_{\ell+(k-1)L}^2
+ \frac{2\pi^4}{3\lambda^4}\mathfrak{U}_{\ell+(k-1)L}^4
\leq 1$,
yielding
$\frac{4\pi^2}{\lambda^2}\mathfrak{U}_{\ell+(k-1)L}^2
- \frac{4\pi^4}{3\lambda^4}\mathfrak{U}_{\ell+(k-1)L}^4
\geq 0$.
Therefore, by utilizing the triangle inequality of
$\left\|\mathbf{v}_{\ell+(k-1)L}-\widehat{\mathbf{p}}-\Delta\mathbf{p}\right\|_2\geq
\left\|\mathbf{v}_{\ell+(k-1)L}-\widehat{\mathbf{p}}\right\|_2 - \left\|\Delta\mathbf{p}\right\|_2
\geq
\left\|\mathbf{v}_{\ell+(k-1)L}-\widehat{\mathbf{p}}\right\|_2 -\epsilon_{\Delta\mathbf{p}}$,
we have 
$\left\|\mathbf{v}_{\ell+(k-1)L}-\widehat{\mathbf{p}}-\Delta\mathbf{p}\right\|_2^{-\frac{\alpha}{2}}
\leq
\left(\left\|\mathbf{v}_{\ell+(k-1)L}-\widehat{\mathbf{p}}\right\|_2 -\epsilon_{\Delta\mathbf{p}}\right)^{-\frac{\alpha}{2}}$
because $\alpha$ is positive, and obtain an upper bound of $\Omega\left(\Delta\mathbf{p}\right)$, denoted by $\Omega_1\left(\Delta\mathbf{p}\right)$, as
\begin{equation}\label{Omega_1_Delta_p}
\Omega\left(\Delta\mathbf{p}\right)\leq \Omega_1\left(\Delta\mathbf{p}\right)=T_1\left(\Delta\mathbf{p}\right) + T_2^{\mathrm{Upp}}\left(\Delta\mathbf{p}\right),
\end{equation}
where $T_2^{\mathrm{Upp}}\left(\Delta\mathbf{p}\right)$ is specified in (\ref{T_2_Upp_D_p}) on the top of the next page, which is an upper bound of $T_2\left(\Delta\mathbf{p}\right)$.
\begin{figure*}[!t]
\normalsize
\begin{scriptsize}
\begin{equation}\label{T_2_Upp_D_p}
\begin{split}
T_2^{\mathrm{Upp}}\left(\Delta\mathbf{p}\right)= \sum_{k=1}^{L}\sum_{\ell=1}^{L}\left\{ \left(\left\|\mathbf{v}_{\ell+(k-1)L}-\widehat{\mathbf{p}}\right\|_2 -\epsilon_{\Delta\mathbf{p}}\right)^{-\frac{\alpha}{2}}
 \left\|\mathbf{v}_{\ell+(k-1)L}-\widehat{\mathbf{p}}\right\|_2^{-\frac{\alpha}{2}} 
\left(\frac{4\pi^2}{\lambda^2}\mathfrak{U}_{\ell+(k-1)L}^2
- \frac{4\pi^4}{3\lambda^4}\mathfrak{U}_{\ell+(k-1)L}^4\right)
\right\}.
\end{split}
\end{equation}
\end{scriptsize}
\vspace*{-18pt}
\end{figure*}

Subsequently, as it is still difficult to derive the maximum of $\Omega_1\left(\Delta\mathbf{p}\right)$ under the constraint of $\|\Delta\mathbf{p}\|_2\leq \epsilon_{\Delta\mathbf{p}}$, we focus on finding an approximation of $T_1\left(\Delta\mathbf{p}\right)$. As the value of $\alpha$ is generally around 2 in the sparse geometry channel model \cite{K.Zhi-WCL2021}, the second-order Taylor approximation for $T_1\left(\Delta\mathbf{p}\right)$ can be sufficiently accurate. Therefore, by using the second-order Taylor expansion:
\begin{equation}\nonumber
T_1\left(\Delta\mathbf{p}\right)\approx
T_1\left(\mathbf{0}\right) +\left( \nabla T_1\left(\mathbf{0}\right)\right)^{\mathrm{T}} \Delta\mathbf{p} + \frac{1}{2} \Delta\mathbf{p}^{\mathrm{T}} \nabla^2 T_1\left(\mathbf{0}\right) \Delta\mathbf{p}
\end{equation}
at $\Delta\mathbf{p} = \mathbf{0}$ to approximate $T_1\left(\Delta\mathbf{p}\right)$, and after a few manipulations, we have
\begin{equation}
\Omega_1\left(\Delta\mathbf{p}\right)\approx
\Delta\mathbf{p}^{\mathrm{T}}\mathbf{R}\Delta\mathbf{p} + T_3\left(\Delta\mathbf{p}\right),
\end{equation}
where $\mathbf{R}$ is given by (\ref{mathbf_R}) in Lemma 1, while $T_3\left(\Delta\mathbf{p}\right)$ is expressed as (\ref{T_3_D_p}) on the top of the next page.
\begin{figure*}[!t]
\normalsize
\begin{scriptsize}
\begin{equation}\label{T_3_D_p}
\begin{split}
T_3\left(\Delta\mathbf{p}\right)= 
-\frac{4\pi^4}{3\lambda^4}\sum_{k=1}^{L}\sum_{\ell=1}^{L}
\left\{ \left[\Delta\mathbf{p}^{\mathrm{T}}\left(\left\|\mathbf{v}_{\ell+(k-1)L}-\widehat{\mathbf{p}}\right\|_2 -\epsilon_{\Delta\mathbf{p}}\right)^{-\frac{\alpha}{4}}
 \left\|\mathbf{v}_{\ell+(k-1)L}-\widehat{\mathbf{p}}\right\|_2^{-\frac{\alpha}{4}} 
 \mathbf{\Xi}_{\ell+(k-1)L} \Delta\mathbf{p}\right]^2
\right\}.
\end{split}
\end{equation}
\end{scriptsize}
\vspace*{-15pt}
\end{figure*}
In (\ref{T_3_D_p}), because a quadratic function with respect to $\Delta\mathbf{p}$ appears as its square form inside the summation operator in $T_3\left(\Delta\mathbf{p}\right)$, we use the power mean inequality, i.e. 
$\frac{1}{N}\sum_{i=1}^{N}x_i\leq\sqrt{\frac{1}{N}\sum_{i=1}^{N}x_i^2}$
for $x_i\geq0$ to derive an upper bound of $T_3\left(\Delta\mathbf{p}\right)$, denoted by $T_3^{\mathrm{Upp}}\left(\Delta\mathbf{p}\right)$, and obtain (\ref{T_3_Upp_D_p}) on the top of the next page, 
\begin{figure*}[!t]
\normalsize
\begin{scriptsize}
\begin{equation}\label{T_3_Upp_D_p}
\begin{split}
T_3\left(\Delta\mathbf{p}\right)\leq T_3^{\mathrm{Upp}}\left(\Delta\mathbf{p}\right) &= 
-\frac{4\pi^4}{3\lambda^4 N}\left(\sum_{k=1}^{L}\sum_{\ell=1}^{L}
\left\{ \Delta\mathbf{p}^{\mathrm{T}}\left(\left\|\mathbf{v}_{\ell+(k-1)L}-\widehat{\mathbf{p}}\right\|_2 -\epsilon_{\Delta\mathbf{p}}\right)^{-\frac{\alpha}{4}}
 \left\|\mathbf{v}_{\ell+(k-1)L}-\widehat{\mathbf{p}}\right\|_2^{-\frac{\alpha}{4}} 
 \mathbf{\Xi}_{\ell+(k-1)L} \Delta\mathbf{p}
\right\}\right)^2 
=  -\frac{4\pi^4}{3\lambda^4 N}\left(\Delta\mathbf{p}^{\mathrm{T}}\mathbf{S}\Delta\mathbf{p}\right)^2.
\end{split}
\end{equation}
\end{scriptsize}
\vspace*{-22pt}
\end{figure*}
where $\mathbf{S}$ is detailed in (\ref{mathbf_S}) in Lemma 1.
Thus, $\Omega_1\left(\Delta\mathbf{p}\right)$ is upper bounded by
\begin{equation}
\Omega_1\left(\Delta\mathbf{p}\right) \leq \Omega_2\left(\Delta\mathbf{p}\right)=\Delta\mathbf{p}^{\mathrm{T}}\mathbf{R}\Delta\mathbf{p} -\frac{4\pi^4}{3\lambda^4 N}\left(\Delta\mathbf{p}^{\mathrm{T}}\mathbf{S}\Delta\mathbf{p}\right)^2.
\end{equation}

Consequently, we have $\Omega\left(\Delta\mathbf{p}\right)\leq \Omega_1\left(\Delta\mathbf{p}\right) \leq \Omega_2\left(\Delta\mathbf{p}\right)$, indicating that $\Omega_2\left(\Delta\mathbf{p}\right)$ is an approximate upper bound of $\Omega\left(\Delta\mathbf{p}\right)$. In order to acquire the maximum of $\Omega_2\left(\Delta\mathbf{p}\right)$ under the constraint of $\|\Delta\mathbf{p}\|_2\leq \epsilon_{\Delta\mathbf{p}}$, we can formulate the following maximization problem:
\begin{subequations}\label{Omega_2_maximization_problem}
\begin{align}
\mathop{\max}\limits_{\Delta\mathbf{p}}&
\ \Omega_2\left(\Delta\mathbf{p}\right)=\Delta\mathbf{p}^{\mathrm{T}}\mathbf{R}\Delta\mathbf{p} -\frac{4\pi^4}{3\lambda^4 N}\left(\Delta\mathbf{p}^{\mathrm{T}}\mathbf{S}\Delta\mathbf{p}\right)^2,
\\ \mathrm{s.t.} &\ \|\Delta\mathbf{p}\|_2\leq \epsilon_{\Delta\mathbf{p}}.
\end{align}
\end{subequations}

By defining $\mathbf{P}=\Delta\mathbf{p}\Delta\mathbf{p}^{\mathrm{T}}$ and using SDR, problem (\ref{Omega_2_maximization_problem}) can be further transformed by dropping the rank-one constraint into problem (\ref{Obtain_CSI_Error_Bound_Problem}) in Lemma 1. Problem (\ref{Obtain_CSI_Error_Bound_Problem}) is convex, because: 1) the $tr(\cdot)$ is linear, while the $-(\cdot)^2 + (\cdot)$ is concave, making the objective function in (\ref{Obtain_CSI_Error_Bound_Problem}a) concave with respect to $\Delta\mathbf{p}$; 2) the constraint (\ref{Obtain_CSI_Error_Bound_Problem}b) is convex. Hence, it can be solved by CVX, after which the maximum of $\Omega_2\left(\Delta\mathbf{p}\right)$ under $\|\Delta\mathbf{p}\|_2\leq \epsilon_{\Delta\mathbf{p}}$ can be derived. Because $\Omega_2\left(\Delta\mathbf{p}\right)$ is an approximate upper bound of $\Omega\left(\Delta\mathbf{p}\right)$, the maximum of $\Omega_2\left(\Delta\mathbf{p}\right)$ is also an approximate upper bound of the maximum of $\Omega\left(\Delta\mathbf{p}\right)$, which completes the proof.



\section{Proof of Proposition 1}


In Appendix B, we prove that problem (\ref{P4}) can be transformed into {\color{black}problem (\ref{P5})}.
Here, the existence of $\left(\mathbf{\Theta}\mathbf{H}_{\mathrm{BR}}\mathbf{H}_{\mathrm{BR}}^{\mathrm{H}}\mathbf{\Theta}^{\mathrm{H}}+\mu\mathbf{I}\right)^{-1}$ is the primary factor that makes problem (\ref{P4}) tough to compute. Fortunately, as $\mu\mathbf{I}$ is invertible, $\left(\mathbf{\Theta}\mathbf{H}_{\mathrm{BR}}\mathbf{H}_{\mathrm{BR}}^{\mathrm{H}}\mathbf{\Theta}^{\mathrm{H}}+\mu\mathbf{I}\right)^{-1}$ can be further expanded with the aided of the matrix inversion lemma. Although the matrix inversion lemma generally leads to an expansion in a more complicated form, it is remarkable that in problem (\ref{P4}), we have {\color{black}$\mathbf{\Theta}^{\mathrm{H}}\mathbf{\Theta}=\beta^2\mathbf{I}$}, which can potentially make the expansion become rather simple.

According to the matrix inversion lemma, we have
\begin{equation}\nonumber
(\mathbf{A} + \mathbf{U}\mathbf{B}\mathbf{V})^{-1} =
\mathbf{A}^{-1} - \mathbf{A}^{-1}\mathbf{U}\mathbf{B}(\mathbf{I}+\mathbf{V}\mathbf{A}^{-1}\mathbf{U}\mathbf{B})^{-1}\mathbf{V}\mathbf{A}^{-1},
\end{equation}
if $\mathbf{A}$ is invertible. Hence, let $\mathbf{A}=\mu\mathbf{I}$, $\mathbf{U}\mathbf{B}=\mathbf{\Theta}\mathbf{H}_{\mathrm{BR}}$ and $\mathbf{V}=\mathbf{H}_{\mathrm{BR}}^{\mathrm{H}}\mathbf{\Theta}^{\mathrm{H}}$. Then, we obtain (\ref{Inverse_Expansion}) on the top of the next page. 
\begin{figure*}[!t]
\normalsize
\begin{scriptsize}
{\color{black}
\begin{equation}\label{Inverse_Expansion}
\begin{split}
\left(\mathbf{\Theta}\mathbf{H}_{\mathrm{BR}}\mathbf{H}_{\mathrm{BR}}^{\mathrm{H}}\mathbf{\Theta}^{\mathrm{H}}+\mu\mathbf{I}\right)^{-1}
&=\mu^{-1}\mathbf{I}-\mu^{-2}\mathbf{\Theta}\mathbf{H}_{\mathrm{BR}} \left(\mathbf{I}+\mu^{-1}\mathbf{H}_{\mathrm{BR}}^{\mathrm{H}}\underbrace{\mathbf{\Theta}^{\mathrm{H}}\mathbf{\Theta}}_{\beta^2\mathbf{I}}\mathbf{H}_{\mathrm{BR}}\right)^{-1}\mathbf{H}_{\mathrm{BR}}^{\mathrm{H}}\mathbf{\Theta}^{\mathrm{H}}
=\mu^{-1}\mathbf{I}-\mu^{-2}\mathbf{\Theta}\mathbf{H}_{\mathrm{BR}} \left(\mathbf{I}+\beta^2\mu^{-1}\mathbf{H}_{\mathrm{BR}}^{\mathrm{H}}\mathbf{H}_{\mathrm{BR}}\right)^{-1}\mathbf{H}_{\mathrm{BR}}^{\mathrm{H}}\mathbf{\Theta}^{\mathrm{H}}.
\end{split}
\end{equation}
}
\end{scriptsize}
\hrulefill
\vspace*{-15pt}
\end{figure*}
Eq. (\ref{Inverse_Expansion}) indicates that $\left(\mathbf{\Theta}\mathbf{H}_{\mathrm{BR}}\mathbf{H}_{\mathrm{BR}}^{\mathrm{H}}\mathbf{\Theta}^{\mathrm{H}}+\mu\mathbf{I}\right)^{-1}$ can be expanded into an expression, in which $\mathbf{\Theta}$ is not included in the inverse operator.
Subsequently, using (\ref{Inverse_Expansion}), we further simplify {\color{black} $\mathcal{F}(\mu,\mathbf{\Theta})$ and $\mathcal{C}(\mu,\mathbf{\Theta})$} in problem (\ref{P4}). First, we simplify $\mathcal{F}(\mu,\mathbf{\Theta})$. By substituting (\ref{Inverse_Expansion}) into (\ref{mathcal_F}), we obtain (\ref{Objective_Fun_Simplify}) on the top of the next page, 
\begin{figure*}[!t]
\normalsize
\begin{scriptsize}
{\color{black}
\begin{equation}\label{Objective_Fun_Simplify}
\begin{split}
&\!\!\!\!\!\!\!\!\!\!\!\!\!\!\!\!\!\!\!\!\!\!\!\!\!\!\mathcal{F}(\mu,\mathbf{\Theta})
=\left\|
\left\{\widehat{\mathbf{h}}_{\mathrm{RU}}^{\mathrm{LoS}}-\left[\mu^{-1}\mathbf{I}-\mu^{-2}\mathbf{\Theta}\mathbf{H}_{\mathrm{BR}} \left(\mathbf{I}+\beta^2\mu^{-1}\mathbf{H}_{\mathrm{BR}}^{\mathrm{H}}\mathbf{H}_{\mathrm{BR}}\right)^{-1}\mathbf{H}_{\mathrm{BR}}^{\mathrm{H}}\mathbf{\Theta}^{\mathrm{H}}\right]\mathbf{\Theta}\mathbf{H}_{\mathrm{BR}}\mathbf{H}_{\mathrm{BR}}^{\mathrm{H}}\mathbf{\Theta}^{\mathrm{H}} \widehat{\mathbf{h}}_{\mathrm{RU}}^{\mathrm{LoS}}\right\}^{\mathrm{H}} \mathbf{\Theta}\mathbf{H}_{\mathrm{BR}}
\right\|_2^2\\
=&\left\|
\left\{\widehat{\mathbf{h}}_{\mathrm{RU}}^{\mathrm{LoS}}-\left[\mu^{-1}\mathbf{\Theta}\mathbf{H}_{\mathrm{BR}}\mathbf{H}_{\mathrm{BR}}^{\mathrm{H}}\mathbf{\Theta}^{\mathrm{H}} \widehat{\mathbf{h}}_{\mathrm{RU}}^{\mathrm{LoS}}-\mu^{-2}\mathbf{\Theta}\mathbf{H}_{\mathrm{BR}} \left(\mathbf{I}+\beta^2\mu^{-1}\mathbf{H}_{\mathrm{BR}}^{\mathrm{H}}\mathbf{H}_{\mathrm{BR}}\right)^{-1}\mathbf{H}_{\mathrm{BR}}^{\mathrm{H}}\underbrace{\mathbf{\Theta}^{\mathrm{H}}\mathbf{\Theta}}_{\beta^2\mathbf{I}}\mathbf{H}_{\mathrm{BR}}\mathbf{H}_{\mathrm{BR}}^{\mathrm{H}}\mathbf{\Theta}^{\mathrm{H}} \widehat{\mathbf{h}}_{\mathrm{RU}}^{\mathrm{LoS}}\right]\right\}^{\mathrm{H}} \mathbf{\Theta}\mathbf{H}_{\mathrm{BR}}
\right\|_2^2\\
=&\left\|
(\widehat{\mathbf{h}}_{\mathrm{RU}}^{\mathrm{LoS}})^{\mathrm{H}} \mathbf{\Theta}\mathbf{H}_{\mathrm{BR}}
-\mu^{-1}(\widehat{\mathbf{h}}_{\mathrm{RU}}^{\mathrm{LoS}})^{\mathrm{H}}\mathbf{\Theta}\mathbf{H}_{\mathrm{BR}}\mathbf{H}_{\mathrm{BR}}^{\mathrm{H}}\underbrace{\mathbf{\Theta}^{\mathrm{H}} \mathbf{\Theta}}_{\beta^2\mathbf{I}}\mathbf{H}_{\mathrm{BR}}
+\beta^2\mu^{-2}(\widehat{\mathbf{h}}_{\mathrm{RU}}^{\mathrm{LoS}})^{\mathrm{H}}\mathbf{\Theta}\mathbf{H}_{\mathrm{BR}}\mathbf{H}_{\mathrm{BR}}^{\mathrm{H}}\mathbf{H}_{\mathrm{BR}} \left(\mathbf{I}+\beta^2\mu^{-1}\mathbf{H}_{\mathrm{BR}}^{\mathrm{H}}\mathbf{H}_{\mathrm{BR}}\right)^{-1}\mathbf{H}_{\mathrm{BR}}^{\mathrm{H}}\underbrace{\mathbf{\Theta}^{\mathrm{H}} \mathbf{\Theta}}_{\beta^2\mathbf{I}}\mathbf{H}_{\mathrm{BR}}
\right\|_2^2\\
=&\left\|
(\widehat{\mathbf{h}}_{\mathrm{RU}}^{\mathrm{LoS}})^{\mathrm{H}} \mathbf{\Theta}\left[
\mathbf{H}_{\mathrm{BR}}
-\beta^2\mu^{-1}\mathbf{H}_{\mathrm{BR}}\mathbf{H}_{\mathrm{BR}}^{\mathrm{H}}\mathbf{H}_{\mathrm{BR}}
+\beta^4\mu^{-2}\mathbf{H}_{\mathrm{BR}}\mathbf{H}_{\mathrm{BR}}^{\mathrm{H}}\mathbf{H}_{\mathrm{BR}} \left(\mathbf{I}+\beta^2\mu^{-1}\mathbf{H}_{\mathrm{BR}}^{\mathrm{H}}\mathbf{H}_{\mathrm{BR}}\right)^{-1}\mathbf{H}_{\mathrm{BR}}^{\mathrm{H}}\mathbf{H}_{\mathrm{BR}}\right]
\right\|_2^2\\
&\!\!\!\!\!\!\!\!\overset{(\mathrm{A1})}{=}
\bm{\theta}^{\mathrm{T}}\underbrace{\mathbf{\Upsilon}(\mu)}_{\mathrm{Independent\ of}\ \mathbf{\Theta}}\bm{\theta}^{\mathrm{*}}.
\end{split}
\end{equation}
}
\end{scriptsize}
\hrulefill
\vspace*{-15pt}
\end{figure*}
where {\color{black} the derivation (A1) uses the properties of $(\widehat{\mathbf{h}}_{\mathrm{RU}}^{\mathrm{LoS}})^{\mathrm{H}} \mathbf{\Theta}=\bm{\theta}^{\mathrm{T}} \mathrm{diag}\{(\widehat{\mathbf{h}}_{\mathrm{RU}}^{\mathrm{LoS}})^{\mathrm{H}}\}$ and $\mathbf{\Theta}^{\mathrm{H}} \widehat{\mathbf{h}}_{\mathrm{RU}}^{\mathrm{LoS}} =\mathrm{diag}\{\widehat{\mathbf{h}}_{\mathrm{RU}}^{\mathrm{LoS}}\} \bm{\theta}^{\mathrm{*}} $}, and $\mathbf{\Upsilon}(\mu)$ is expressed {\color{black}as
\begin{equation}\label{mathbf_Y}
\begin{split}
\!\!\mathbf{\Upsilon}(\mu)\!\!=&
\mathrm{diag}\{(\widehat{\mathbf{h}}_{\mathrm{RU}}^{\mathrm{LoS}})^{\mathrm{H}}\} \mathbf{Z}(\mu) (\mathbf{Z}(\mu))^{\mathrm{H}} \mathrm{diag}\{\widehat{\mathbf{h}}_{\mathrm{RU}}^{\mathrm{LoS}}\},
\end{split}
\end{equation}
with $\mathbf{Z}(\mu)$ given by
\begin{equation}\label{Z_mu}
\begin{split}
\mathbf{Z}(\mu) = & 
\mathbf{H}_{\mathrm{BR}}
-\beta^2\mu^{-1}\mathbf{H}_{\mathrm{BR}}\mathbf{H}_{\mathrm{BR}}^{\mathrm{H}}\mathbf{H}_{\mathrm{BR}}\\
&+\beta^4\mu^{-2}\mathbf{H}_{\mathrm{BR}}\mathbf{H}_{\mathrm{BR}}^{\mathrm{H}}\mathbf{H}_{\mathrm{BR}} \\
&\times\left(\mathbf{I}+\beta^2\mu^{-1}\mathbf{H}_{\mathrm{BR}}^{\mathrm{H}}\mathbf{H}_{\mathrm{BR}}\right)^{-1}
\mathbf{H}_{\mathrm{BR}}^{\mathrm{H}}\mathbf{H}_{\mathrm{BR}}  \\
=& \left[ \mathbf{I} -  \mathbf{H}_{\mathrm{BR}}\mathbf{H}_{\mathrm{BR}}^{\mathrm{H}} (\beta^{-2}\mu \mathbf{I} + \mathbf{H}_{\mathrm{BR}}\mathbf{H}_{\mathrm{BR}}^{\mathrm{H}})^{-1} \right] \mathbf{H}_{\mathrm{BR}}.
\end{split}
\end{equation}
Then, we simplify $\mathcal{C}(\mu,\mathbf{\Theta})$}. Substituting (\ref{Inverse_Expansion}) into (\ref{bisection}), we obtain (\ref{Constraint_Simplify}) on the top of the next page,
\begin{figure*}[!t]
\normalsize
\begin{scriptsize}
{\color{black}
\begin{equation}\label{Constraint_Simplify}
\begin{split}
\mathcal{C}(\mu,\mathbf{\Theta})=&
\left[\left(\mathbf{\Theta}\mathbf{H}_{\mathrm{BR}}\mathbf{H}_{\mathrm{BR}}^{\mathrm{H}}\mathbf{\Theta}^{\mathrm{H}}+\mu\mathbf{I}\right)^{-1}\mathbf{\Theta}\mathbf{H}_{\mathrm{BR}}\mathbf{H}_{\mathrm{BR}}^{\mathrm{H}}\mathbf{\Theta}^{\mathrm{H}} \widehat{\mathbf{h}}_{\mathrm{RU}}^{\mathrm{LoS}}\right]^{\mathrm{H}}
\left[\left(\mathbf{\Theta}\mathbf{H}_{\mathrm{BR}}\mathbf{H}_{\mathrm{BR}}^{\mathrm{H}}\mathbf{\Theta}^{\mathrm{H}}+\mu\mathbf{I}\right)^{-1}\mathbf{\Theta}\mathbf{H}_{\mathrm{BR}}\mathbf{H}_{\mathrm{BR}}^{\mathrm{H}}\mathbf{\Theta}^{\mathrm{H}} \widehat{\mathbf{h}}_{\mathrm{RU}}^{\mathrm{LoS}}\right] - \epsilon_{\Delta \mathbf{h}_{\mathrm{RU}}}^2\\
=&
\left[\left(\mu^{-1}\mathbf{I}-\mu^{-2}\mathbf{\Theta}\mathbf{H}_{\mathrm{BR}} \left(\mathbf{I}+\beta^2\mu^{-1}\mathbf{H}_{\mathrm{BR}}^{\mathrm{H}}\mathbf{H}_{\mathrm{BR}}\right)^{-1}\mathbf{H}_{\mathrm{BR}}^{\mathrm{H}}\mathbf{\Theta}^{\mathrm{H}}\right)\mathbf{\Theta}\mathbf{H}_{\mathrm{BR}}\mathbf{H}_{\mathrm{BR}}^{\mathrm{H}}\mathbf{\Theta}^{\mathrm{H}} \widehat{\mathbf{h}}_{\mathrm{RU}}^{\mathrm{LoS}}\right]^{\mathrm{H}}\\
&\times \left[\left(\mu^{-1}\mathbf{I}-\mu^{-2}\mathbf{\Theta}\mathbf{H}_{\mathrm{BR}} \left(\mathbf{I}+\beta^2\mu^{-1}\mathbf{H}_{\mathrm{BR}}^{\mathrm{H}}\mathbf{H}_{\mathrm{BR}}\right)^{-1}\mathbf{H}_{\mathrm{BR}}^{\mathrm{H}}\mathbf{\Theta}^{\mathrm{H}}\right)\mathbf{\Theta}\mathbf{H}_{\mathrm{BR}}\mathbf{H}_{\mathrm{BR}}^{\mathrm{H}}\mathbf{\Theta}^{\mathrm{H}} \widehat{\mathbf{h}}_{\mathrm{RU}}^{\mathrm{LoS}}\right] - \epsilon_{\Delta \mathbf{h}_{\mathrm{RU}}}^2\\
=&
\left[\mu^{-1}\mathbf{\Theta}\mathbf{H}_{\mathrm{BR}}\mathbf{H}_{\mathrm{BR}}^{\mathrm{H}}\mathbf{\Theta}^{\mathrm{H}} \widehat{\mathbf{h}}_{\mathrm{RU}}^{\mathrm{LoS}}-\mu^{-2}\mathbf{\Theta}\mathbf{H}_{\mathrm{BR}} \left(\mathbf{I}+\beta^2\mu^{-1}\mathbf{H}_{\mathrm{BR}}^{\mathrm{H}}\mathbf{H}_{\mathrm{BR}}\right)^{-1}\mathbf{H}_{\mathrm{BR}}^{\mathrm{H}}\underbrace{\mathbf{\Theta}^{\mathrm{H}} \mathbf{\Theta}}_{\beta^2\mathbf{I}}\mathbf{H}_{\mathrm{BR}}\mathbf{H}_{\mathrm{BR}}^{\mathrm{H}}\mathbf{\Theta}^{\mathrm{H}} \widehat{\mathbf{h}}_{\mathrm{RU}}^{\mathrm{LoS}}\right]^{\mathrm{H}}\\
&\times \left[\mu^{-1}\mathbf{\Theta}\mathbf{H}_{\mathrm{BR}}\mathbf{H}_{\mathrm{BR}}^{\mathrm{H}}\mathbf{\Theta}^{\mathrm{H}} \widehat{\mathbf{h}}_{\mathrm{RU}}^{\mathrm{LoS}}-\mu^{-2}\mathbf{\Theta}\mathbf{H}_{\mathrm{BR}} \left(\mathbf{I}+\beta^2\mu^{-1}\mathbf{H}_{\mathrm{BR}}^{\mathrm{H}}\mathbf{H}_{\mathrm{BR}}\right)^{-1}\mathbf{H}_{\mathrm{BR}}^{\mathrm{H}}\underbrace{\mathbf{\Theta}^{\mathrm{H}} \mathbf{\Theta}}_{\beta^2\mathbf{I}}\mathbf{H}_{\mathrm{BR}}\mathbf{H}_{\mathrm{BR}}^{\mathrm{H}}\mathbf{\Theta}^{\mathrm{H}} \widehat{\mathbf{h}}_{\mathrm{RU}}^{\mathrm{LoS}}\right] - \epsilon_{\Delta \mathbf{h}_{\mathrm{RU}}}^2\\
=&\ 
\mu^{-2}(\widehat{\mathbf{h}}_{\mathrm{RU}}^{\mathrm{LoS}})^{\mathrm{H}}\mathbf{\Theta}\mathbf{H}_{\mathrm{BR}}\mathbf{H}_{\mathrm{BR}}^{\mathrm{H}}\underbrace{\mathbf{\Theta}^{\mathrm{H}} \mathbf{\Theta}}_{\beta^2\mathbf{I}}\mathbf{H}_{\mathrm{BR}}\mathbf{H}_{\mathrm{BR}}^{\mathrm{H}}\mathbf{\Theta}^{\mathrm{H}}\widehat{\mathbf{h}}_{\mathrm{RU}}^{\mathrm{LoS}}\\
&-
\beta^2\mu^{-3}(\widehat{\mathbf{h}}_{\mathrm{RU}}^{\mathrm{LoS}})^{\mathrm{H}}\mathbf{\Theta}\mathbf{H}_{\mathrm{BR}}\mathbf{H}_{\mathrm{BR}}^{\mathrm{H}}\underbrace{\mathbf{\Theta}^{\mathrm{H}} \mathbf{\Theta}}_{\beta^2\mathbf{I}}\mathbf{H}_{\mathrm{BR}} \left(\mathbf{I}+\beta^2\mu^{-1}\mathbf{H}_{\mathrm{BR}}^{\mathrm{H}}\mathbf{H}_{\mathrm{BR}}\right)^{-1}
\mathbf{H}_{\mathrm{BR}}^{\mathrm{H}}\mathbf{H}_{\mathrm{BR}}\mathbf{H}_{\mathrm{BR}}^{\mathrm{H}} \mathbf{\Theta}^{\mathrm{H}}\widehat{\mathbf{h}}_{\mathrm{RU}}^{\mathrm{LoS}}\\
& - \beta^2\mu^{-3}(\widehat{\mathbf{h}}_{\mathrm{RU}}^{\mathrm{LoS}})^{\mathrm{H}}\mathbf{\Theta}\mathbf{H}_{\mathrm{BR}}\mathbf{H}_{\mathrm{BR}}^{\mathrm{H}}\mathbf{H}_{\mathrm{BR}} \left(\mathbf{I}+\beta^2\mu^{-1}\mathbf{H}_{\mathrm{BR}}^{\mathrm{H}}\mathbf{H}_{\mathrm{BR}}\right)^{-1}
\mathbf{H}_{\mathrm{BR}}^{\mathrm{H}}\underbrace{\mathbf{\Theta}^{\mathrm{H}} \mathbf{\Theta}}_{\beta^2\mathbf{I}}\mathbf{H}_{\mathrm{BR}}\mathbf{H}_{\mathrm{BR}}^{\mathrm{H}} \mathbf{\Theta}^{\mathrm{H}}\widehat{\mathbf{h}}_{\mathrm{RU}}^{\mathrm{LoS}}\\
&+ \beta^4\mu^{-4}(\widehat{\mathbf{h}}_{\mathrm{RU}}^{\mathrm{LoS}})^{\mathrm{H}}\mathbf{\Theta}\mathbf{H}_{\mathrm{BR}}\mathbf{H}_{\mathrm{BR}}^{\mathrm{H}}\mathbf{H}_{\mathrm{BR}} \left(\mathbf{I}+\beta^2\mu^{-1}\mathbf{H}_{\mathrm{BR}}^{\mathrm{H}}\mathbf{H}_{\mathrm{BR}}\right)^{-1}\mathbf{H}_{\mathrm{BR}}^{\mathrm{H}}\underbrace{\mathbf{\Theta}^{\mathrm{H}} \mathbf{\Theta}}_{\beta^2\mathbf{I}}\mathbf{H}_{\mathrm{BR}} \left(\mathbf{I}+\beta^2\mu^{-1}\mathbf{H}_{\mathrm{BR}}^{\mathrm{H}}\mathbf{H}_{\mathrm{BR}}\right)^{-1} \mathbf{H}_{\mathrm{BR}}^{\mathrm{H}}\mathbf{H}_{\mathrm{BR}}\mathbf{H}_{\mathrm{BR}}^{\mathrm{H}}
 \mathbf{\Theta}^{\mathrm{H}}\widehat{\mathbf{h}}_{\mathrm{RU}}^{\mathrm{LoS}} 
 - \epsilon_{\Delta \mathbf{h}_{\mathrm{RU}}}^2 \\
\overset{\mathrm{(A1)}}{=} & \ 
\bm{\theta}^{\mathrm{T}}\underbrace{\mathbf{\Gamma}(\mu)}_{\mathrm{Independent\ of}\ \mathbf{\Theta}}\bm{\theta}^{\mathrm{*}} - \epsilon_{\Delta \mathbf{h}_{\mathrm{RU}}}^2
.
\end{split}
\end{equation}
}
\end{scriptsize}
\hrulefill
\vspace*{-15pt}
\end{figure*}
where {\color{black}$\mathbf{\Gamma}(\mu)$ is expressed as
\begin{equation}\label{Gamma_mu}
\begin{split}
\mathbf{\Gamma}(\mu)=&
\mathrm{diag}\{(\widehat{\mathbf{h}}_{\mathrm{RU}}^{\mathrm{LoS}})^{\mathrm{H}}\} \mathbf{X}(\mu) \mathrm{diag}\{\widehat{\mathbf{h}}_{\mathrm{RU}}^{\mathrm{LoS}}\},
\end{split}
\end{equation}
with $\mathbf{X}(\mu)$ given by
\begin{equation}\label{mathbf_X}
\begin{split}
\mathbf{X}(\mu)=&\ \beta^2\mu^{-2}\mathbf{H}_{\mathrm{BR}}\mathbf{H}_{\mathrm{BR}}^{\mathrm{H}}\mathbf{H}_{\mathrm{BR}}\mathbf{H}_{\mathrm{BR}}^{\mathrm{H}} - 2\beta^4\mu^{-3}\mathbf{H}_{\mathrm{BR}}\mathbf{H}_{\mathrm{BR}}^{\mathrm{H}}\\
&\times \mathbf{H}_{\mathrm{BR}}\left(\mathbf{I}+\beta^2\mu^{-1}\mathbf{H}_{\mathrm{BR}}^{\mathrm{H}}\mathbf{H}_{\mathrm{BR}}\right)^{-1} \mathbf{H}_{\mathrm{BR}}^{\mathrm{H}} \mathbf{H}_{\mathrm{BR}} \mathbf{H}_{\mathrm{BR}}^{\mathrm{H}}\\
&+\beta^6 \mu^{-4}\mathbf{H}_{\mathrm{BR}}\mathbf{H}_{\mathrm{BR}}^{\mathrm{H}}\mathbf{H}_{\mathrm{BR}}\left(\mathbf{I}+\beta^2\mu^{-1}\mathbf{H}_{\mathrm{BR}}^{\mathrm{H}}\mathbf{H}_{\mathrm{BR}}\right)^{-1}\\
&\times \mathbf{H}_{\mathrm{BR}}^{\mathrm{H}}\mathbf{H}_{\mathrm{BR}}\left(\mathbf{I}+\beta^2\mu^{-1}\mathbf{H}_{\mathrm{BR}}^{\mathrm{H}}\mathbf{H}_{\mathrm{BR}}\right)^{-1}
 \mathbf{H}_{\mathrm{BR}}^{\mathrm{H}}\mathbf{H}_{\mathrm{BR}}\mathbf{H}_{\mathrm{BR}}^{\mathrm{H}}\\
 = &\  \beta^2 \mathbf{H}_{\mathrm{BR}}\mathbf{H}_{\mathrm{BR}}^{\mathrm{H}} (\mu \mathbf{I} + \beta^2 \mathbf{H}_{\mathrm{BR}}\mathbf{H}_{\mathrm{BR}}^{\mathrm{H}})^{-1}\\
 &\ \ \ \ \ \ \ \ \ \ \ \  \times (\mu \mathbf{I} + \beta^2 \mathbf{H}_{\mathrm{BR}}\mathbf{H}_{\mathrm{BR}}^{\mathrm{H}})^{-1} \mathbf{H}_{\mathrm{BR}}\mathbf{H}_{\mathrm{BR}}^{\mathrm{H}}.
\end{split}
\end{equation}

Combining the outcomes shown in (\ref{Objective_Fun_Simplify}) and (\ref{Constraint_Simplify}), 
we finally completes the proof of Proposition 1}.

{\color{black}
\section{Proof of Proposition 2}

Here, we prove that we can obtain $rank(\overline{\mathbf{C}})=1$ under the conditions of: \textit{1) $\mathbf{\Upsilon}(\mu)$ is rank-one, and each element in its eigenvector is non-zero; 2) $\overline{\mathbf{C}}$ satisfies $tr\left(\mathbf{\Gamma}(\mu)\overline{\mathbf{C}}\right)>\epsilon_{\Delta \mathbf{h}_{\mathrm{RU}}}^2$.}

First, the Lagrange function of problem (\ref{P6-SDR}) is
\begin{equation}
\begin{split}
\!\!\!\!\!&\mathcal{L}(\mathbf{C},\lambda_{\mathrm{b}},\lambda_{\mathrm{c},1},\cdots,\lambda_{\mathrm{c},N},\mathbf{M}) \\
=&
-tr\left(\mathbf{\Upsilon}(\mu)\mathbf{C}\right)
+ \lambda_{\mathrm{b}}\left[ \epsilon_{\Delta \mathbf{h}_{\mathrm{RU}}}^2 - tr\left(\mathbf{\Gamma}(\mu)\mathbf{C}\right) \right] \\
&+ \sum_{i=1}^N  \lambda_{\mathrm{c},i}\left\{ tr(\mathbf{E}_{i}\mathbf{C})- \beta^2 \right\} - tr(\mathbf{M}\mathbf{C}),
\end{split}
\end{equation}
where $\lambda_{\mathrm{b}}$, $\lambda_{\mathrm{c},1}$ to $\lambda_{\mathrm{c},N}$, and $\mathbf{M}$ are the Lagrange multipliers of the constraint (\ref{P6-SDR}b), constraint (\ref{P6-SDR}c), and constraint $\mathbf{C}\succeq\mathbf{0}$, respectively.

Then, based on the KKT conditions, the optimal solution $\overline{\mathbf{C}}$ should simultaneously satisfy
\begin{equation}\label{Rank-1-Lag}
\frac{\partial \mathcal{L}(\overline{\mathbf{C}},\lambda_{\mathrm{b}},\lambda_{\mathrm{c},1},\cdots,\lambda_{\mathrm{c},N},\mathbf{M})}{\partial \overline{\mathbf{C}}} = \mathbf{0},
\end{equation}
\begin{equation}
\lambda_{\mathrm{b}}\geq 0,  \mathbf{M}\succeq\mathbf{0},    \mathbf{M}\overline{\mathbf{C}} = \mathbf{0},
\end{equation}
\begin{equation}\label{Rank-1-Lag-2}
\lambda_{\mathrm{b}}\left[ \epsilon_{\Delta \mathbf{h}_{\mathrm{RU}}}^2 - tr\left(\mathbf{\Gamma}(\mu)\overline{\mathbf{C}}\right) \right] = 0.
\end{equation}

According to (\ref{Rank-1-Lag}), we have
\begin{equation}\label{M-Expre}
\begin{split}
- \mathbf{\Upsilon}(\mu)
- \lambda_{\mathrm{b}} \mathbf{\Gamma}(\mu)
+ \sum_{i=1}^N  \lambda_{\mathrm{c},i} \mathbf{E}_{i} 
= \mathbf{M}.
\end{split}
\end{equation}
If $tr\left(\mathbf{\Gamma}(\mu)\overline{\mathbf{C}}\right)>\epsilon_{\Delta \mathbf{h}_{\mathrm{RU}}}^2$, we obtain $\lambda_{\mathrm{b}}= 0$ according to the complementary slacking constraint (\ref{Rank-1-Lag-2}). Then, we can simplify (\ref{M-Expre}) into
\begin{equation}\label{M-Expre-2}
\begin{split}
- \mathbf{\Upsilon}(\mu)
+ \sum_{i=1}^N  \lambda_{\mathrm{c},i} \mathbf{E}_{i} 
= \mathbf{M}.
\end{split}
\end{equation}

Note that because $\mathbf{\Upsilon}(\mu)\succeq \mathbf{0}$ and $\mathbf{M}\succeq \mathbf{0}$, there should be $ \sum_{i=1}^N  \lambda_{\mathrm{c},i} \mathbf{E}_{i} \succeq \mathbf{0}$, hinting that $\lambda_{\mathrm{c},i}\geq 0$ for $i=1,2,\cdots,N$.
Then, we use $\mathbf{M}\overline{\mathbf{C}} = \mathbf{0}$ to transform (\ref{M-Expre-2}) into
\begin{equation}
\begin{split}
\left[- \mathbf{\Upsilon}(\mu)
+ \sum_{i=1}^N  \lambda_{\mathrm{c},i} \mathbf{E}_{i} \right] \overline{\mathbf{C}}
= \mathbf{0},
\end{split}
\end{equation}
yielding
\begin{equation}\label{MCzero}
\begin{split}
\left(\sum_{i=1}^N  \lambda_{\mathrm{c},i} \mathbf{E}_{i} \right)   \overline{\mathbf{C}}
= \mathbf{\Upsilon}(\mu) \overline{\mathbf{C}}.
\end{split}
\end{equation}

If $rank\left( \mathbf{\Upsilon}(\mu)\right)=1$ and each element in the eigenvector of $\mathbf{\Upsilon}(\mu)$ is non-zero, we can decompose $\mathbf{\Upsilon}(\mu)$ into $\mathbf{\Upsilon}(\mu)=\lambda_\mathbf{\Upsilon}\mathbf{u}\mathbf{u}^{\mathrm{H}}$, where $\lambda_\mathbf{\Upsilon}$ is the eigenvalue of $\mathbf{\Upsilon}(\mu)$, and $\mathbf{u}$ is the corresponding eigenvector with $[\mathbf{u}]_i \neq 0$ for $\forall i=1,2,\cdots,N$. Therefore, (\ref{MCzero}) can be recast as
\begin{equation}\label{MCzero-2}
\begin{split}
\left(\sum_{i=1}^N  \lambda_{\mathrm{c},i} \mathbf{E}_{i} \right)   \overline{\mathbf{C}}
= \lambda_\mathbf{\Upsilon}\mathbf{u}\mathbf{u}^{\mathrm{H}} \overline{\mathbf{C}}.
\end{split}
\end{equation}

From (\ref{MCzero-2}), one can prove that $\left(\sum_{i=1}^N  \lambda_{\mathrm{c},i} \mathbf{E}_{i} \right)$ is a full-rank diagonal matrix. This is because that if $\left(\sum_{i=1}^N  \lambda_{\mathrm{c},i} \mathbf{E}_{i} \right)$ is not full-rank, there will be at least one $\lambda_{\mathrm{c},i}=0$ such that $\left(\sum_{i=1}^N  \lambda_{\mathrm{c},i} \mathbf{E}_{i} \right)   \overline{\mathbf{C}}$ has at least one row of zeros. Then, $\lambda_\mathbf{\Upsilon}\mathbf{u}\mathbf{u}^{\mathrm{H}} \overline{\mathbf{C}}$ will have at least one row of zeros as well. However, since $\left(\mathbf{u}^{\mathrm{H}} \overline{\mathbf{C}}\right)$ cannot be an all-zero vector, there should exist at least one $ [\mathbf{u}]_i = 0$ generating a row of zeros in $\mathbf{u}(\mathbf{u}^{\mathrm{H}} \overline{\mathbf{C}})$, which contradicts the prerequisite of $[\mathbf{u}]_i \neq 0$ for $\forall i=1,2,\cdots,N$. Therefore, $\left(\sum_{i=1}^N  \lambda_{\mathrm{c},i} \mathbf{E}_{i} \right)$ should be full-rank.

Finally, according to the manipulation rules of rank, we have
\begin{equation}\nonumber
\begin{split}
rank\left(\overline{\mathbf{C}}\right)\!\!&=rank\left( \left(\sum_{i=1}^N  \lambda_{\mathrm{c},i} \mathbf{E}_{i} \right)   \overline{\mathbf{C}}\right)
= rank\left( \lambda_\mathbf{\Upsilon}\mathbf{u}\mathbf{u}^{\mathrm{H}} \overline{\mathbf{C}}\right)\\
& \leq \min\left\{rank\left( \lambda_\mathbf{\Upsilon}\mathbf{u}\mathbf{u}^{\mathrm{H}} \right)
,rank\left(\overline{\mathbf{C}}\right)\right\}=1.
\end{split}
\end{equation}
Since $rank\left(\overline{\mathbf{C}}\right)>0$ must hold, we have $rank\left(\overline{\mathbf{C}}\right)= 1$, which completes the proof of Proposition 2.

\section{Proof of Proposition 3}
We begin this proof by calculating the derivatives of $\bm{\theta}^{\mathrm{T}} \mathbf{\Upsilon}(\mu) \bm{\theta}^*$ and $\bm{\theta}^{\mathrm{T}} \mathbf{\Gamma}(\mu) \bm{\theta}^*$ with respect to $\mu$, i.e. calculating
\begin{equation}
D_{obj}(\mu)=\frac{\partial (\bm{\theta}^{\mathrm{T}} \mathbf{\Upsilon}(\mu) \bm{\theta}^*)}{\partial \mu} =\bm{\theta}^{\mathrm{T}} \frac{\partial  \mathbf{\Upsilon}(\mu) }{\partial \mu} \bm{\theta}^*,
\end{equation}
\begin{equation}
D_{cons}(\mu)=\frac{\partial (\bm{\theta}^{\mathrm{T}} \mathbf{\Gamma}(\mu) \bm{\theta}^*)}{\partial \mu} =\bm{\theta}^{\mathrm{T}} \frac{\partial  \mathbf{\Gamma}(\mu) }{\partial \mu} \bm{\theta}^*.
\end{equation}
To derive $\frac{\partial  \mathbf{\Upsilon}(\mu) }{\partial \mu}$, we first perform eigenvalue decomposition for $\mathbf{H}_{\mathrm{BR}}\mathbf{H}_{\mathrm{BR}}^{\mathrm{H}}$ and obtain $\mathbf{H}_{\mathrm{BR}}\mathbf{H}_{\mathrm{BR}}^{\mathrm{H}} = \mathbf{U}_1 \mathbf{\Sigma}_1 \mathbf{U}_1^{\mathrm{H}}$, where $\mathbf{U}_1$ is a unitary matrix and $\mathbf{\Sigma}_1$ is the diagonal eigenvalue matrix. $\mathbf{\Sigma}_1$ satisfies $\mathbf{\Sigma}_1\succeq \mathbf{0}$ since $\mathbf{H}_{\mathrm{BR}}\mathbf{H}_{\mathrm{BR}}^{\mathrm{H}}$ is positive semidefinite.

Then, based on (\ref{mathbf_Y}), (\ref{Z_mu}), and after a few manipulations, $\mathbf{\Upsilon}(\mu)$ can be recast as
\begin{equation}
\mathbf{\Upsilon}(\mu) = \mathrm{diag}\{(\widehat{\mathbf{h}}_{\mathrm{RU}}^{\mathrm{LoS}})^{\mathrm{H}}\}      \mathbf{U}_1 \mathbf{\Sigma}'_1(\mu) \mathbf{U}_1^{\mathrm{H}}    \mathrm{diag}\{\widehat{\mathbf{h}}_{\mathrm{RU}}^{\mathrm{LoS}}\},
\end{equation}
where $\mathbf{\Sigma}'_1(\mu)$ with respect to $\mu$ is given by
\begin{equation}\label{Sigma'}
\mathbf{\Sigma}'_1(\mu) \!\!=\!\! \left[  \mathbf{I} -  \mathbf{\Sigma}_1 (\beta^{-2} \mu \mathbf{I} + \mathbf{\Sigma}_1)^{-1}  \right] \mathbf{\Sigma}_1 \left[  \mathbf{I} -   (\beta^{-2} \mu \mathbf{I} + \mathbf{\Sigma}_1)^{-1} \mathbf{\Sigma}_1  \right].
\end{equation}

It is indicated in (\ref{Sigma'}) that $\mathbf{\Sigma}'_1(\mu)$ is a diagonal matrix, whose $(i,i)$-th diagonal element is given by
\begin{equation}
\left[\mathbf{\Sigma}'_1(\mu)\right]_{(i,i)} = \left(1-\frac{\left[\mathbf{\Sigma}_1\right]_{(i,i)}}{\beta^{-2} \mu + \left[\mathbf{\Sigma}_1\right]_{(i,i)}}\right)^2  \left[\mathbf{\Sigma}_1\right]_{(i,i)}.
\end{equation}
We calculate the derivative of each $\left[\mathbf{\Sigma}'_1(\mu)\right]_{(i,i)}$ with respect to $\mu$, and obtain
\begin{equation}\nonumber
\begin{split}
\frac{\partial \left[\mathbf{\Sigma}'_1(\mu)\right]_{(i,i)}}{\partial \mu} = &
\{2(\beta^{-8}-\beta^{-6})\mu^3 \left[\mathbf{\Sigma}_1\right]_{(i,i)}\\
&+ (4\beta^{-6}-2\beta^{-4})\mu^2 \left[\mathbf{\Sigma}_1\right]_{(i,i)}^2\\
& + 2 \beta^{-4}\mu \left[\mathbf{\Sigma}_1\right]_{(i,i)}^3\} \times \{\beta^{-2}\mu + \left[\mathbf{\Sigma}_1\right]_{(i,i)}\}^{-4}.
\end{split}
\end{equation}

Because $\left[\mathbf{\Sigma}_1\right]_{(i,i)}\geq 0$ and $0< \beta\leq 1$, we have $\frac{\partial \left[\mathbf{\Sigma}'_1(\mu)\right]_{(i,i)}}{\partial \mu} \geq 0$, resulting in $\frac{\partial \mathbf{\Sigma}'_1(\mu)}{\partial \mu} \succeq \mathbf{0}$. This implies that
\begin{equation}\nonumber
\frac{\partial \mathbf{\Upsilon}(\mu)}{\partial \mu} = \mathrm{diag}\{(\widehat{\mathbf{h}}_{\mathrm{RU}}^{\mathrm{LoS}})^{\mathrm{H}}\}      \mathbf{U}_1 \frac{\partial \mathbf{\Sigma}'_1(\mu)}{\partial \mu} \mathbf{U}_1^{\mathrm{H}}    \mathrm{diag}\{\widehat{\mathbf{h}}_{\mathrm{RU}}^{\mathrm{LoS}}\}  \succeq \mathbf{0}.
\end{equation}

As a result, we obtain that
\begin{equation}\label{D_obj_positive}
D_{obj}(\mu) =\bm{\theta}^{\mathrm{T}} \frac{\partial  \mathbf{\Upsilon}(\mu) }{\partial \mu} \bm{\theta}^* \geq 0
\end{equation}
strictly holds for $\forall \bm{\theta}$.

Based on (\ref{Gamma_mu}) and (\ref{mathbf_X}), we perform the same derivation for $\frac{\partial  \mathbf{\Gamma}(\mu) }{\partial \mu}$ and obtain 
\begin{equation}\label{D_cons_negative}
D_{cons}(\mu) =\bm{\theta}^{\mathrm{T}} \frac{\partial  \mathbf{\Gamma}(\mu) }{\partial \mu} \bm{\theta}^* \leq 0.
\end{equation}

Finally, combining the results in (\ref{D_obj_positive}) and (\ref{D_cons_negative}), we complete the proof of Proposition 3.

}


\end{document}